\newif{\ifdraft}\drafttrue
\DeclareMathOperator{\End}{End}
\definecolor{darkgreen}{rgb}{0.0,0.7,0.0}
\newenvironment{LC}{\noindent\color{darkgreen} LC:}{}
\newenvironment{vd}{\noindent\color{blue} \colorbox{blue}{\color{black} VD:}}{}
\newenvironment{ME}{\noindent\color{magenta} ME:} {}
\newcommand{\prref}[1]{\prettyref{#1}}
\theoremstyle{plain}
\newcommand{\ie}{i.e.,\xspace}
\newcommand{\eg}{e.g.\xspace}
\newcommand{\subst}{substitution\xspace}
\newcommand{\Dio}{Diophantine\xspace}
\newcommand{\solu}{solution\xspace}
\newcommand{\invol}{involution\xspace}
\newcommand{\ql}{quasi-linear\xspace}
\newcommand{\qls}{quasi-linear space\xspace}
\newcommand{\IFF}{if and only if\xspace}
\renewcommand{\hom}{homomorphism\xspace}
\newcommand{\homs}{homomorphisms\xspace}
\newcommand{\Endo}{endomorphism\xspace}
\newcommand{\Endos}{endomorphisms\xspace}
\newcommand{\morph}{morphism\xspace}
\newcommand{\lds}{, \ldots ,}
\newcommand{\ra}{\longrightarrow}
\newcommand{\df}[2][]{\textbf{\color{black}\sffamily #2}}
\newcommand{\Apos}{A_{+}}
\newcommand{\Aneg}{A_{-}}
\newcommand{\Apone}{A_{\pm}}
\newcommand{\Bpos}{B_{+}}
\newcommand{\Bneg}{B_{-}}
\newcommand{\AF}{A_{\F}}
\newcommand{\FG}[1]{\text{F}({#1})}
\newcommand{\ninit}{n_{\mathrm{init}}}
\newcommand{\muinit}{\mu_{\mathrm{init}}}
\newcommand{\Winit}{W_{\mathrm{init}}}
\newcommand{\arc}[1]{\overset{#1}\ra}
\newcommand{\set}[2]{\left\{#1\mathrel{\left|\vphantom{#1}\vphantom{#2}\right.}#2\right\}}
\newcommand{\oneset}[1]{\left\{\mathinner{#1}\right\}}
\newcommand{\os}{\oneset}
\newcommand{\sm}{\setminus}
\newcommand{\es}{\emptyset}
\newcommand{\sse}{\subseteq}
\newcommand{\vdmatrix}[4]{\left(\begin{smallmatrix}#1 & #2\\ #3 & #4\end{smallmatrix}\right)}
\newcommand{\abs}[1]{\left|\mathinner{#1}\right|}
\newcommand{\Abs}[1]{\left\Vert\mathinner{#1}\right\Vert}
\newcommand{\N}{\ensuremath{\mathbb{N}}}
\newcommand{\Z}{\ensuremath{\mathbb{Z}}}
\newcommand{\R}{\ensuremath{\mathbb{R}}}
\newcommand{\B}{\ensuremath{\mathbb{B}}}
\newcommand{\F}{\ensuremath{\mathbb{F}}}
\newcommand{\NP}{\ensuremath{\mathsf{NP}}}
\newcommand{\NSPACE}{\ensuremath{\mathsf{NSPACE}}}
\newcommand{\Bn}{\B^{n\times n}}
\renewcommand{\phi}{\varphi}
\newcommand{\eps}{\varepsilon}
\newcommand{\alp}{\alpha}
\newcommand{\bet}{\beta}
\newcommand{\del}{\delta}
\newcommand{\lam}{\lambda}
\newcommand{\sig}{\sigma}
\newcommand{\Sig}{\Sigma}
\newcommand{\Gam}{\GG}
\newcommand\GG{\Gamma}
\newcommand\Lam{\Lambda}
\newcommand\OO{\Omega}
\newcommand\SL{\mathop\mathrm{SL}}
\newcommand\PSL{\mathop\mathrm{PSL}}
\newcommand{\Oh}{\mathcal{O}}
\newcommand{\wt}[1]{\widetilde{ #1 }}
\newcommand{\id}[1]{\mathrm{id}_{#1}}
\newcommand{\cA}{\mathcal{A}}
\newcommand{\cG}{\mathcal{G}}
\newcommand{\cL}{\mathcal{L}}
\newcommand{\cX}{\mathcal{X}}
\newcommand{\cR}{\mathcal{R}}
\newcommand{\cSol}{\mathrm{Sol}}
\newcommand{\ov}[1]{\overline{#1}}
\newcommand{\oi}[1]{{#1}^{-1}}
\newcommand{\mathinvol}{\overline{\,^{\,^{\,}}}}
\newcommand{\Prob}[1]{\mathrm{Pr}\left[\, #1 \,\right]}
\newcommand{\Rat}{\mathrm{RAT}}
\newcommand{\bcc}{29n} 
\newcommand{\pcc}{31n} 
\newif{\ifsecappendix}\secappendixtrue
\newif{\ifsecquestion}\secquestiontrue
\newif{\ifsecoldstuff}\secoldstufftrue
\newif{\ifAlles}\Allestrue
\begin{document}

\pagestyle{plain}

\title{Solution sets for
equations over free groups  are {EDT0L} languages --- ICALP 2015 version\thanks{Research supported by the
 Australian Research Council FT110100178 and the University of Newcastle G1301377.
The first author was supported by a Swiss National Science Foundation Professorship FN PP00P2-144681/1. The first and third authors were supported by a University of Neuch\^atel Overhead grant in 2013.}}
\author{%
Laura Ciobanu\inst{1},
  Volker Diekert\inst{2}, \and
  Murray Elder\inst{3} 
   }
\authorrunning{L.{} Ciobanu\\
  V.{} Diekert\inst{2} \and
  M.{} Elder}

\institute{%
Institut de math\'ematiques, Universit{\'e} de Neuch{\^a}tel, Switzerland \and 
  Institut f\"ur Formale Methoden der Informatik,
  Universit\"at Stuttgart, Germany \and 
School of Mathematical \& Physical Sciences, The University of Newcastle,
Australia}

\maketitle


\begin{abstract}We show that, given a word equation over a finitely generated free group, the set of all solutions in reduced words forms an EDT0L language. In particular, it is an  indexed language in the sense of Aho. The question of whether a description of solution sets in reduced words as an indexed language is possible has been been open for some years \cite{GilPC12,JainMS2012Lics}, apparently without much hope that 
 a positive answer could hold. Nevertheless, our  answer goes far beyond: they are EDT0L, which is a proper subclass of indexed languages. We can additionally handle the existential theory of equations with rational constraints in free products $\star_{1 \leq i \leq s}F_i$, where each $F_i$ is either a free or finite group, or a free monoid with involution.
In all cases the result is the same: the set of all solutions in reduced words is EDT0L.
This was known only for quadratic word equations by \cite{FerteMarinSenizerguesTocs14}, which is a very restricted case. Our general result became possible due to the recent recompression technique of Je\.z. In this paper we 
use a new method to integrate
solutions of linear \Dio equations into the process and obtain more general results than in the related paper \cite{DiekertJP2014csr}.
For example, we improve the complexity from quadratic nondeterministic space in \cite{DiekertJP2014csr} to quasi-linear nondeterministic space here. 
This implies an improved complexity for deciding the existential theory of non-abelian free groups: $\NSPACE(n\log n$). The conjectured complexity is \NP, however, we believe that our results are optimal 
with respect to 
 space complexity, independent of the conjectured \NP. 
 \end{abstract}

\vspace{-0.8cm}
\subsection*{Introduction and main results}\label{sec:intro}
The first algorithmic description of all solutions to a given equation over a free group is due to Razborov \cite{raz87,raz93}. 
His description became known as a \emph{Makanin-Razborov diagram}. This concept plays a major role in the positive solution of Tarski's conjectures about the elementary theory in free groups 
\cite{KMIV06,sela13}. 

It was however unknown that there is an amazingly simple formal language 
description for the set of all solutions of an equation over free groups 
in reduced words: they are EDT0L. An EDT0L language $L$ is given by a nondeterministic finite automaton (NFA), where transitions are labeled by \Endos in a free monoid 
 which contains a symbol $\#$. 
Such an NFA defines a rational language $\cR$ of \Endos, and 
the condition on $L$ is that $L=\set{h(\#)}{h\in \cR}$. 
The NFA we need for our result 
can be computed effectively in nondeterministic quasi-linear space, \ie by some $\NSPACE(n\log n$) algorithm. As a consequence, the automaton has singly exponential size $2^{\Oh(n\log n)}$ in the input size $n$. 

A description of solution sets as EDT0L languages
was known before only for quadratic word equations by \cite{FerteMarinSenizerguesTocs14}; the recent paper 
\cite{DiekertJP2014csr} did not aim at giving such a structural result. 
There is also a description of all solutions for a word equation by Plandowski in~\cite{Plandowski06stoc}.
His description is given by some graph which can computed in singly exponential time, but without the aim to give any formal language characterization. Plandowski claimed in~\cite{Plandowski06stoc} that his method applies also to free groups with rational constraints, but he found a gap~\cite{Pl6}.

The technical results are as follows.  
Let $\FG \Apos$ be the free group over a finite generating set $\Apos$
of (positive) letters. We let 
$\Apone = \Apos \cup \set{\oi a}{a \in \Apos}\sse \FG \Apos$. We view $\Apone$ as a finite alphabet (of {\em constants}) with the involution 
$\ov a = \oi a$. The involution is extended to the free monoid 
$\Apone^*$  by $\ov{a_{1}\cdots a_{k}} = \ov{a_{k}}\cdots \ov{a_{1}}$. We let $\pi:\Apone^* \to \FG \Apos$ be the canonical \morph. 
As a set, we identify $\FG \Apos$ with the rational (\ie regular) subset of reduced words inside $\Apone^*$. A word  is  {\em reduced} if it does not contain any factor $a \ov a$ where $a \in \Apone$. 
Thus, $w\in \Apone^*$ is reduced \IFF $\pi(w) = w$. We emphasize that $\FG \Apos$ is realized as a subset of $\Apone^*$. Let $\OO$ be a set of {\em variables} with involution. 
An {\em equation} over $\FG \Apos$ is given as a pair 
$(U,V)$, where $U, V\in (\Apone\cup \OO)^*$ are words over constants and variables. A {\em solution} of $(U,V)$ is a mapping
$\sig: \OO \to \Apone^*$ which respects the involution such that
$\pi \sig(U) = \pi \sig(V)$ holds
in $\FG \Apos$.
As usual, $\sig$ is extended to a \morph $\sig: (\Apone\cup \OO)^* \to \Apone^*$
by leaving constants invariant. 
Throughout we let $\#$ denote a special symbol, whose main purpose is to 
encode a tuple of words $(w_{1}, \ldots, w_{k})$ as a single word 
$w_{1}\# \cdots\# w_{k}$.

\begin{theorem}\label{thm:freecentral}
Let  $(U,V)$ be an equation over $\FG \Apos$ and  
$\os{X_1,\dots, X_k}$  be any specified subset of variables.
Then the solution set  
$\cSol(U,V)$ is EDT0L where 
$\cSol(U,V)= \set{\sig(X_1)\# \cdots \#\sig(X_k)}{\sig \text { solves $(U,V)$ in reduced words}}$. 

Moreover, there is a nondeterministic  algorithm which takes $(U,V)$ as input and computes an NFA $\cA$ 
 such that $\cSol(U,V)= \set{\phi(\#)}{\phi \in L(\cA)}$ 
 in quasi-linear space. 
 \end{theorem}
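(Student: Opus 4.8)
\medskip
\noindent\textbf{Proof strategy.}
The plan is to follow Je\.z's recompression method but to run it ``backwards'', so that the sequence of decompression \Endos read along a computation of $\cA$, applied to $\#$, spells out a \solu in reduced words. First I would eliminate the free group. Since $\FG \Apos$ is realized as the set of reduced words in $\Apone^*$ and every \solu is required to be reduced, the group condition $\pi\sig(U) = \pi\sig(V)$ can be turned into a genuine word equation over the free monoid with \invol $\Apone^*$: guess the cancellation pattern between the two sides, split the variables accordingly into $\Oh(n)$ new ones, and add regular constraints forcing each new variable to take a reduced value with prescribed first and last letter. This changes the input size by at most a constant factor. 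I would also adjoin the word $Z = X_1 \# \cdots \# X_k$ and treat $\#$ as an inert constant, excluded from every compression step, so that the evolution of $\sig(Z)$ records exactly the output tuple.

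On the resulting system I would run the two compression operations of Je\.z: \emph{block compression}, which replaces a maximal factor $a^\ell$ of the current \solu by a fresh letter, and \emph{pair compression}, which fixes a partition of $\Apone$ into two classes and replaces each factor $ab$ with $a$, $b$ in the two classes by a fresh letter. When a factor to be compressed straddles a variable, letters are first popped out of that variable; this, the choice of partition, and the choice of which crossing blocks to pop provide the nondeterminism. The invariant from the recompression analysis that I would invoke is that these choices can always be made so that the current equation, its constraint automaton and the auxiliary bookkeeping stay of size $\Oh(n)$ throughout; hence the set of reachable configurations is finite, of cardinality $2^{\Oh(n\log n)}$, and the original system can always be driven to a trivial configuration with an explicitly known \solu. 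Each single compression step is reversible: the fresh letter decompresses to the factor it replaced, and this defines an \Endo of the (always $\Oh(n)$-generated) free monoid with \invol, fixing $\#$.

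Now let the states of $\cA$ be the reachable configurations; each has an $\Oh(n\log n)$-bit description, so there are $2^{\Oh(n\log n)}$ of them. Put an edge $W_1 \to W_2$ labeled by the decompression \morph $h$ whenever one recompression step turns $W_2$ into $W_1$; make the trivial configurations initial, with the first edge out of each one labeled by the \morph sending $\#$ to the compressed form of $\sig(Z)$ for its explicit \solu, and make the configuration coming from the original system the unique accepting state. Composing the labels along an accepting path yields a \morph $\phi$ with $\phi(\#) = \sig(X_1)\# \cdots \#\sig(X_k)$ for a genuine \solu $\sig$ (soundness, because each step preserves the \solu relation); conversely, any \solu $\sig$ can, by completeness of recompression, be compressed to an explicit base \solu by a suitable sequence of choices, giving an accepting path with $\phi(\#) = \sig(X_1)\# \cdots \#\sig(X_k)$. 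Hence $\cSol(U,V) = \set{\phi(\#)}{\phi \in L(\cA)}$, which is \edt by definition, and a nondeterministic algorithm guesses such a path edge by edge, storing only the current $\Oh(n\log n)$-bit configuration, so it runs in $\NSPACE(n\log n)$.

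The delicate point, and the place where I expect the real work and the novelty over \cite{DiekertJP2014csr} to lie, is block compression with \emph{unbounded} exponents. A block $a^\ell$ living inside a variable has a length that is part of the \solu, and distinct occurrences can be linked, so the admissible exponent vectors over all solutions consistent with a given phase structure form the \solu set of a \lDs whose coefficients are read off the current equation. To keep the language of \Endos rational one must realize this \solu set by a sub-automaton of \Endos --- base solutions as initial edges labeled $c_j \mapsto a^{b_j}$, period vectors as loops labeled $c_j \mapsto c_j a^{p_j}$ --- and splice it into $\cA$ in place of the offending block step. Because the coefficients of the \lDs may be exponentially large, one cannot materialize a Hilbert basis; the sub-automaton must instead be explored by a nondeterministic $\Oh(n\log n)$-space routine producing solutions of the \lDs on the fly. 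Doing this while preserving the size invariants of the main recompression loop is the technical heart of the argument.
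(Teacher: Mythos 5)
Your overall architecture matches the paper's: reduce to a word equation over a free monoid with involution by splitting off the cancellation pattern, protect $X_1\#\cdots\#X_k$ inside the equation, take the reachable configurations of Je\.z's recompression as the states of $\cA$, label each arc by the decompression \Endo, and read a \solu off an accepting path as a composition of these \Endos applied to a base configuration. The soundness/completeness argument you sketch is exactly the paper's ``forward property'' of arcs combined with its Proposition~1, and the space bound follows the same counting.

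The genuine gap is precisely where you locate ``the technical heart'': block compression of unbounded, linked exponents. You propose to extract a \lDs from the phase structure, realize its \solu set as a sub-automaton with base solutions on initial edges ($c_j\mapsto a^{b_j}$) and period vectors on loops ($c_j\mapsto c_j a^{p_j}$), and then assert that this sub-automaton ``must be explored by a nondeterministic $\Oh(n\log n)$-space routine producing solutions of the \lDs on the fly.'' That sentence is the missing proof, not a proof. To make the labels of your loops well defined you need the period vectors of a semilinear decomposition of the \solu set, \ie essentially a Hilbert basis; its elements can have entries of bit-size $\Theta(n)$ over $\Theta(n)$ unknowns and there can be exponentially many of them, so neither storing one period vector nor certifying that a guessed vector is a valid period fits in $\Oh(n\log n)$ space. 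Guessing a single solution digit-by-digit (as in \cite{DiekertJP2014csr}, which for this reason only achieves quadratic space) does not produce a \emph{rational language of \Endos} whose image is exactly the exponent set. The paper's proof is built to avoid this problem entirely: it introduces, per letter $b$, a renamed copy $c$ and fresh letters $c_\lam$ that \emph{partially commute} with $c$, marks one position of each maximal block with $c_\lam$, and then consumes the blocks by iterated halving arcs $h(c)=c^2$, with parity repaired by arcs $h(c_\lam)=cc_\lam$. Every exponent is thus absorbed by an ordinary path in $\cG$ (of length logarithmic in the block length), no \lDs is ever formed, and the quasi-linear space bound and the EDT0L structure follow from the same uniform arc mechanism used everywhere else. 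Without either this device or a worked-out substitute for your Diophantine sub-automaton, your argument does not close; as written it establishes at best the weaker statement already known from \cite{DiekertJP2014csr}.
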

 The statement of \prref{thm:freecentral} shifts the perspective on how to solve equations. 
 Instead of solving an equation, we focus on  an effective construction of some NFA
 producing the EDT0L set. Once the NFA is constructed, the existence of a solution, or whether the number of solutions is zero, finite or infinite, 
become graph properties of the NFA.

\prref{thm:freecentral}
is a special case of a more general result
involving the existential theory with rational constraints over free products. 
The generalization is done in several directions. 
First, we can replace $\FG \Apos$ by any finitely generated free product $\F=\star_{1 \leq i \leq s}F_i$ where each $F_i$ is either a free or finite group, or a free monoid with arbitrary involutions (including the identity).  %
Thus, for example we may have $\F= \os{a,b}^* \star \Z \star\PSL(2,\Z)
= \os{a,b}^* \star \Z \star (\Z/3\Z)\star (\Z/2\Z)$
where $\ov a = a$ and $\ov b = b$.  Second,  we allow arbitrary rational constraints. We consider Boolean 
formulae $\Phi$, where each atomic formula is either an equation or a rational constraint, written as
$X \in L$, where $L\sse \F$ is a rational subset.

Allowing rational constraints makes it necessary to specify how the input for a constraint is given. 
We do so algebraically, by using a \morph $\rho: A^* \to N$, where $N$ is a finite monoid with involution
and $A = \oi A\sse \F$ generates $\F$. 
Thus, we write a constraint  in the form $X \in \oi{\rho}(m)$, with $m \in N$,
and the interpretation $\rho \sig(X) = m$.  The input size $\Abs \Phi$ is given by the sum of the lengths 
of all atomic formulae, together with $(\abs  A +  \abs \OO)(1 +\log \abs N)$. 
The specification that the solution is in reduced words increases the input size by at most a factor of $\Oh(\log \abs A)$.

\begin{theorem}\label{thm:procentral}
Let $\F$ be a free product as above, $\Phi$ a Boolean formula over equations and rational constraints, and
$
\os{X_1,\dots, X_k}$ any subset of variables.
Then $\cSol(\Phi)= \set{\sig(X_1)\# \cdots \#\sig(X_k)}{\sig \ \text {solves $\Phi$ in reduced words}}$
 is  EDT0L. 
 
 Moreover, there is an algorithm which takes 
 $\Phi$ as input and produces an NFA $\cA$ 
 such that $\cSol(\Phi)= \set{\phi(\#)}{\phi \in L(\cA)}$. 
The algorithm is nondeterministic and  uses  quasi-linear 
space in the input size   $\Abs \Phi$.
\end{theorem}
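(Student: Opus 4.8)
The plan is to deduce \prref{thm:procentral} from Je\.z's recompression method, in two phases: a chain of algebraic reductions that replaces a Boolean formula over a free product by a single equation with rational constraints over a free monoid with involution, and then a nondeterministic, space-bounded recompression procedure whose runs are packaged into the required NFA.

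\emph{Reductions.} A reduced element of $\F=\star_{1\le i\le s}F_i$ is a sequence of non-trivial syllables from the factors with consecutive syllables in distinct factors, and this set of sequences is a rational subset of $B^{*}$ for a suitable finite alphabet $B=\ov B$ with involution; the constraint \morph $\rho\colon A^{*}\to N$ is replaced by a \morph into a still-finite monoid with involution that additionally records the factor of the last syllable, so that rational constraints over $\F$, the reduced-word condition on variable values, and the ``no equal adjacent factors'' condition all become rational constraints over $B^{*}$. This blow-up is polynomial, and with succinct encodings it fits the quasi-linear budget. Next, rather than forming the disjunctive normal form of $\Phi$ (which could be exponentially large), the algorithm nondeterministically guesses a truth value for every atomic subformula, verifies in logarithmic space that the guess satisfies the Boolean structure of $\Phi$, and keeps the resulting conjunction of literals: a negated rational constraint $X\notin L$ is the rational constraint $X\in\F\sm L$; a disequation $U\ne V$ is replaced by a fresh variable $Z$, the equation $ZV=U$, and the (co-finite, hence rational) constraint that $\sig(Z)$ be non-trivial; and a finite conjunction of equations over a free monoid with involution is equivalent to a single equation by the standard amalgamation trick. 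Hence every branch reduces to a single equation $W=W'$ over $B^{*}$ together with a constraint $\rho\sig(X)=m_X$ for each variable, and it suffices to treat this case and take the automaton union over branches.

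\emph{Recompression and the NFA.} I would track the current partial solution tuple, which is always a word over constants, with seed $\#$ and a fixed initial \homo $\phi_0$ sending $\#\mapsto\#^{\,k-1}$, which over the course of the run becomes $\#^{\,k-1}\mapsto\sig(X_1)\#\cdots\#\sig(X_k)$. I would then run recompression, transforming $W=W'$ toward a trivial equation by nondeterministically interleaving: ``pop'' steps (guessing, and popping out of the variables, the first and last letters of their values, so that compressions act across variable boundaries); block compressions (replacing maximal powers $a^{e}$ by fresh letters); and pair compressions (replacing factors $ab$ by fresh letters); updating the constraint \morph at each step so that new letters carry values consistent with what they encode; and recording at each step a \homo on the current constant alphabet that undoes that step on the tracked word. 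Accept once both sides of the equation coincide as words over constants. The recompression invariant is that the extended equations before and after a step have solution sets related by the recorded \homo, and, crucially for completeness, that every solution of the original equation guides the nondeterministic choices to an accepting run; the composition $h$ of $\phi_0$ with the recorded \homos along an accepting run satisfies $h(\#)=\sig(X_1)\#\cdots\#\sig(X_k)$ for the reconstructed solution $\sig$, and conversely every solution arises this way. The quantitative statement I would isolate as the main lemma is that this whole procedure runs in nondeterministic space $\Oh(\Abs{\Phi}\log\Abs{\Phi})$; hence every reachable configuration (current equation, constraint \morph, tracked data) has bit-size $\Oh(\Abs{\Phi}\log\Abs{\Phi})$, and there are at most $2^{\Oh(\Abs{\Phi}\log\Abs{\Phi})}$ of them. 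Taking configurations as states of an NFA $\cA$ and labeling each transition by the \homo it records (finitely many possibilities per step), we get $\cSol(\Phi)=\set{h(\#)}{h\in L(\cA)}$ by soundness and completeness. Since $L(\cA)$ is a rational language of \homos, $\cSol(\Phi)$ is \edt, and the space bound yields the announced $\NSPACE(n\log n)$ complexity.

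\emph{Main obstacle.} The algebraic reductions and the automaton bookkeeping are routine; the real difficulty is the quasi-linear space bound, and inside it, block compression. The exponent $e$ in a block compression $a^{e}\mapsto c$ is constrained only by a linear Diophantine system read off from the current equation, so recording $e$ explicitly costs space polynomial in $n$ per block, which accumulates over the linearly many phases; this is essentially why \cite{DiekertJP2014csr} reached only quadratic space. The new ingredient is to treat these exponents symbolically: represent the relevant solutions of the linear Diophantine system in $\Oh(\log n)$ space and feed them directly into the \homos labeling $\cA$, so that a single phase enlarges the stored data by only $\Oh(\log n)$ bits. Checking that this symbolic handling of Diophantine solutions is compatible with the recompression invariant (both soundness and completeness) and keeps the alphabet of \homos finite is the technical crux, and is where the argument departs from \cite{DiekertJP2014csr}.
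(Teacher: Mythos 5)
Your high-level architecture (reduce to a single equation with rational constraints over a free monoid with involution, then run recompression and package configurations into an NFA labeled by endomorphisms) matches the paper's, but there are three concrete gaps, two of which concern precisely the points where \prref{thm:procentral} is harder than \prref{thm:freecentral}. First, your encoding of a disequation $U\ne V$ as $ZV=U$ with $\sig(Z)\ne 1$ is only valid when the ambient structure is a group; here $\F$ may contain free monoid factors, where $U\ne V$ does not imply the existence of such a $Z$. The paper instead uses a prefix-comparison formula (valid because $\F$ is infinite): $x\ne y$ iff there exist $P$ and distinct letters $b\ne c$ with $xa=PbQ$ and $ya=PcR$. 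Second, and more seriously, you never address involution-compatibility of the compression steps. Every compression $ab\mapsto c_{ab}$ must be accompanied by $\ov b\,\ov a\mapsto \ov{c_{ab}}$, and if $b=\ov a$ this forces $c_{ab}$ to be self-involuting. In a free group with the reducedness constraint this situation cannot arise, but in a free product with free monoid factors (or factors with $a=\ov a$) reduced words genuinely contain factors $a\ov a$ and powers $(a\ov a)^\ell$. The paper devotes an entire subroutine (``non-standard block compression'') to compressing maximal blocks $(a\ov a)^\ell$ down to self-involuting words of length $2$ rather than to single letters, and explicitly forbids compressing pairs $a\ov a$ during pair compression; without some such mechanism your procedure either creates self-involuting letters (breaking the invariant that $\#$ is the only one) or fails to shrink these blocks at all, and the length analysis collapses.

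Third, you correctly identify the space bottleneck --- the exponents arising in block compression and the linear Diophantine systems of \cite{DiekertJP2014csr} --- but your proposed fix (``represent the relevant solutions of the linear Diophantine system in $\Oh(\log n)$ space and feed them into the \homos'') is not a construction: a block exponent can be as large as $2^{\Theta(n)}$, so it needs $\Theta(n)$ bits, there are $\Theta(n)$ blocks, and the \homos labeling the NFA must be drawn from a set describable within the state, so an exponent cannot simply be ``fed into'' a label. The paper's actual new ingredient is different in kind: it introduces partial commutation (type relations), renames each block $b^\lambda$ to $c^\lambda$ with a commuting marker letter $c_\lambda$, and then compresses by iterated halving along the path ($h(c)=c^2$, with odd lengths absorbed via $h(c_\lambda)=cc_\lambda$). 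The exponents are never stored in any state; they are encoded only in the length of the path through the graph, and each vertex remains of size $\Oh(n\log n)$. This is the step your proposal leaves as an acknowledged but unresolved ``technical crux,'' and it is exactly the content that separates the quasi-linear bound from the quadratic bound of \cite{DiekertJP2014csr}.
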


The proof of \prref{thm:freecentral} is given in \prref{sec:pr}.
Part II of the paper is devoted to the proof of \prref{thm:procentral}
which is more technical and more difficult.
\section{Preliminaries}\label{sec:EDT0L}
We assume that the reader is familiar with big-$\Oh$ and big-$\Theta$ notation. A function $f:\N \to \R$ is called {\em quasi-linear} if we have 
$\abs{f(n)} \in \Oh(n \log n)$. 
For results and notation in complexity theory we refer to the textbook \cite{pap94}.
We also use standard notation from combinatorics on words and automata theory
according to \cite{eil74}.

\subsection{Words and involutions}\label{sec:rational}
If $A$ is a set then $A^*$ denotes the {\em free monoid over $A$}.
An element of $A$ is  called {\em letter} and an element of $A^*$ is  called {\em word}. The length of word $w$ is denoted by $\abs w$, and ${\abs w}_{a}$
counts how often a letter $a$ appears in $w$. 
 
If $M$ is any monoid and $u,v\in M$, then we write 
$u \leq v$ if $u$ is a {\em factor} of $v$, which means we can factorize  
$v= xuy$ for some $x,y \in M$. We denote the neutral element in $M$ by $1$. In particular, $1$ denotes also the empty word. 

An \emph{involution} of a set $A$ is a  mapping $x \mapsto \ov x$ such that 
$\overline{\overline{x}} = x$ for all $x\in A$. For example, the identity map is an \invol. 
A {\em \morph} between sets with involution is a mapping  respecting the involution. A {\em monoid with \invol} has to additionally satisfy $\overline{xy}=\overline{y}\,\overline{x}$. 
A {\em \morph} between monoids with \invol is a \hom $\phi: M \to N$ such that 
$\phi(\ov x) = \ov{\phi(x)}$. 
It is an {\em $S$-\morph} if $\phi(x) = x$ for all $x \in S \sse M$.
All groups are monoids with involution given by 
$\ov x = x^{-1}$, and all group-\homs are \morph{s}.
  Any involution on a set $A$ extends  
to 
$A^*$: for a word $w = a_1 \cdots a_m$ 
 we let  $\ov{w} = \ov{a_m} \cdots \ov{a_1}$. 
If $\ov a = a$ for all $a \in A$ then $\ov{w}$ is simply the word $w$ read from right-to-left. The monoid $A^*$ is called a {\em free monoid with involution}. 

\subsection{NFAs, rational and recognizable subsets in monoids}\label{sec:rational}
Let us recall the notion of recognizable and rational set 
and let us emphasize that this concerns incomparable families, in general.
See \cite{eil74} for more background. 
The term {\em regular} will be used only in the context of finitely generated free monoids. 

Let $M$ be any monoid. A subset $L \in M$ is called {\em recognizable} if there is 
a \hom $\psi: M \to N$ to some finite monoid $N$ such that 
$L = \oi\psi(\psi (L))$. We also say that $N$ or $\psi$ {\em recognizes} $L$.
Recognizability is a ``saturation property'': we have $w \in L$ \IFF
$\psi(w) \in \psi(L)$. 

The family of \emph{rational subsets} $\Rat(M)$ is defined inductively as follows.  All finite subsets of
$M$ are rational.  If $L,L' \subseteq M$ are rational,
then the union $L \cup L'$, the concatenation
$L \cdot L'$, and $L^+$ are rational. Define $L^0= \os 1$ and 
$L^{i+1} = L \cdot L^{i}$ for $i\in \N$. Then 
$L^+ = \bigcup\set{L^i}{i>0}$ denotes the subsemigroup of $M$ which is generated by the subset $L\sse M$. We let $L^*= L^+ \cup \os 1$; then 
$L^*$ is the corresponding submonoid. 

For a finitely generated free monoid $A^*$ the family  $\Rat(A^*)$ coincides with the family of recognizable subsets: this is the content of Kleene's classical theorem (see any standard textbook on formal languages, such as \cite{eil74}). In general, however, the two families are incomparable. For example, for a group $G$  the two families coincides \IFF $G$ is finite. 

By definition, if $h:M \to M'$ is a \hom, then 
$L\mapsto h(L)$ induces a mapping $\Rat(M) \to \Rat(M')$. The mapping 
is surjective \IFF the \hom $h$ is surjective. In the following 
let $M$ be finitely generated. Consider any surjective
\hom $\pi: \Gam^* \to M$  where $\Gam$ is finite. Then every $L\in \Rat(M)$ can be specified by some $K\in \Rat(\Gam^*)$ such that
$\pi(K) = L$. The family  
$\Rat(\Gam^*)$ coincides with the family of ``regular'' subsets of $\Gam^*$. Regular subsets of $\Gam^*$ are those which can be accepted by a non-deterministic finite automaton, i.e. an NFA. 
Again, we can define the notion of NFA for arbitrary monoids $M$: 
an NFA $\cA$ over $M$ is a tuple $\cA= (Q,M,\del,I,F)$ where
$Q$ is a set of {\em states}, $I \sse Q$ is the set of {\em initial} states, $F \sse Q$ is the set of {\em final} states,  and
$\del \sse Q\times M \times Q$ is a finite set of {\em transitions}. 
If $(p,m,q)$ is a transition then $m \in M$ is called its {\em label} and each path 
$$(p_0,m_1,p_1),(p_1,m_2,p_2)\lds (p_{k-1},m_k,p_k)$$ labels a monoid element $m_1 \cdots m_k\in M$. For states $p,q \in Q$ we  let 
$L(\cA,p,q) \sse M$ be the set of labels of paths from $p$ to $q$. 
Thus, the {\em accepted language} $L(\cA)$ of an NFA $\cA$ is
$$L(\cA) = \bigcup\set{L(\cA,p,q)}{p \in I \wedge q \in F}.$$

Moreover,  if a regular set $K\sse \Gam^*$ is accepted by some NFA $\cA$ with  state set $Q=\os{1 \lds n}$, then we can choose for a recognizing \hom $N$ the monoid of Boolean $n \times n$ matrices $\Bn$. 
Indeed, for each letter $a \in \Gam$ define a matrix $\rho(a) \in \Bn$ 
by 
\begin{equation}\label{eq:boolmat}
\begin{array}{llllll}
\rho(a)_{ij} &= \left\{\begin{array}{llllll} 
1 &  & \text{if $a \in L(\cA,i,j)$}\\
0 &&  \text{otherwise}
\end{array}\right.\end{array}
\end{equation}
Cleary, due to (\ref{eq:boolmat}) we have for all $w \in \Gam^*$ the equivalence
$w \in L(\cA) \sse \Gam^*\iff \rho(w) \in \rho(L(\cA)) \sse \Bn.$

We say that a finite monoid (with involution) $N$ has an {\em efficient} representation
if we can specify each element $m\in N$ by $\Oh(\log \abs N)$ bits and if all basic operations, like equality checking, (computing the \invol), and multiplication, are in space $\Oh(\log \abs N)$, too. For example, 
$\Bn$ has an efficient representation (if the \invol is the transposition). 

In the following, all finite monoids used to define rational constraints are assumed to have an efficient representation. 
Throughout we use the following fact: if 
$\psi_1: M \to N_1$ recognizes $L_1 \sse M$ and $\psi_2: M \to N_2$
recognizes $L_2 \sse M$, then $\psi_1 \times \psi_2: M \to N_1\times N_2$
recognizes every Boolean combination of $L_1$ and $L_2$. Thus, whenever we add a new constraint by another recognizing \hom we switch to a larger monoid. There is, however, no size explosion because we use finite monoids with an efficient representation and $\log \abs{N_1\times N_2} =\log \abs{N_1}+ \log \abs{N_2}$.

\subsection{EDT0L systems}\label{sec:edtol}
The notion of {\em EDT0L system} refers to {\em {\bf E}xtended, {\bf D}eterministic, {\bf T}able, 
{\bf 0} interaction, and {\bf L}indenmayer}. 
There is a vast literature on Lindenmayer systems, see \cite{RozS86}, with various  acronyms such as D0L, DT0L, ET0L, etc.  
The subclass EDT0L is equal to HDT0L (see \eg \cite[Thm.~2.6]{rs97vol1}), and has received particular attention. 
We use very little L-theory: essentially  we content ourselves
to define EDT0L through a characterization  (using  rational control)
due to Asveld \cite{Asveld1977}.
The class of EDT0L languages is a proper subclass of indexed languages 
in the sense of \cite{Aho68}, see 
\cite{EhrRoz77}.  For more background 
we refer 
to \cite{rs97vol1}.
\begin{definition}\label{def:edt0lasfeld}
Let $A$ be an alphabet and $L\sse A^*$ be a subset.
We say that $L$ is {\em EDT0L} if there is an alphabet $C$ with 
$A \sse C$, a finite set $H \sse \End(C^*)$ of endomorphisms of $C$, a rational language $R\sse H^*$,  
and a symbol $\#\in C$ 
such that 
$L = \set{\phi(\#)}{\phi \in R}.$ 
\end{definition}

Note that for a subset  $R \sse H^*$ of  endomorphisms  of $C^*$ we have $\set{\phi(\#)}{\phi \in R}$ is a subset of 
 $C^*$. Our definition implies that $R$ must guarantee that $\phi(\#)\in A^*$ for all 
$\phi \in R$. The language 
$C$ is called an {\em extended alphabet}.

\begin{example}\label{ex:edtvd}
Let $A= \os{a,b}$ and $C= \os{a,b, \#,\$}$. We let $H$ be set of four endomorphisms
$f,g_a,g_b,h$ satisfing $f(\#) = \$\$$, 
$g_a(\$) = \$a$, $g_b(\$) = \$b$, and $h(\$) = 1$, and on all other letters the
$f,g_a,g_b,h$ behave like the identity. 
Consider the rational language $R\sse H^*$ 
defined by 
$R = h\os{g_a, g_b}^* f$ (where endomorphisms are applied right-to-left).
A simple inspection shows  that $\set{\phi(\#)}{\phi \in R} = \set{vv}{v\in A^*}$, which is not context-free. 
\end{example}

\subsection{Triangulation}\label{sec:triang}
 We can replace an equation over any monoid $M$ by a system of
 {\em triangular equations} (\ie equations $X=V$ where $\abs V = 2$) and one additional special equation $Y=1$, where $Y$ is a fresh variable. The procedure is straightforward: 
 consider an equation $U=V$ where $U,V \in (A\cup \OO)^*$ are words, $A\sse M$, and $\OO= \os{X_{1}\ldots X_{k}}$ is a set of (free) variables. 
Clearly, 
$$\forall X_{1}\lds X_{k}: (U=V \iff \exists X,Y:\; X = UYY \wedge  X= VYY \wedge  Y=1).$$
With the exception $Y=1$, both equations have the form $X=W$ with $\abs W \geq 2$. If $\abs W \geq 2$ then $W=yzW'$ and we obtain 
$$\forall X, Y, X_{1}\lds X_{k}:X= W\iff \exists X':\; (X'=yz) \wedge  (X = X'W').$$
Using this we can transform any system of equations into a new system with additional variables; and we end up with a system of equations of  type $X=yz$ and one (singular) equation $Y=1$.

\section{Proof of Theorem~\ref{thm:freecentral}}\label{sec:pr}
\subsection{Preprocessing.}\label{sec:preproc}
According to \prref{sec:triang} we start with a system of triangular  
of equations: there is one special equation $Y=1$ and all other equations have the form $X=V$ where $X$ is a variable and $\abs V = 2$.
Next, we add the special symbol $\#$ to the alphabet $\Apone$; and we define $A = \Apone \cup \os \#$. We let $\ov \# = \#$; this will be the only self-involuting letter in the proof of \prref{thm:freecentral}. 
We must make sure that no solution uses $\#$ and every solution is in reduced words. 
We do so by introducing a finite monoid $N$ with involution which 
plays the role of (a specific)
rational constraint. 

Let $N = \os{1,0} \cup \Apone \times \Apone$ have multiplication $1\cdot x = x \cdot 1 = x$, 
$0\cdot x = x \cdot 0 = 0$, and 
\begin{equation}\label{eq:nfm}
\begin{array}{llllll}
(a,b)\cdot (c,d) = \left\{\begin{array}{llllll} 
0 & \mathrm{if} & b=\ov c\\
(a,d) &&  b\neq \ov c.
\end{array}\right.\end{array}
\end{equation}
The monoid $N$ has an involution by $\ov 1 = 1$, $\ov 0 = 0$, and 
$\ov{(a,b)} = (\ov b , \ov a)$. Consider the \morph
$\mu_{0}:A^* \to N$ given by $\mu_{0}(\#) =0$ and $\mu_{0}(a) = (a, a)$ for $a 
\in \Apone$. It is clear that $\mu_{0}$ respects the involution and $\mu_{0}(w) = 0$ if and only if
either $w$ contains $\#$ or $w$ is not reduced.
If, on the other hand, $1 \neq w\in \Apone^*$ is reduced, then 
$\mu_{0}(w) =(a,b)$, where $a$ is the first and $b$ the last letter of $w$. Also, $\mu_{0}(w) =1$ \IFF $w = 1$.
Thus, if $\sig$ is a solution in reduced words, then for each 
variable $X$ there exists some $0 \neq \mu(X)= \ov {\mu(\ov X)}\in N$ with 
$\mu(X)= \mu_{0}\sig(X)$. 

The \morph $\mu$ allows us to remove the special equation $Y=1$. It is replaced by fixing $\mu(Y)=1$.

For the other variables there are only finitely many choices for $\mu$, so we assume that each 
equation is specified together with a morphism $\mu:\OO\ra N$.
We now require that a solution $\sig:\OO \to A^*$ satisfy 
three properties: $\pi\sig(U) = \pi\sig(V)$, $\ov{\sig(X)}= \sig(\ov X)$, and $\mu(X)= \mu_{0}\sig(X)$ for all $X\in \OO$.

The next step allows us to work with free monoids with involution rather than with groups. This relies on \prref{lem:caytree},
which is well-known, too. Its geometric interpretation is simply that the Cayley graph of a free group (over standard generators) is a tree. 
See \prref{fig:xyz} for a visual explanation. 
\begin{figure}[h]
\begin{center}
\begin{tikzpicture}[outer sep=0pt, inner sep = 1pt, node distance = 8pt]
 \node[circle, fill] (x) at (0,0) {};   
 \node[circle, fill] (a) at (90:2) {};
 \node[circle, fill] (b) at (210:2) {};
 \node[circle, fill] (c) at (330:2) {};
 
 \begin{scope}[very thick,decoration={
    markings,
    mark=at position 1 with {\arrow{>}}}] 
 \draw[thick, postaction={decorate}] (a) -- node[left= 5]  {} node[right =5] {$P$} (x); 
   \draw[thick, postaction={decorate}] (x) -- node[below right= 3]  {} node[above left =3] {$R$} (b);
 \draw[thick, postaction={decorate}] (x) -- node[above right = 3]  {$Q$} node[below left=3] {} (c);
 \end{scope}
 
 \node[below left of  = b, node distance = 10pt] {$x$};
 \node[below right of  = c, node distance = 10pt] {$y$};
 \node[above of  = a] {$1$};
 \path (0.8,-1) edge [bend right, ->] node [below] {$z$}   (-0.8,-1);
\end{tikzpicture}
\begin{align}
\forall x,y,z \; \exists P,Q,R : \;
\pi(x) = \pi(yz)
 \iff  
 \pi(x) = PR \wedge  \pi(y) = PQ \wedge \pi(z) = \ov Q R.
\end{align}
\caption{Part of the Cayley graph of $\FG \Apos$ with standard generators: the root is $1$ on the top. The geodesics to vertices $x$ and $y$ split after an initial path labeled by  $P$.}
\label{fig:xyz}\end{center}
\end{figure}\qed

\begin{lemma}\label{lem:caytree}
Let $x,y,z$ be reduced words in $\Apone^*$. Then $x=yz$ in the group $\FG \Apos$ 
\IFF there are reduced words $P,Q,R$ in $\Apone^*$ such that
$x= PR$, $y = P Q$, and $z =  \ov Q R$ holds in the free monoid $\Apone^*$. 
 \end{lemma}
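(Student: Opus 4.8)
\textbf{Proof plan for \prref{lem:caytree}.}
The statement is the word-combinatorial shadow of the tree structure of the Cayley graph, so the plan is to give a short, self-contained argument by induction on $\abs x + \abs y + \abs z$, reading the group equality $x = yz$ back into the free monoid $\Apone^*$.

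First I would handle the easy direction. If $x = PR$, $y = PQ$, $z = \ov Q R$ hold in $\Apone^*$, then in $\FG \Apos$ we have $yz = PQ\,\ov Q R = P R = x$, using $Q\ov Q = 1$ in the group; so this direction needs nothing beyond $\pi(\ov Q Q) = 1$ and $\pi$ being a \morph, and requires no reducedness hypotheses at all.

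For the converse, assume $x, y, z$ are reduced and $x = yz$ in $\FG \Apos$. Write $y = y' c^{e}$ and $z = c^{-e} z'$ where $c^{e}\in\Apone$ is the longest common ``cancelling'' suffix/prefix pair, i.e. factor $y$ and $z$ so that the free-monoid product $yz$ reduces exactly by deleting the block $c^{e}\cdot c^{-e}$ and after that no further cancellation occurs; concretely, let $Q$ be the longest suffix of $y$ such that $\ov Q$ is a prefix of $z$, so $y = PQ$ and $z = \ov Q z''$ in $\Apone^*$ with $P$ the corresponding prefix of $y$ and $z''$ the corresponding suffix of $z$. Because $y$ and $z$ are reduced, both $P$ and $z''$ are reduced, and by maximality of $Q$ the first letter of $z''$ (if any) is not the inverse of the last letter of $P$ (if any); hence the free-monoid word $P z''$ is already reduced and therefore equals the reduced word $x$, giving $x = P z''$ in $\Apone^*$. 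Setting $R = z''$ we get $x = PR$, $y = PQ$, $z = \ov Q R$ in $\Apone^*$, with all of $P,Q,R$ reduced. The only point needing care is the claim that no cancellation survives past the block $Q\ov Q$: if the last letter $a$ of $P$ equalled the inverse of the first letter $b$ of $R$, then $\ov b = a$ would be a longer common cancelling pair, contradicting maximality of $Q$ — so the uniqueness/maximality bookkeeping is the one place to be precise.

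The main obstacle, then, is purely the maximality argument that pins down $P, Q, R$ as reduced words and guarantees $PR$ is itself reduced (so that it must coincide with the reduced normal form $x$); everything else is a one-line computation in the group. One can phrase this cleanly by noting that every element of $\FG \Apos$ has a unique reduced representative in $\Apone^*$, that $\pi(yz)$ is computed by iteratively deleting factors $a\ov a$, and that the described factorization records precisely the (unique) such reduction of the concatenation $yz$. \qed
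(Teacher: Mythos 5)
Your proof is correct and matches the paper's in substance: the trivial direction is identical, and for the converse the paper simply peels off one cancelling letter at a time ($y=y'a$, $z=\ov a z'$) and invokes induction, which is the step-by-step version of your one-shot decomposition via the longest suffix $Q$ of $y$ with $\ov Q$ a prefix of $z$. Your maximality argument showing $PR$ is reduced, combined with uniqueness of reduced representatives to conclude $x=PR$ as words in $\Apone^*$, is exactly the content that the paper's induction encapsulates.
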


\begin{proof}
If there are words $P,Q,R$ in $\Apone^*$ such that
$x= PR$, $y = P Q$, and $z =  \ov Q R$ holds in the free monoid $\Apone^*$ then we have $\pi(x) = \pi(yz)$, hence $x=yz$ in the group $\FG \Apos$. This is trivial and holds whether or not $x,y,z$ are reduced. For the other direction, let $x,y,z$ be reduced words in $\Apone^*$ such that $\pi(x) = \pi(yz)$. If the word $yz$ is reduced
then we choose $P= y$,  $Q= 1$, and $R=z$. In the other case
we have $y= y'a$ and  $z= \ov a z'$ for some $a \in \Apone$ and we can use induction. \qed
\end{proof}

The  consequence of \prref{lem:caytree} is that  with the help of fresh variables $P,Q,R$  we can substitute every equation 
$x= yz$ with $x,y,z\in \Apone \cup \OO$ 
by the following three word equations to be solved over a free monoid with involution.
\begin{align}\label{eq:fg2fm}
x = PR,\qquad  y = PQ,\qquad z = \ov Q R.
\end{align}
The enlarged set of variables becomes $\OO \cup \os{P,\ov P,Q,\ov Q,R,\ov R}$.
Assume $\sig$ solves (\ref{eq:fg2fm}) such that $\sig(x)$, $\sig(y)$, $\sig(z)$ are reduced. Let $\sig(P) =p$, $\sig(Q) =q$, and $\sig(R) =r$,
Then  $p$, $q$, and $r$ are reduced and there is no cancellation 
in any of the words $pr$, $pq$, and $\ov q r$. The lack of cancellation is encoded by $\mu$, but note that there are various choices for $\mu(P), \mu(Q), \mu(R)$: for example, let $\mu(x) = (b,c)$, $\mu(y) = (b,a)$, and $\mu(z) = (\ov a,c)$ and assume that
$p,q,r \neq 1$. Then there are a last letter $d$ in $p$ and first letters 
$e$ in $q$ and $f$ in $r$ such that $\ov d \neq e \neq f \neq \ov d$,  
so the choice $\mu(P) = (b,d)$, $\mu(Q) = (e,a)$, and 
$\mu(R) = (f,c)$ is correct.

\subsection{The initial equation}\label{sec:isup}
By \prref{lem:caytree} it is enough to prove the analogue of \prref{thm:freecentral} for free monoids with involution, systems of equations $(U_{i},V_{i})$, $1 \leq i \leq s$ and a \morph $\muinit:(A\cup \OO)^* \to N$ such that   
$0 \neq \muinit(U_{i}) = \muinit (V_{i})$ for all $i$. Finally, we construct a single equation $(U', V')$ over $A \cup \OO$, where $U'= U_{1}\# \cdots \# U_{s}$ and 
$V'= V_{1}\# \cdots \# V_{s}$. 
Notice that $\muinit(X) \neq 0$ and $\abs{U_{i}}_{\#} = \abs{V_{i}}_{\#}= 0$ for all $i$.
 A solution 
$\sig:\OO \to \Apone^*$ must satisfy $\sig(U')=\sig(V')$, $\ov {\sig(X)} = \sig(\ov X)$, and $\muinit(X)= \mu_{0}\sig(X)$ for all $X\in \OO$. 
 The set of variables $\os{X_{1}\lds X_{k}}$, specified in \prref{thm:freecentral}, is a subset of the new and larger set of variables $\OO$, and the original solution set became a finite union
 of solution sets with respect to different choices for $\mu_{0}$ and 
 $\muinit$. The result holds since EDT0L is closed under finite union. 
In order to achieve our result we have to \emph{protect} each variable $X_{i}$ as follows. We assume  $\Apone \cup \OO = \os{x_{1}\lds x_{\ell}}$ with $x_{i}= X_{i}$ for $1 \leq i \leq k$ where $\{X_1\dots, X_k\}$ is the specified subset in the statement of \prref{thm:freecentral}.

The word $\Winit$ over $(A \cup \OO)^*$ is then defined as: 
\begin{equation}\label{eq:Winit}
\Winit=  \#x_1\#\dots \#x_\ell \# U'\# V'\#\ov {U'}\# \ov {V'}\# \ov{x_\ell}\#\dots \#\ov{x_1}\#.
\end{equation}
 Observe that $\Winit$ is longer than (but still linear in) $\abs A + \abs \OO + \abs{UV}$.  
The number of $\#$'s in $\Winit$ is odd; and if $\sig:(A\cup\OO)^*\to A^*$ is a morphism with $\sig(X)$ reduced for all $X\in \OO$, then: 
 $
\pi\sig(U) = \pi\sig(V) \iff \sig(U') = \sig(V') \iff \sig(\Winit) = \sig(\ov{\Winit}).
$ 
Here $(U,V)$ is the equation in \prref{thm:freecentral} and $(U',V')$
is the intermediate word equation over $A \cup \OO$. Therefore 
\prref{thm:freecentral} follows by showing that the following language is EDT0L:
$$\set{\sig(X_1)\# \cdots \#\sig(X_k)}{
\sig(\Winit) = \sig(\ov{\Winit}) \wedge \muinit= \mu_{0}\sig \wedge 
\forall X: \sig(\ov{X}) =\ov{\sig(X)}}.$$

\subsection{Partial commutation and extended equations.}\label{sec:exeqn}

Partial commutation is an important concept in our proof.
It pops up where traditionally  the unary case (solving a linear \Dio equation) is used as a black box, as is done in \cite{DiekertJP2014csr}. At first glance it might seem  like an unneccesary complication, but  in fact the contrary holds. Using partial commutation allows us to encode all solutions completely in the edges of a graph, which we can construct in quasi-linear space,
and is 
one of the major differences to \cite{DiekertJP2014csr}.
As a (less important) side effect, results on linear \Dio equations
come for free as this is the special case $\FG \Apos = \Z$: 
solving linear \Dio equations becomes part of a more general process. 

We fix $n = \ninit= \abs {\Winit}$ 
and some $\kappa \in \N$ large enough, say $k=100$. We let $C$ 
be an alphabet with involution (of constants)  such that $\abs C =\kappa n$ and $A \sse C$. 
We define $\Sig= C\cup \OO$ and assume that $\#$ is the only self-involuting symbol of $\Sig$. In the following $x,y,z, \ldots$ refer to words in 
$\Sig^*$ and $X,Y,Z, \ldots$ to variables in $\OO$. 
Throughout we let $B,B'$ and $\cX,\cX'$ denote 
subsets which are closed under  involution and satisfy
 $\cX' \sse \cX \sse \OO$ and either $A \sse B\sse B' \sse C$ or
 $A \sse B'\sse B\sse C$. In particular, $B$ and $B'$ are always comparable.

 We encode a partial commutation as follows. 
 Let $c \in B$ and either $p=c$ or $p=c\ov c$. (The case $p = c \ov c$ is needed only later in the proof of \prref{thm:procentral}.)
Let $\theta\sse (\cX \cup B^+)\times \os{p,\ov p}$ denote an irreflexive and antisymmetric relation. It is called a {\em type} if $\theta$  satisfies the following conditions:
 first,  
$(x,y)\in \theta$ implies $(\ov x,\ov y)\in \theta$ and  
$x \in \cX \cup \set{a,\ov a, a \ov a}{a \in B}\sm \os{c,\ov c, c \ov c}$, and second, for each $x$ we have $\abs {\theta(x)} \leq 1$, where $\theta(x) = \set{y \in B^*}{(x,y)\in \theta}$. 
The type relation  $\theta$ can be stored in \qls.
Given $\theta$ and $\mu: B\cup \cX \to N$ such that 
$\mu(xy) = \mu(yx)$ for all $(x,y) \in \theta$, we define the following two partially commutative monoids with \invol:  
first,  $M(B,\cX,\theta,\mu) = (B\cup \cX)^*/\set{xy = yx}{(x,y) \in \theta}$ -- a monoid with a \morph $\mu: M(B,\cX,\theta,\mu) \to N$ -- and second,  $M(B,\theta,\mu) = B^*/\set{xy = yx}{(x,y) \in \theta}$. If $w\leq W\in M(B,\cX,\theta,\mu)$ then $w$ is called a {\em proper
factor} if $w \neq 1$ and $\abs{w}_\# = 0$. 

Since the defining relations for $M(B,\cX,\theta, \mu)$ are of the form $xy=xy$,
we can define $\abs W$ and $\abs{W}_a$ for $W\in M(B,\cX,\theta, \mu)$ by representing $W$ by some word $W\in (B\cup \cX)^*$. 
 Typically we represent $w$, $W$ by words $w,W\in (B\cup \cX)^*$,
but their interpretation is always in $M(B,\cX,\theta, \mu)$.

\begin{definition}\label{def:wellf2}
We call $W\in M(B,\cX, \theta, \mu)$ {\em well-formed} if  $\abs W \leq \kappa n$, ${\abs W}_{\#} = {\abs \Winit}_{\#}$,    
every proper factor $x$ of $W$ and every $x\in B \cup \cX$ satisfies $\mu(x) \neq 0$,  and $B^* \cap \oi \mu(1) = \os 1$.
 Moreover, if $x$ is a proper factor then $\ov x$ is a proper factor, too. Finally, for every $a \in \Apone$ there is a factor $\#a\#$ in $W$.
\end{definition}

\begin{definition}\label{def:extequat}
 An {\em extended equation} is a tuple $V= (W,B,\cX,\theta,\mu)$ 
where $W \in M(B,\cX, \theta, \mu)$ is well-formed. 

A {\em $B$-\solu} of $V$ is a $B$-\morph $\sig:M(B,\cX,\theta,\mu)\to M(B,\theta,\mu)$ such that 
$\sig(W) = \sig(\ov W)$ and $\sig(X) \in y^*$ whenever $(X,y)\in \theta$.

A {\em \solu} of $V$ is a pair  $(\alp,\sig)$  such that 
$\alp: M(B,\theta,\mu) \to  A^*$ is an $A$-\morph satisfying $\mu_{0}\alp = \mu$ and $\sig$ is a $B$-\solu.  
\end{definition}

\begin{mdframed}
\begin{tabular}{ll}
$W$ & = equation, where the solution is a ``palindrome'' $\sig(W) = \ov {\sig(W)}\in A^*$. \\ 
$B$ & = alphabet of constants with $\#\in A\sse B= \ov B \sse C$. \\ 
$\cX$ & = variables appearing in $W$. Hence, $\cX = \ov \cX \sse \OO$. \\    
$\mu$ & = \morph to control the constraint that the \solu is reduced.\\
$\theta$ & = partial commutation.
\end{tabular}
\end{mdframed}
During the process of finding a solution, we change these parameters, and we describe the process in terms of  a diagram (directed graph) 
of states and arcs between them.

\subsection{The directed labeled graph $\cG$.}\label{subsec:myG}
We are now ready to define the directed labeled graph $\cG$ which will be the core of the NFA defining the EDT0L language 
$\cSol(U,V)$ 
we are aiming for.

Define the {\em vertex set} for $\cG$ to be the set of all extended equations 
$V= (W,B,\cX,\theta,\mu)$. 
The {\em initial vertices} are of the 
form $(\Winit,A,\OO,\es,\muinit)$. Due to the different possible choices for 
$\muinit$ there are exponentially many
initial vertices.  In fact, \prref{thm:freecentral} requires us to work in exponentially many graphs simultaneously because of the different possible
$\mu_{0}: A^* \to N$. However, to simplify the presentation we fix one $\mu_{0}$. 

We define the set of {\em final vertices} by 
$
\set{(W,B,\es,\es,\mu)}{W= \ov W}.
$ 
 By definition every final vertex trivially has a $B$-solution $\sig= \id B$. (Note that in a final vertex there are no  variables.)  
The arcs in $\cG$ are labeled and are of the form 
$(W,B,\cX,\theta,\mu) \arc h (W',B',\cX',\theta',\mu')$. Here
$h: C^* \to C^*$ is an \Endo which is given by 
a \morph $h: B' \to B^*$ such that 
$h$ induces a well-defined \morph
$h:M(B' \cup \cX',\theta',\mu') \to  M(B \cup \cX,\theta,\mu)$. Note that the direction of the \morph is opposite to the direction of the arc. 
There are further restrictions on arcs. For example, 
we will have  $\abs{h(b')}\leq 3$ for all $b'$. The main idea is as follows.
Suppose $(W,B,\cX,\theta,\mu) \arc h (W',B',\cX',\theta',\mu')$ is an arc, 
$\alp: M(B,\theta,\mu)\to M(A,\es,\mu_{0})$ is an $A$-\morph, and
$(W',B',\cX',\theta',\mu')$ has a $B'$-solution $\sig'$; then 
there exists a solution $(\alp, \sig)$ of the vertex $(W,B,\cX,\theta,\mu)$.
Moreover, for the other direction if $(\alp, \sig)$ solves $V= (W,B,\cX,\theta,\mu)$ and $V$ is not final then we can follow an outgoing arc and recover $(\alp, \sig)$ from a solution at the target node. We will make this more precise below.

\subsection{Compression arcs.}
These arcs transform the sets of constants.
Let  $V= (W,B,\cX,\theta,\mu)$ and $V'= (W',B',\cX,\theta',\mu')$ be vertices in $\cG$. The  compression arcs have the form $V\arc h V'$, where 
 either  $h=\eps$ is defined by the identity on $C^*$, or $h$ is defined by 
a mapping $c \mapsto h(c)\neq c$ with $c \in B'$. Recall that if a \morph $h$ is defined by $h(c) = u$ for some letter $c$ then, by our convention, $h(\ov c) = \ov u$ and $h(x) = x$ for all $x \in \Sig$ which are different from $c$ and $\ov c$.  We assume  $0 \neq \mu'(c) = \mu(h(c))\neq 1$ and 
$\mu(x) = \mu'(x)$ for all $x\in (B\cap B') \cup \cX$ (if not explicitly stated otherwise). 

We define compression arcs $(h(W'),B,\cX,\theta,\mu) \arc h (W',B',\cX,\theta',\mu')$  of the following types, only. 

\begin{enumerate}
\item[\df 1.]   ({\bf Renaming.})
We assume that $h$ is defined by $h(c) = a$ such that $B \varsubsetneq B'=B\cup\{c,\ov c\}$, and 
 $\theta \sse \theta'$.
 Thus, possibly, $\theta \varsubsetneq \theta'$. 

\item[\df 2.]   ({\bf Compression.}) We have $h(c) = u$ with $\abs u \leq 3$
and either 
 $B=B'$ and  $\theta'= \theta$ or $B \varsubsetneq 
 B'=B\cup\{c,\ov c\}$ and $\theta = \theta'= \es$.

 \item[\df 3.]   ({\bf Alphabet reduction.}) We have
 $
 B'\varsubsetneq  B$, 
 $\theta'= \es $, and $h$ is induced by the inclusion 
 $B' \sse B$. Hence we have $h = \eps = \id {C^*}$.

\end{enumerate} 
For the proof of \prref{thm:freecentral} we compress words $u$ into a single letter $c$
only if $\abs u \leq 2$. For \prref{thm:procentral} we will have in addition
the case where $u= a \ov a c$ with either $a= c$ (and $\ov a=\ov c$) or $(a \ov a , c \ov c) \in \theta$.
The purpose of arcs of type  $ \df 3$ is to remove letters in $B$ that do not appear in the word $W$. This allows us to reduce the size of $B$ and to ``kill'' partial commutation.  
  
\begin{lemma}\label{lem:contr}
Let $(W,B,\cX,\theta,\mu)\arc h (W',B',\cX',\theta',\mu')$  be an arc of type \df{1,2} or \df{3} with $W= h(W')$.
Let $\alp: M(B,\theta,\mu) \to  M(A,\emptyset,\mu_{0})$ be an $A$-\morph 
at vertex $V= (h(W'),B,\cX,\theta,\mu)$ and 
$\sig'$ be a $B'$-solution to $V'=(W',B',\cX',\theta',\mu')$.
Define a $B$-\morph $\sig: M(B,\cX,\theta,\mu) \to  M(B,\theta,\mu)$
by $\sig(X)=h\sig'(X)$. Then 
$(\alp,\sig)$ is a \solu at $V$ and $(\alp h,\sig')$ is a \solu at $V'$
and 
$\alp\sig(W) = \alp h\sig'(W')$. 
\end{lemma}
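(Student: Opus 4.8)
The plan is to verify directly that the $B$-morphism $\sig$ defined by $\sig(X) = h\sig'(X)$ has all the properties demanded of a $B$-solution in \prref{def:extequat}, namely that $\sig(W) = \sig(\ov W)$ in $M(B,\theta,\mu)$ and that $\sig(X) \in y^*$ whenever $(X,y) \in \theta$, and then to assemble the pair $(\alp,\sig)$ and observe it satisfies $\mu_0 \alp = \mu$. The first thing I would check is that $\sig$ is well-defined as a $B$-morphism $M(B,\cX,\theta,\mu) \to M(B,\theta,\mu)$: on constants $b \in B$ it is the identity (since $\sig'$ fixes constants of $B'$ and $h$ maps $B^*$ appropriately), and on variables $X$ it is the composite of the well-defined maps $\sig': M(B'\cup\cX',\theta',\mu') \to M(B',\theta',\mu')$ and $h: M(B'\cup\cX',\theta',\mu') \to M(B\cup\cX,\theta,\mu)$ restricted appropriately — here one uses the hypothesis that $h$ induces a well-defined morphism on the partially commutative monoids, and that $\mu'(X) = \mu(X)$ for $X \in \cX$, plus $\mu(h(c)) = \mu'(c)$, so that $\mu\sig = \mu$ is consistent on images.

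The core computation is the chain
\[
\sig(W) \;=\; \sig(h(W')) \;=\; h(\sig'(W')) \;=\; h(\sig'(\ov{W'})) \;=\; \sig(h(\ov{W'})) \;=\; \sig(\ov{h(W')}) \;=\; \sig(\ov W),
\]
where the first and last equalities are $W = h(W')$, the second and fourth use that $\sig$ and $h$ commute in the sense that $\sig h = h \sig'$ on $M(B'\cup\cX',\theta',\mu')$ (for constants both sides fix $B$; for variables this is the definition $\sig(X) = h\sig'(X)$, and $h$ extends to a morphism so it commutes with the monoid operation), the third is the defining property $\sig'(W') = \sig'(\ov{W'})$ of the $B'$-solution $\sig'$, and the middle-right equality is that $h$ is a morphism of monoids with involution, hence $h(\ov{W'}) = \ov{h(W')}$. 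The condition $\sig(X) \in y^*$ for $(X,y) \in \theta$: if $(X,y) \in \theta$ then, since $\theta \subseteq \theta'$ in the type \df1 case and $\theta' = \theta$ or $\theta' = \es$ in the other cases, one checks that either $(X, h^{-1}(y))$-type data forces $\sig'(X)$ to be a power of the relevant generator of $B'$, whose $h$-image is a power of $y$; in the cases $\theta' = \es$ there are no such pairs on the $V'$ side, so this needs the separate observation that type \df3 arcs have $\theta' = \es$ but the lemma only asserts $(\alp h, \sig')$ is a solution at $V'$, where vacuously nothing is required. I would handle this by splitting on the arc type, which is routine.

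Finally, for the pair $(\alp,\sig)$ to be a solution at $V$ one needs $\alp : M(B,\theta,\mu) \to M(A,\es,\mu_0)$ to be an $A$-morphism with $\mu_0\alp = \mu$, which is exactly the hypothesis, together with the $B$-solution property of $\sig$ just established; and for $(\alp h, \sig')$ at $V'$ one needs $\alp h : M(B',\theta',\mu') \to M(A,\es,\mu_0)$ to be an $A$-morphism with $\mu_0 \alp h = \mu'$, which follows since $\mu_0 \alp = \mu$ and $\mu h = \mu'$ on $B'$ (the latter from $\mu'(c) = \mu(h(c))$ and $\mu = \mu'$ on $B \cap B'$), plus $\sig'$ is a $B'$-solution by assumption. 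The identity $\alp\sig(W) = \alp h \sig'(W')$ is then immediate from $\sig(W) = \sig(h(W')) = h\sig'(W')$ applied inside $\alp$. The only mild subtlety — the step I expect to require the most care — is verifying that $h$ genuinely commutes past the variables and respects the partial commutation and the $\mu$-labels simultaneously, i.e. that all the relabelings $B \leftrightarrow B'$, $\theta \leftrightarrow \theta'$, $\mu \leftrightarrow \mu'$ are compatible so that every morphism written above is well-defined; but this is precisely what the standing hypotheses on arcs (that $h$ induces a well-defined morphism $M(B'\cup\cX',\theta',\mu') \to M(B\cup\cX,\theta,\mu)$, and the $\mu$-compatibility $\mu'(c) = \mu(h(c))$, $\mu = \mu'$ on $(B\cap B')\cup\cX$) were set up to guarantee, so after recording them the argument is bookkeeping.
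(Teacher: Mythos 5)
Your proposal is correct and follows essentially the same route as the paper: the identity $\sigma h = h\sigma'$ verified on constants and variables separately, the identical chain $\sigma(W)=\sigma(h(W'))=h\sigma'(W')=h\sigma'(\ov{W'})=\sigma(\ov{h(W')})=\sigma(\ov W)$, and the compatibility conditions $\mu h=\mu'$, $\mu_0\alpha=\mu$ to assemble the two solutions. You are somewhat more explicit than the paper about well-definedness and the typing condition $\sigma(X)\in y^*$, which the paper leaves implicit, but the argument is the same.
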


\begin{proof} 
By definition, $\mu h = \mu'$ and $\mu_{0}\alp= \mu$. Hence $(\alp h,\sig')$ is a \solu at $V$. Now,  
$h(X)= X$ for all $X\in \cX$. 
Hence, $\sig(h(X))=\sig(X)=h\sig'(X)$. For $b'\in B'$ we obtain $\sig h(b')=h(b')=h \sig'(b')$ since $\sig'$ and $\sig$ are the identity on $B'$ and $B$ respectively. It follows $\sig h=h\sig'$ and $\alp\sig(W) = \alp h\sig'(W')$. Next, 
$$
\sig(W) = \sig(h(W')) = h (\sig'(W')) = h (\sig'(\ov{W'}))= \sig (h (\ov{W'})) 
= \sig(\ov{h(W')})= \sig(\ov{W}).
$$ 
Thus, $\sig$ is a $B$-\solu to $V$ and, consequently, $(\alp,\sig)$ solves $V$.
\qed
\end{proof}

\subsection{Substitution arcs.}
These arcs transform variables.
Let $V= (W,B,\cX,\theta,\mu)$ and  
$V'= (W',B,\cX',\theta',\mu')$ be vertices in $\cG$ and $X\in \cX$.
We assume that $\cX = \cX' \cup \os{X, \ov X}$ and 
$\mu(x)= \mu'(x)$, as well as 
$\theta(x)= \theta'(x)$ for all $x \in (B\cup \cX) \sm \os{X, \ov X}$.
The set of constants is the same on both sides, but $\cX'$ might have fewer variables.  
Substitution arcs are defined by a \morph $\tau:\os X \to  BX \cup \os 1$ such that we obtain a $B$-\morph $\tau:M(B,\cX,\theta,\mu)\to M(B,\cX',\theta',\mu')$.
We let $\eps=\id{C^*}$ as before.
 We define substitution arcs $(W,B,\cX,\theta,\mu) \arc \eps (\tau(W),B,\cX',\theta',\mu')$ 
  if one of the following conditions apply.

\begin{enumerate}
\item[\df 4.]   ({\bf Removing a variable.}) 
Let $\cX' = \cX \sm \os{X, \ov X}$. The  $B$-\morph
$\tau:M(B,\cX,\theta,\mu)\to M(B,\cX',\theta',\mu')$ is defined by $\tau(X)=1$.    

\item[\df 5.]   ({\bf Variable typing.}) 
The purpose of this arc is to introduce some type for variables without changing anything else, so $\cX'=\cX$ and $\mu'=\mu$. Suppose that $\theta(X) = \es$ and $p \in B^+$ is a word with  
 $\mu(Xp) = \mu(pX)$ and such that 
$\theta'=\theta\cup\os{(X,p),(\ov X, \ov p)}$. The  $B$-\morph $\tau:M(B,\cX, \theta,\mu)\to  M(B,\cX, \theta',\mu)$ is defined by the identity on $B\cup\cX$. 
Note that the condition $\mu(Xp) = \mu(pX)$ implies that if $\mu:M(B,\cX,\theta,\mu) \to N$ is well-defined, then $\mu:M(B,\cX,\theta',\mu) \to N$ is well-defined, too. The other direction is trivial. 

\item[\df 6.] ({\bf Substitution of a variable.}) 
We have $(B,\cX,\theta) = (B',\cX',\theta')$. 
Let $p\in B^+$ such that $\theta(X) \sse \os p$. (For $\theta(X)= \es$ this is always true.) We suppose that we have $\mu(X) = \mu(p) \mu'(X)$ (hence, automatically $\mu(\ov X) = \mu'(\ov X)\mu(\ov p)$) and that $\tau(X)  =  pX$ defines a morphism $\tau: M(B,\cX, \theta,\mu)\to M(B,\cX, \theta,\mu')$. 
\end{enumerate}

\begin{lemma}\label{lem:varsub}
Let $V= (W,B,\cX,\theta,\mu)\arc \eps (W',B,\cX',\theta',\mu')= V'$ with $\eps = \id{C^*}$ be an arc of type \df{4,5,6} with $W'= \tau(W)$.
Let $\alp: M(B,\theta,\mu) \to  M(A,\emptyset,\mu_{0})$ be an $A$-\morph 
at vertex $V$
and 
$\sig'$ be a $B$-solution to $V'$. 
Define a $B$-\morph $\sig: M(B,\cX,\theta,\mu) \to  M(B,\theta,\mu)$
by $\sig(X)=\sig'\tau(X)$. Then 
$(\alp,\sig)$ is a \solu at $V$ and $(\alp,\sig')$ is a \solu at $V'$.
Moreover, $\alp\sig(W) = \alp h\sig'(W')$ where $h=\eps$ is viewed 
as the identity on $\id{M(B,\theta,\mu)}$.
\end{lemma}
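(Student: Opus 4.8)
The plan is to verify the three claimed conclusions in turn, exactly mirroring the argument of \prref{lem:contr}, but now for substitution arcs where the underlying endomorphism of $C^*$ is the identity and the work is carried by the variable substitution $\tau$.

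First I would check that $\sig$ is well-defined as a $B$-\morph $M(B,\cX,\theta,\mu)\to M(B,\theta,\mu)$. Since $\tau:M(B,\cX,\theta,\mu)\to M(B,\cX',\theta',\mu')$ is a $B$-\morph by the definition of arcs of type \df{4,5,6}, and $\sig':M(B,\cX',\theta',\mu')\to M(B,\theta,\mu)$ is a $B$-solution, the composite $\sig=\sig'\tau$ is a $B$-\morph landing in $M(B,\theta,\mu)$; it is the identity on $B$ because both $\tau$ and $\sig'$ are. One subtlety to address explicitly: in each of the three cases one must know that $\tau$ really is compatible with the relevant $\mu$, $\theta$ data — in case \df 4 this is immediate, in case \df 5 it is the remark that $\mu(Xp)=\mu(pX)$ makes $\mu$ well-defined on $M(B,\cX,\theta',\mu)$, and in case \df 6 it is the hypothesis $\mu(X)=\mu(p)\mu'(X)$ together with $\theta(X)\sse\os p$ that makes $\tau$ a morphism. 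These are all granted by the arc definitions, so I would just cite them.

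Next I would show $\sig$ is a $B$-\solu of $V$, i.e. $\sig(W)=\sig(\ov W)$ and the typing condition. For the equation: using $W'=\tau(W)$, that $\sig'$ is a $B$-solution of $V'$ (so $\sig'(W')=\sig'(\ov{W'})$), and that $\tau$ commutes with the involution (being a \morph of monoids with involution), compute
$$
\sig(W)=\sig'\tau(W)=\sig'(W')=\sig'(\ov{W'})=\sig'(\ov{\tau(W)})=\sig'\tau(\ov W)=\sig(\ov W).
$$
For the typing condition $\sig(X)\in y^*$ whenever $(X,y)\in\theta$: in cases \df 4 and \df 6 the relation $\theta$ is unchanged and $\tau(X)=pX$ with $\theta(X)\sse\os p$, so $\sig(X)=\sig'\tau(X)=\sig'(p)\sig'(X)=p\,\sig'(X)$, which lies in $p^*$ since $\sig'$ respects the type $(X,p)$; for variables other than $X$ nothing changes. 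In case \df 5 one must check the newly introduced pair $(X,p)$: but the side condition for the arc does not constrain $\sig'$ beyond being a $B$-solution of $V'$, whose type relation $\theta'$ already contains $(X,p)$, so $\sig'(X)\in p^*$ directly, and $\sig(X)=\sig'(X)$. Then $\mu_0\alp=\mu$ is unchanged along the arc (same $B$, same $\mu$ on $B$), so $(\alp,\sig)$ is a \solu of $V$; and $\mu_0\alp=\mu$ together with $\mu h=\mu\,\id{}=\mu'$ on $B$ — wait, here $B=B'$ and $\mu,\mu'$ agree on $B$ by the arc hypothesis — gives that $(\alp,\sig')$ is a \solu of $V'$.

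Finally, $\alp\sig(W)=\alp\sig'\tau(W)=\alp\sig'(W')=\alp h\sig'(W')$ with $h=\eps=\id{M(B,\theta,\mu)}$, which is the last assertion. I expect no serious obstacle here — the lemma is the substitution-arc analogue of \prref{lem:contr} and the proof is a short diagram-chase — but the one point requiring genuine care is the bookkeeping of which of $\mu,\theta,\cX$ actually change on which side in each of the three arc types \df 4, \df 5, \df 6, so that the well-definedness of the relevant quotient monoids and of $\tau$ is correctly invoked; I would handle this by treating the three cases in parallel and pointing to the precise side conditions stated in the definitions of arcs \df 4--\df 6.
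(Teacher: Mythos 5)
Your proof is correct and follows essentially the same route as the paper's: the same chain of equalities $\sig(W)=\sig'\tau(W)=\sig'(\ov{\tau(W)})=\sig(\ov W)$, the observation that $M(B,\theta,\mu)=M(B,\theta',\mu')$ because the changes to $\mu,\theta$ concern variables only, and the trivial final identity; you merely spell out the typing and well-definedness checks that the paper leaves implicit. One small slip: in case \df 4 you write $\tau(X)=pX$, whereas there $\tau(X)=1$ — harmless, since $\sig(X)=1\in p^*$ anyway.
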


\begin{proof}
 Since $\sig'$ is a $B$-solution to $V'$ we have 
 $\sig(W)= \sig'(\tau(W))= \sig'(\ov {\tau(W)})= \ov{\sig'\tau(W)}= \ov{\sig(W)}.$
 Hence, $(\alp,\sig)$ is a \solu at $V$. Since $M(B,\theta,\mu) = M(B,\theta',\mu')$ 
 (a possible change in $\mu$ or $\theta$ concerns variables, only),
 $(\alp,\sig')= $ is a \solu at $V'$. The assertion 
  $\alp\sig(W) = \alp h\sig'(W')$ is trivial since $W'= \tau(W)$, 
  $\sig=\sig'\tau$, and $h=\eps$ induces the identity on $M(B,\theta,\mu)$. 
  \qed
\end{proof}

\begin{proposition}\label{prop:backandforth}
Let $V_{0}\arc{h_1} V_1 
\cdots \arc{h_{t}} V_t$
be a path in $\cG$ of length $t$, where $V_0= (\Winit, A, \OO,\es ,\muinit)$
is an initial and $V_t=(W',B,\es,\es,\mu)$ is a final vertex. 
Then $V_{0}$ has a solution $(\id{A},\sig)$ with
 $\sig(\Winit)= h_1 \cdots h_t(W')$. 
 Moreover, 
 we have 
 $W'\in  \#u_{1}\#\cdots \#u_{k}\#B^*$ such that $\abs{u_{i}}_{\#}= 0$ and 
 we can write:
  \begin{align}\label{eq:wondercX} 
h_1 \cdots h_t(u_{1}\#\cdots \#u_{k}) = \sig(X_{1})\#\cdots \#\sig(X_{k}),
\end{align}
\end{proposition}

\begin{proof}
By definition of final vertices we have $\ov{W'} = {W'}$ and no variables occur in $W'$. Hence, $\id {B^*}$ defines the (unique) $B$-\solu of $W'$. 
By definition of the arcs, $h=h_1 \cdots h_t: M(B,\es, \mu) \to A^*= M(A,\es, \muinit) $  is an $A$-\morph which shows that
$(h,\id {B^*})$ solves $W'$. There is only one $A$-\morph 
at $V_{0}$, namely $\id {A^*}$. 
Using \prref{lem:contr}
 and \prref{lem:varsub} we see first, $V_{0}$ has some solution 
 $(\id {A^*},\sig)$ and second, 
\begin{align}\label{eq:wonder} 
\id {A^*}\sig(\Winit) = \id {A^*}h_1 \cdots h_t\id {B^*}(W') = 
h_1 \cdots h_t(W').
\end{align}
Finally, for $1 \leq j \leq t$ we have $h_{j}(\#) = \#$ and $\abs{h_{j}(x)}_{\#} = 0$
for all other symbols. Hence the claim 
$h_1 \cdots h_t(u_{1}\#\cdots \#u_{k}) = \sig(X_{1})\#\cdots \#\sig(X_{k})$. 
\qed
\end{proof}

\subsection{Construction of the NFA $\cA$ in quasilinear space}\label{sec:cinqls}
The input to \prref{thm:freecentral} is given by three items: 
$\Apone$, $\OO$, and the pair of words $U,V\in (\Apone \cup \OO)^*$. 
The input size in bits is in
$$\Oh((\abs{UV}+\abs{\Apone \cup \OO}) \log(\abs{\Apone \cup \OO})).$$
Our non-deterministic procedure will use space 
$\Oh((\abs{UV}+\abs{\Apone \cup \OO}) \log(\abs{UV}+\abs{\Apone \cup \OO}))$,
which is \ql in the input size.

Let $n_{0}= (\abs{UV}+\abs{\Apone \cup \OO}) \log(\abs{\Apone \cup \OO}))$ be the input size. 
The definition of $n=\ninit = \abs{\Winit}$ gives
$n \in \Oh(n_0)$, and therefore we can also choose $n= \ninit$ as the new input size. 
In the preprocessing we transformed the original input into 
the word $\Winit$ over a larger set of variables. However, transforming the initial data into $\Winit$ is easily implementable in \qls. 

The next observation is that, according to \prref{def:wellf2}, every well-formed word $W$ 
can be stored with $\Oh(n \log n)$ bits. Moreover, given 
$W\in M(B,\cX,\theta,\mu)$, it can be checked (deterministically) in space $\Oh(n \log n)$ whether it is well-formed. 
The same is true for extended equations according to 
\prref{def:extequat}. Next, we can consider in space $\Oh(n \log n)$, one after another, all candidates 
$(W,B,\cX,\theta,\mu) \arc h (W',B',\cX,\theta',\mu')$ for arcs in $\cG$. 
(We use that, by construction,  $h$ can be encoded by a list of length 
$\Oh(n)$ for all arcs in $\cG$.) 
Each time we check (deterministically) in space $\Oh(n \log n)$ whether there is indeed an arc $h$. If the answer is positive, we output the arc. 
The switch from the graph $\cG$ to the NFA $\cA$ is computationally easy: 
for every final vertex $(W',B,\es,\es,\mu)$ compute 
a prefix $\#w'\#$ of $W$ with $\abs{w'}_{\#} = k$ and output the
arc $(W',B,\es,\es,\mu)\arc {g_{w'}} \#$.

We used nondeterminism during the procedure, but this was not really necessary, as we could have produced an NFA satisfying 
\prref{thm:freecentral} in deterministic \qls. However, such an NFA would  
contain a lot of unnecessary states and arcs. 

In order to justify 
the last sentence in the abstract that deciding the existential theory of non-abelian free groups is in $\NSPACE(n\log n)$, (i.e., nondeterministic \qls) we need to check whether $L(\cA) \neq \es$.
Thus we integrate this check into the construction of the NFA $\cA$ right-away. So we modify our construction: 
instead of computing all vertices and all arcs 
of $\cG$ we output vertices  and arcs only if they belong to a path from an initial to 
a final vertex. 
More precisely, for each vertex $V$ we define a Boolean variable $\text{Useful}(V)$ 
which is $1$ if $V$ is on a path from an initial to 
a final vertex and $0$ otherwise. 
In order to compute $\text{Useful}(V)$ we guess such a path in 
nondeterministic space $\Oh(n\log n)$.  Here we use the standard fact that the nondeterministic complexity class $\NSPACE(n \log n)$ is closed under complementation by Immerman–Szelepcs{\'e}nyi (1987), see \cite{pap94}. 
Thus, our procedure outputs a vertex $V$ (resp.~an arc $V=(W,B,\cX,\theta,\mu) \arc h (W',B',\cX,\theta',\mu')= {V'}$) only if $\text{Useful}(V) =1$ (resp.~$\text{Useful}(V) = \text{Useful}(V')=1$). If no vertex satisfies $\text{Useful}(V) =1$, then we can output a one-state NFA without final states because then $\Winit$ has no solution. 
If however, at least one vertex $V$ satisfies $\text{Useful}(V) =1$, then the output is an NFA $\cA$ which accepts a nonempty set of \Endos over $C$ and, according to our construction, $\Winit$ has at least one solution; moreover, 
the reader is invited to show that $\Winit$ has infinitely many solutions \IFF  $L(\cA)$ is infinite.

\subsection{Forward property of arcs}\label{sec:fpoarcs}

The previous section has shown how to produce the graph $\cG$, which is now at our disposal. 
For every initial vertex $V_{0}$ with a given 
solution $(\id{A^*},\sig_{0})$, we 
need to establish the  existence 
of a path $V_{0}\arc{h_1} V_1
\cdots \arc{h_{t}} V_t$ to some final vertex 
$V_t=(W',B,\es,\es,\mu)$ such that the following equation holds
\begin{align}\label{eq:forinv} 
\sig_{0}(W) = h_{1}\cdots h_{t}(W').
\end{align}
This relies on the following technical concept. 
\begin{definition}\label{def:fp}
Let $V=(W,B,\cX,\theta,\mu) \arc h (W',B',\cX',\theta',\mu')= V'$ be an arc in $\cG$ and $(\alp,\sig)$ be a \solu at $V$. 
We say that the tuple 
$(V\arc h V', \alp, \sig)$ satisfies the {\em forward property} 
if there exists a \solu $(\alp h,\sig')$ at $V'$ such that 
$$ \alp \sig(W) = \alp h \sig'(W').$$
\end{definition}

\begin{lemma}\label{lem:ftaus}
Let $V= (W,B,\cX,\theta,\mu) \arc \eps (\tau(W),B,\cX',\theta',\mu')=V'$ be 
a \subst arc as in $\df 4$ or $\df 6$  and  
$(\alp,\sig)$ be a \solu at $V$. Suppose that 
$\sig(X) = uv$ and $\tau(X) = uX$. Then 
$(V\arc h V', \alp, \sig)$ satisfies the forward property. 
\end{lemma}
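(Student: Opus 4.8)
The plan is to verify directly that the substitution $\sig' = $ the natural modification of $\sig$ at the target vertex is indeed a $B$-solution of $V'$ and then check the displayed equation. First I would set $\sig'(X) = v$ on the single variable $X$ (and $\sig'(\ov X) = \ov v$ by the involution), and $\sig'(Y) = \sig(Y)$ for every other variable $Y \in \cX'$; on constants $\sig'$ is the identity, as required for a $B$-morphism $M(B,\cX',\theta',\mu') \to M(B,\theta,\mu)$. One first checks this is well-defined: in the case $\df 4$ we have $X \notin \cX'$ so nothing is required of $\sig'$ on $X$; in the case $\df 6$ the equation $\mu(X) = \mu(u)\mu'(X)$ together with $\sig(X) = uv$ and $\mu_0\alp = \mu$, $\mu_0$ being a morphism, forces $\mu_0\alp(v) = \mu'(X)$, so $\sig'$ respects $\mu'$; and if $\theta(X) \subseteq \{p\}$ then, because $\sig$ is a $B$-solution, $\sig(X) \in p^*$, hence $v \in p^*$ as well (here $u \in p^*$ follows since $\tau(X) = uX$ must respect $\theta'$), so the constraint $\sig'(X) \in p^*$ demanded by $\theta'$ holds.

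Next I would observe that by construction $\tau(X) = uX$, so $\sig'\tau(X) = \sig'(u)\sig'(X) = uv = \sig(X)$, and for every other variable $Y$ we have $\tau(Y) = Y$ and $\sig'(Y) = \sig(Y)$; since $\sig'$ and $\sig$ both fix constants, this gives the morphism identity $\sig = \sig'\tau$ on all of $M(B,\cX,\theta,\mu)$. From this, $\sig(W) = \sig'\tau(W) = \sig'(W')$, and since $\tau$ respects the involution and $\sig$ is a $B$-solution at $V$,
\begin{align*}
\sig'(W') = \sig'\tau(W) = \sig(W) = \ov{\sig(W)} = \ov{\sig'\tau(W)} = \ov{\sig'(W')},
\end{align*}
so $\sig'$ is a $B$-solution of $V'$. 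Because $M(B,\theta,\mu) = M(B,\theta',\mu')$ (the only change between $V$ and $V'$ affects variables) and $\mu_0\alp = \mu$ still holds, the pair $(\alp h, \sig') = (\alp, \sig')$ is a \solu at $V'$, using $h = \eps = \id{C^*}$, which induces the identity on $M(B,\theta,\mu)$.

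Finally the displayed equation in the forward property is immediate: $\alp h\sig'(W') = \alp\sig'(\tau(W)) = \alp\sig(W)$, using $h = \eps$ and $\sig = \sig'\tau$. This is essentially the content of \prref{lem:varsub} specialized to the particular $\sig$, so I expect no serious obstacle; the only point needing care is the well-definedness of $\sig'$ in case $\df 6$, i.e. checking that the decomposition $\sig(X) = uv$ is compatible with the $\mu'$- and $\theta'$-constraints at the target vertex, which is exactly where the hypotheses $\sig(X) = uv$ and $\tau(X) = uX$ are used.
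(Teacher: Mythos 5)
Your proof is correct and follows essentially the same route as the paper: the paper's (very terse) argument likewise sets $\sig'(X)=v$ with $\mu'(X)=\mu(v)$ and observes that $\sig$ factorizes as $\sig'\tau$ through $M(B,\cX',\theta',\mu')$, from which the forward property follows as in \prref{lem:varsub}. Your additional checks (compatibility of $v$ with the $\mu'$- and $\theta'$-constraints in case $\df 6$) are exactly the details the paper leaves implicit.
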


\begin{proof}
 If we let $\sig'(X) =v$ and $\mu'(X) = \mu(v)$ then we can write 
 $\sig= \tau \sig'$ and the \morph $\sig: M(B,\cX,\theta,\mu)
 \to M(B,\theta,\mu)$ factorizes through $\sig': M(B,\cX',\theta',\mu')
 \to M(B,\theta,\mu)$. \qed
\end{proof}

\begin{lemma}\label{lem:ftauarcs}
Let $V= (W,B,\cX,\theta,\mu) \arc \eps (\tau(W),B,\cX,\theta',\mu)=V'$ be 
a variable-typing arc as in $\df 5$ and  
$(\alp,\sig)$ be a \solu at $V$ such that $\sig(X) \in c^*$. 
(Thus,  we have $\mu(Xc) = \mu(cX) \in \mu(c^*)$; and the arc with 
the new type $(X,c)\in \theta'$ is defined.) Then 
$(V\arc h V', \alp, \sig)$ satisfies the forward property. 
\end{lemma}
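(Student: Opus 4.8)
The plan is to verify that the variable-typing arc $V \arc{\eps} V'$ of type $\df 5$ preserves solutions in the forward direction when the given solution already satisfies $\sig(X) \in c^*$. Recall that such an arc leaves $B$, $\cX$, and $\mu$ unchanged, and only enlarges $\theta$ to $\theta' = \theta \cup \{(X,c),(\ov X,\ov c)\}$; the associated \morph is $\tau = \id{}$ on $B \cup \cX$, so $W' = \tau(W) = W$ when viewed in $M(B,\cX,\theta',\mu)$ (the word representative is literally the same, only its ambient monoid has one more commutation relation). First I would set $\sig' = \sig$ and $\alp h = \alp$ (since $h = \eps$ acts as the identity on $M(B,\theta,\mu)$, and $\mu' = \mu$, $\theta'$ restricted to $B$ equals $\theta$ restricted to $B$, so $M(B,\theta',\mu') = M(B,\theta,\mu)$). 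The only thing requiring genuine checking is that $\sig'$ is actually a \emph{$B$-\solu of $V'$} in the sense of \prref{def:extequat}: that is, that $\sig'$ is a well-defined $B$-\morph $M(B,\cX,\theta',\mu) \to M(B,\theta,\mu)$ and that it respects the additional requirement ``$\sig'(X) \in y^*$ whenever $(X,y) \in \theta'$''.

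For well-definedness of $\sig'$ on $M(B,\cX,\theta',\mu)$: a $B$-\morph out of this monoid is determined by its values on variables, subject to the constraint that it must send each new defining relation to a valid identity. The relations of $M(B,\cX,\theta',\mu)$ are those of $M(B,\cX,\theta,\mu)$ together with $Xc = cX$ and $\ov X \ov c = \ov c \ov X$. Applying $\sig'=\sig$ to the first of these gives $\sig(X)c = c\sig(X)$, which holds in $M(B,\theta,\mu)$ precisely because $\sig(X) \in c^*$ and powers of $c$ commute with $c$ trivially (even in the free case). The involuted relation follows by applying the involution. Hence $\sig'$ is a well-defined $B$-\morph on $M(B,\cX,\theta',\mu)$. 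The palindrome condition $\sig'(W') = \ov{\sig'(W')}$ is immediate: $\sig'(W') = \sig(W) = \ov{\sig(W)} = \ov{\sig'(W')}$ since $W' = W$ as a word and $\sig$ already solved $V$. The new typing condition $\sig'(X) \in c^*$ is exactly our hypothesis, and $\sig'(\ov X) = \ov{\sig(X)} \in \ov c^* = \ov c^*$ follows by involution; for the old types $(Y,p) \in \theta$ with $Y \neq X,\ov X$, the condition $\sig'(Y) \in p^*$ is inherited from $\sig$ being a $B$-\solu of $V$.

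Finally I would assemble the forward-property equation: with $\alp h = \alp$ and $\sig' = \sig$ we get $\alp h \sig'(W') = \alp \sig(W)$ directly, which is the required identity $\alp\sig(W) = \alp h \sig'(W')$. It remains only to note that $(\alp, \sig')$ is a genuine \solu of $V'$, i.e. that $\alp$ is still an $A$-\morph $M(B,\theta',\mu') \to A^*$ with $\mu_0 \alp = \mu$; but $M(B,\theta',\mu') = M(B,\theta,\mu)$ and $\mu' = \mu$, so this is unchanged from $V$. I do not expect a serious obstacle here: the lemma is essentially bookkeeping, and the one substantive point — that introducing the commutation $Xc = cX$ is harmless once $\sig(X)$ is known to be a power of $c$ — is handled by the observation above. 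The mild subtlety worth stating carefully is the distinction between ``$W$ as a word'' and ``$W$ as an element of the partially commutative monoid'': the arc does not change the word but does change how $\abs{W}$ and the well-formedness conditions are interpreted, and one should remark that well-formedness of $W$ in $M(B,\cX,\theta',\mu)$ follows from well-formedness in $M(B,\cX,\theta,\mu)$ since adding commutations can only shorten or leave unchanged the relevant proper factors, and the $\mu$-values of letters and proper factors are preserved.
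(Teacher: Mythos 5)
Your proposal is correct and follows the same route as the paper: the paper's entire proof is the one-line observation that $\sig$ factorizes canonically through $\sig'\colon M(B,\cX,\theta',\mu)\to M(B,\theta,\mu)$, and your verification that the new relation $Xc=cX$ is respected because $\sig(X)\in c^*$ is exactly the content of that "canonical" factorization, just spelled out. The additional bookkeeping you supply (palindrome condition, preservation of old types, well-formedness under the enlarged $\theta$) is all consistent with the paper's definitions.
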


\begin{proof}
 The \morph $\sig: M(B,\cX,\theta,\mu)
 \to M(B,\theta,\mu)$ factorizes canonically through $\sig': M(B,\cX,\theta',\mu)
 \to M(B,\theta,\mu)$. \qed
\end{proof}

\begin{lemma}\label{lem:compiarcs}
Let $V= (h(W),B,\cX,\theta,\mu) \arc h (W',B',\cX,\theta',\mu')=V'$ be 
any compression arc as in $\df 1,\df 2$ or $\df 3$ and  
$(\alp,\sig)$ be a \solu at $V$.
Suppose there exists a $B'$-\solu $\sig'$ at $V'$ such that 
$\sig: \cX \to M(B,\theta,\mu)$ factorizes through \morph{s} as follows
$$\sig: \cX \arc {\sig'} M(B',\theta',\mu') \arc h M(B,\theta,\mu).$$
 Then $(\alp h,\sig')$ is a \solu at $V'$ with $ \alp \sig(W) = \alp h \sig'(W')$. In particular, $(V\arc h V', \alp, \sig)$ satisfies the forward property. 
\end{lemma}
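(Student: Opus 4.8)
The plan is to verify directly that the factorization hypothesis on $\sig$ forces $\sig'$ to be a genuine $B'$-solution of $V'$ and that the claimed equality of words holds, which by \prref{def:fp} is exactly the forward property. First I would record that $(\alp h, \sig')$ is a \solu at $V'$ in the sense of \prref{def:extequat}: the composite $\alp h: M(B',\theta',\mu') \to A^*$ is an $A$-\morph because $h$ is a $B$-\morph and $\alp$ is an $A$-\morph with $A \sse B$, and the constraint $\mu_0 (\alp h) = \mu'$ follows from $\mu_0 \alp = \mu$ together with $\mu h = \mu'$, which holds by definition of compression arcs (type $\df 1$, $\df 2$, $\df 3$). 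So the only substantive point is that $\sig'$ is a $B'$-solution, i.e.\ $\sig'(W') = \sig'(\ov{W'})$ and the typing condition $\sig'(X) \in y^*$ whenever $(X,y) \in \theta'$.

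Next I would exploit $W = h(W')$ and the factorization $\sig = h \sig'$ on $\cX$ (extended by the identity on constants it is $\sig = h \sig'$ as \morph{s} $M(B,\cX,\theta,\mu) \to M(B,\theta,\mu)$, since $h\sig'$ is already a $B$-\morph and agrees with $\sig$ on $\cX$). Applying $\sig$ to $W = h(W')$ and using that $\sig$ restricted to $B'$ is the identity and commutes appropriately with $h$, one computes
$$
\sig(W) = \sig(h(W')) = h(\sig'(W')),\qquad \sig(\ov W) = \sig(\ov{h(W')}) = \sig(h(\ov{W'})) = h(\sig'(\ov{W'})).
$$
Since $(\alp,\sig)$ is a \solu at $V$ we have $\sig(W) = \sig(\ov W)$, hence $h(\sig'(W')) = h(\sig'(\ov{W'}))$. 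To pass from this to $\sig'(W') = \sig'(\ov{W'})$ I would argue that the image $h(B'^*) \sse B^*$ is a free submonoid — or more simply, observe that $\sig'(W')$ and $\sig'(\ov{W'})$ are words over $B'$ and $h$ applied to the vertex data is, by the definition of the arcs in \prref{subsec:myG}, injective enough on the relevant factors (in fact for compression arcs $h$ is either the identity inclusion or replaces a single letter $c$ by a word $u$ with $\abs u \leq 3$, and in all cases $W = h(W')$ determines $W'$ uniquely, so $h$ is injective on the submonoid generated by the letters occurring in $W'$ and their $\sig'$-images). This yields $\sig'(W') = \sig'(\ov{W'})$. The typing condition for $\sig'$ is inherited from that for $\sig$ via the same factorization: if $(X,y) \in \theta'$ then $\sig(X) \in h(y)^*$ forces $\sig'(X) \in y^*$, again using injectivity of $h$. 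Finally, $\alp\sig(W) = \alp h(\sig'(W')) = \alp h \sig'(W')$ is immediate from $\sig(W) = h(\sig'(W'))$.

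The last clause, that the forward property holds, is then just \prref{def:fp} applied with the \solu $(\alp h, \sig')$ we have produced. The main obstacle I anticipate is the injectivity argument needed to cancel $h$ from the equation $h(\sig'(W')) = h(\sig'(\ov{W'}))$: one must be careful that $h$ need not be injective on all of $M(B',\theta',\mu')$ (for instance an alphabet-reduction arc of type $\df 3$ is an inclusion and hence fine, but a compression arc of type $\df 2$ with $\abs u = 2$ or $3$ could in principle collide with other letters), so the clean way is to note that $W = h(W')$ together with the structure of well-formed words pins down $W'$ from $W$, and that $h$ restricted to the constants actually appearing — together with the property $\abs{h(b')}\leq 3$ and $0 \neq \mu'(b') = \mu(h(b'))$ from the arc definition — makes $h$ a length-nondecreasing morphism that is injective on the submonoid it needs to be. Once that is pinned down the rest is the routine diagram chase above.
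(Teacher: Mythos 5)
Your proposal reaches the correct conclusion, but it takes a detour that the statement does not ask for and that, taken on its own terms, does not actually go through. The hypothesis of the lemma is not merely the factorization $\sig = h\sig'$ on $\cX$: it is that \emph{there exists a $B'$-\solu $\sig'$ at $V'$} with that factorization. By \prref{def:extequat}, being a $B'$-\solu already includes $\sig'(W') = \sig'(\ov{W'})$ and the typing condition $\sig'(X)\in y^*$ for $(X,y)\in\theta'$. So the two facts you spend most of your effort deriving are assumptions, not conclusions, and this is why the paper's proof is the single word ``Trivial'': the only things left to check are that $\alp h$ is a legitimate $A$-\morph with $\mu_0\alp h=\mu'$ (immediate from $\mu h=\mu'$ in the arc definitions) and the identity $\alp\sig(W)=\alp h\sig'(W')$, which follows from $W=h(W')$ and $\sig h = h\sig'$ exactly as in your last computation. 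Those parts of your argument are fine and match the intended (implicit) proof.

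The detour itself should be flagged, because the injectivity claim it rests on is false in general. For a compression arc of type $\df 2$ with $B=B'$ and $h(c)=u$, $\abs u\le 3$, the morphism $h$ identifies the letter $c$ with the word $u$ over other letters of $B'$, so $h$ is not injective on $B'^*$, and not even on the submonoid generated by the letters occurring in $W'$ together with their $\sig'$-images (both $c$ and the letters of $u$ typically occur there). Hence one cannot in general cancel $h$ from $h(\sig'(W'))=h(\sig'(\ov{W'}))$ to recover $\sig'(W')=\sig'(\ov{W'})$; this is precisely why the lemma \emph{assumes} the existence of the $B'$-\solu $\sig'$ rather than deriving it, and why the companion results \prref{lem:ftaus}, \prref{lem:ftauarcs} and \prref{lem:alpred} exist: they supply, for each kind of arc actually used, a concrete construction of such a $\sig'$. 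If you delete the injectivity paragraph and treat the $B'$-\solu property of $\sig'$ as given, what remains of your write-up is a correct proof.
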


\begin{proof}
 Trivial. \qed
\end{proof}

It is clear that \prref{lem:compiarcs} does not suffice for our purpose. 
We cannot prevent $\sig$ from using letters from $B$ which are not present 
in $B'$, but then no factorization 
$\sig: \cX \arc {\sig'} M(B',\theta',\mu') \arc h M(B,\theta,\mu)$ exists if $h$ is induced by the 
identity, as in the case of alphabet reduction. As already mentioned in the 
main body of the text, we need this type of 
alphabet reduction only over empty type relations. 
We content ourselves with the following statement. 

\begin{lemma}\label{lem:alpred}
Let $V= (W,B,\cX,\es,\mu) \arc \eps (W',B',\cX,\es,\mu')=V'$ be an
alphabet reduction as in $\df 3$, 
where $B'\varsubsetneq  B$
and $\mu'$ is the restriction of $\mu$. Let $(\alp,\sig)$ be a \solu at $V$.
Define a $B'$-\morph $\bet: M(B,\es,\mu) \to M(B',\es,\mu')$ 
by $\bet(b)= \alp(b)$ for $b \in B\sm B'$ and define 
$\sig'(X) = \bet \sig (X)$.
 Then $(\alp h,\sig')$ is a \solu at $V'$ with $ \alp \sig(W) = \alp \eps \sig'(W')$. In particular, $(V\arc h V', \alp, \sig)$ satisfies the forward property. 
\end{lemma}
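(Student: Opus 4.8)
The plan is to verify directly that the $B'$-morphism $\sig'$ defined by $\sig'(X)=\bet\sig(X)$ is a $B'$-solution at $V'$, and that the displayed equation on solutions holds. Recall from \prref{lem:alpred}'s setup that $h=\eps=\id{C^*}$ (alphabet reduction), that the type relations are empty on both sides, and that $\mu'$ is simply the restriction of $\mu$ to $B'$. The key observation is that $\bet:M(B,\es,\mu)\to M(B',\es,\mu')$ is well-defined as a $B'$-morphism: it is the identity on $B'$ and sends each $b\in B\sm B'$ to $\alp(b)\in A^*\sse B'^*$; since the type relations are empty there are no commutation relations to check, so $\bet$ is just a free monoid homomorphism, and it respects the involution because $\alp$ does. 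One should also note $\mu'\bet=\mu$ on $B$: on $B'$ this is the definition of restriction, and on $B\sm B'$ we have $\mu'\bet(b)=\mu'\alp(b)=\mu_0\alp(b)=\mu(b)$ using $\mu_0\alp=\mu$ (part of $(\alp,\sig)$ being a solution) — wait, more carefully, $\mu_0\alp=\mu$ holds after identifying $M(A,\es,\mu_0)$ appropriately, so I would phrase this as $\mu'\bet=\mu$ directly from the constraint that $\alp$ preserves the $N$-image.

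First I would check that $(\alp,\sig')$ — with the understanding that the morphism at $V'$ is $\alp h=\alp\eps=\alp$ restricted/composed appropriately — is a solution at $V'$. For this I need $\sig'(W')=\ov{\sig'(W')}$ and $\mu_0\alp=\mu'$. The second is immediate since $\mu'$ is the restriction of $\mu$ and $\alp$ maps into $A^*\sse B'^*$. For the first, since $W'=W$ (alphabet reduction does not change the word, only the constant set), I compute $\sig'(W)=\bet\sig(W)=\bet\ov{\sig(W)}=\ov{\bet\sig(W)}=\ov{\sig'(W)}$, using that $\sig$ is a $B$-solution at $V$ and that $\bet$ respects the involution. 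The condition $\sig'(X)\in y^*$ for typed variables is vacuous here since $\theta'=\es$.

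Next I would establish the equation $\alp\sig(W)=\alp\eps\sig'(W')$, i.e., $\alp\sig(W)=\alp\sig'(W)$. The point is that $\alp\bet=\alp$ as morphisms $M(B,\es,\mu)\to M(A,\es,\mu_0)$: on $B'$, $\bet$ is the identity so this is clear; on $b\in B\sm B'$, $\alp\bet(b)=\alp(\alp(b))=\alp(b)$ because $\alp(b)\in A^*$ and $\alp$ is an $A$-morphism, hence fixes $A^*$ pointwise. Therefore $\alp\sig'(W)=\alp\bet\sig(W)=\alp\sig(W)$, which is exactly what is required; the ``forward property'' of \prref{def:fp} then follows immediately by taking the solution $(\alp h,\sig')=(\alp,\sig')$ at $V'$.

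I do not anticipate a genuine obstacle here: the statement is essentially bookkeeping, and the only mildly subtle point is making sure that $\bet$ is well-defined and that the idempotence-type identity $\alp\bet=\alp$ on $M(B,\es,\mu)$ is used correctly — this is where the hypothesis that $\alp$ is an $A$-morphism into $A^*$ (so that it fixes its own image) is essential, and it is also why the lemma is stated only for empty type relations, since otherwise $\bet$ would need to respect nontrivial partial commutation, which could fail. I would make that single remark explicit and otherwise keep the proof to the two or three lines indicated by the author's ``Trivial''-adjacent style.
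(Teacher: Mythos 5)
Your proof is correct and follows essentially the same route as the paper's: verify that $\bet$ is a well-defined \morph using $\mu\bet(b)=\mu\alp(b)=\mu_0\alp(b)=\mu(b)$, observe that $W=W'$ so $\sig(W)=\sig(\ov W)$ transfers to $\sig'$, and conclude via the key identity $\alp\bet=\alp$ (which holds because $\alp$ is an $A$-\morph and hence fixes $\alp(b)\in A^*$). The extra care you take with the well-definedness of $\bet$ and with $\sig'(W)=\bet\sig(W)$ only makes explicit what the paper leaves implicit.
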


\begin{proof}
Since $\alp:  M(B,\es,\mu) \to M(A,\es,\mu_{0})$ is a \morph, we 
have $\mu\bet(b)= \mu \alp(b) = \mu_{0}\alp(b) = \mu(b)$ for all $b\in B\sm B'$ and $\bet$ is indeed a \morph{} from $M(B,\es,\mu)$  to $M(B',\es,\mu')$. 

Note that $M(B',\cX,\es,\mu')$ is a submonoid of $M(B,\cX,\es,\mu)$
and $\eps$ realizes the inclusion of these free monoids. Hence
$W= \eps(W') =W'$ as words. In particular, $\sig(W) = \sig ({\ov W})$ implies $\sig'(W') = \sig' ({\ov W'})$. Thus, $(\alp \eps, \sig')$ solves 
$V'$. 

Finally, by definition of $\bet$ we have $\alp = \alp \bet$ because $\alp$ is an $A$-\morph.  Hence $\alp = \alp \eps \bet$ and we obtain $$\alp \eps \sig'(W')= \alp \eps \sig'(W)= \alp \eps \bet \sig(W)
= \alp  \sig(W).$$
\qed
\end{proof}

\subsection{Compression.}\label{sec:compressionSection}
This section finishes the proof of \prref{thm:freecentral}. Consider  an initial 
vertex  
$V_{0} = (\Winit, A, \OO,\es ,\muinit)$ 
with a \solu  $(\alp,\sig)$. We will show below that $\cG$ contains a 
path $V_{0}\arc{h_1} V_1 
\cdots \arc{h_{t}} V_t$ to some final vertex 
$V_t=(W',B,\es,\es,\mu)$ such that 
$\sig(\Winit) = h_{1}\cdots h_{t}(W')$, and so $\cG$ contains all solutions to $\Winit$. Let us show  why then, indeed, we are almost done with  \prref{thm:freecentral}.
We augment the graph $\cG$ by one more vertex which is 
just the symbol $\#$. Recall that $\os{X_{1}\lds X_{k}}$ has been the set of specified variables. Every final vertex $(W',B,\es,\es,\mu)$
has a unique factorization 
$W'=\#w'\#w''$ with 
$\abs{w'}_{\#} = k$. Let us add arcs $(W',B,\es,\es,\mu) \arc {g_{w'}}
\#$ where $g_{w'}:C^*\to C^*$ is the \hom 
(not necessarily respecting the involution) defined by $g_{w'}(\#) = w'$.
If we define the NFA $\cA$ as $\cG$ with this augmentation and if we let 
$\#$ be the exclusive final vertex, then by \prref{prop:backandforth} we obtain 
 \prref{thm:freecentral}.
The construction of  $\cA$ can easily be implemented by a 
nondeterministic procedure that uses $\Oh(n\log n)$ space.  
In order to show the existence of the path from $V_0$ to $V_t$ we apply the recompression method\footnote{Compression became a main tool for solving word equations thanks to \cite{pr98icalp}.}
of \cite{DiekertJP2014csr}, but with a new and improved treatment of 
``block compression''. We avoid solving linear \Dio 
equations,  and give a structural theorem involving 
EDT0L languages that is more precise than the result in \cite{DiekertJP2014csr}.

We show the existence of the path corresponding to the \solu  $(\alp,\sig)$ using an alternation between ``block compression'' and ``pair compression'', repeated until we reach a final vertex. 
The procedures use 
knowledge of the solution being aimed for.
We proceed along arcs in $\cG$ of the 
form $V= (W,B,\cX,\theta, \mu) \arc h V'=(W',B',\cX',\theta', \mu') $ thereby transforming a solution 
$(\alp, \sig)$ to $V$ into a solution $(\alp', \sig')$ to $V'$. However, this is not allowed to be arbitrary: we must keep the invariant $\alp\sig (W) = \alp'h\sig'(W')$.
For example, consider the alphabet reduction where
$B'\varsubsetneq  B$ and $W= W'\in (B'\cup \cX)^*$. In this case 
we have $h = \id{C^*}$, which induces the inclusion  
$\eps: M(B',\es,\mu')\to M(B,\theta,\mu)$. If $\sig$ does not use letters outside $B'$ there is no obstacle.  In the other case, fortunately, we will need alphabet reduction only when the type relation is empty on both sides.   Then we can 
define $\bet(b) = \alp(b) \in A^*$ for $b\in B\sm B'$ and $\bet(b) =b$
for $b\in B'$. We let $\sig'(X) = \bet \sig(X)$. This defines a 
$B'$-solution at $V'$. In some sense this is a huge ``decompression'' 
making $\sig'(W)$ perhaps much longer than $\sig(W)$.
However, $(\alp \eps, \sig')$ is a solution to $V'$. 

A word in $w \in\Sig^*$ is a sequence of {\em positions}, say $1,2\lds \abs{w}$, and each position is labeled by a letter from $\Sig$. 
If $W= u_{0}x_{1}u_{1} \cdots x_{m}u_{m}$, with $u_{i}\in C^*$ and 
$x_{i}\in \OO$, then $\sig(W)= u_{0}\sig(x_{1})u_{1} \cdots \sig(x_{m})u_{m}$ and the positions in $\sig(W)$ corresponding to the $u_{i}$'s are henceforth called {\em visible}.

\subsubsection{Block compression}\label{sec:block}
Let $V= (W,B,\cX, \es, \mu)$ be some current non-final vertex
with an empty type relation and a \solu $(\alp,\sig)$. 
We 
start a block compression only if
$B \leq \abs W\leq \bcc$. Since $|C|=100n$, there will be sufficiently many 
``fresh'' letters in $C\sm B$ at our disposal.

\begin{enumerate}
\item Follow arcs of type $\df 4$ and $\df 6$ to remove all variables with $\abs{\sig (X)} \leq 2$. Thus, without restriction, we have $\abs{\sig (X)} > 2$ for all $X$. 
If $V$ became final, we are done and we stop. Otherwise,
for each $X$ we have $\sig(X)= bw$ for some $b \in B$ and $w \in B^+$. 
Following a substitution arc of type $\df 6$,  we replace $X$ by $bX$. 
(Of course, we also replace 
$\ov X$ by $\ov X \ov b$, changing $\mu(X)$ to $\mu(X) = \ov {\mu(\ov X)}= \mu(w)$. From now on we always do this without further comment.)
Every substitution $X \mapsto bX$ decreases $\sum_{X\in \cX}\abs{\alp \sig(X)}$, a fact which will be used later. Moreover,
if $bX\leq W$ and $b'X\leq W$  are factors with $b,b'\in B$, then $\#\neq b=b'$ due to the previous substitution $X \mapsto bX$.
For each $b\in B\sm \os \#$ 
define sets
$\Lambda_b \sse\N$ which contain those $\lam \geq 2$ such that
 there is an occurrence of a factor $db^\lambda e$ in $\sig(W)$ with $d\neq b\neq e$, where at least one of the $b$'s is visible. We also let $\cX_b=\set{X\in\cX}{bX \leq W \wedge \sig(X)\in bB^*}$.
Note that $\sum_{b} \abs{\Lambda_b} + \abs{\cX_b} 
\leq |W|$.
Since $W$ is well-formed we have $\Lambda_b = \Lam_{\ov b}$.
\item Fix some subset 
$B_+\sse B$ such that for each $\# \neq b \in B$ we have $b \in B_+ \iff 
\ov b \neq B_+$.
For each $b \in B_+$, where $\Lambda_b\neq \es$, run  the following {\em $b$-compression}:
\item{$b$-compression.} (This step removes all proper factors $b^\ell$ and $\ov{b}^\ell$, $\ell \geq 2$, from $W$.) 
\begin{enumerate}
\item Introduce fresh letters $c_{b}, \ov{c_{b}}$ with $\mu(c_{b})= \mu(b)$. In addition, for each 
$\lam \in \Lambda_b$ introduce fresh letters  $c_{\lam,b}, \ov{c_{\lam,b}}$ with $\mu(c_{\lam,b})= \mu(b)$. We abbreviate $c= c_{b}$, $\ov c= \ov{c_{b}}$, $c_{\lam}= c_{\lam,b}$, and $\ov{c_{\lam}}= \ov{c_{\lam,b}}$.
We let 
$h(c_{\lam}) = h(c) = b$ and we introduce a type 
by letting $\theta = \set{(c_{\lam},c)}{\lam \in \Lam_{b}}$.
Renaming arcs~$\df 1$ realize this transformation. We did not touch $W= h(W)$, but we introduced partial commutation. 
\item When we introduced $c, c_{\lam}$ we did not change $W$, but we changed the alphabet $B$ to some larger alphabet $B'$. Now we change 
$W$ and its solution. We start to replace in $\sig(W) \in B^*$ every 
factor $db^\lambda e$ (resp.~$d{\ov b}^\lambda e$), where $d\neq b\neq e$ and $\lam \in \Lam_{b}$, with $dc^\lambda e$ (resp.~$d{\ov c}^\lambda e$). 
This yields a new word $W' \in {B'}^*$, which was obtained via the renaming arc $h(c) = b$. 
Recall that for every $X \in \cX_{b}$ we had $bX \leq W$ and
for some positive $\ell$ we had $\sig(X) = b^\ell w$ with $w \notin bB^*$. In the new word $W'$ we have $cX \leq W'$ and
for the new solution $\sig'$ we have $\sig(X) = c^\ell w'$ with $w'\notin cB'^*$. 
We rename $W',B',\alp' = \alp h, \sig'$ as $W,B,\alp, \sig$.
\item  We define  $\theta = \set{(c_{\lam},c)}{\lam \in \Lam_{b}}\cup \set{(X,c)}{X\in \cX_{b}\wedge \sig(X) \in c^*}$.
  This can be realized by arcs $\df 5$. 
\item Let $W\in M(B,\cX,\theta,\mu)$ be given by some word in $W\in (B\cup \cX)^*$: scan the word $\sig(W) \in B^*$ from left to right. Stop at each 
factor $dc^\lam e $ with $d\neq c \neq e$ and $\lam \in \Lam_{b}$. If in this factor some 
position of the $c$'s is visible then choose exactly one of these visible positions and replace that $c$ by $c_{\lam}$. 
If no $c$ is visible, they are all inside
some $\sig(X)$; then choose any $c$ and replace it
by $c_{\lam}$. Recall that $c$ and $c_{\lam}$ commute, hence
$dc^\lam e$ became $dc_{\lam}c^{\lam-1} e= dc^{\ell_{1}}c_{\lam}c^{\ell_{2}} e \in M(B,\theta,\mu)$ for all $\ell_{1}+\ell_{2}= \lam-1$. 
After that we run through the same steps for $\ov c$. 
The whole transformation can be realized by  renaming arcs $\df 1$ defined by $h(c_{\lam}) = c$. 
There is a crucial observation: if $X\in \cX_{b}$ and 
we had $\sig(X) = c^\ell w$ with $w \notin cB^*$ before the transformation then now still $\sig'(X) = c^\ell w'$, 
but due to commutation $c_{\lam}\sig'(X)$ is a factor in $\sig'(W')\in M(B,\theta,\mu)$. For example, assume $\ov X\ov{c}^2YcZc X \ov c\ov Y d \leq W$ 
with $\sig(X) =cd\ov c$, $\sig(Y) = \ov cdc$, and $\sig(Z) = c^2$.
Then the corresponding factor in $W'$ looks 
as $\ov X\ov{c_{4}}\, \ov cYc_{6}Zc X \ov{c_{4}}\ov Y d
= \ov X\ov{c_{4}}\, \ov cYc_{6}cZ X \ov{c_{4}}\ov Y d \in M(B,\cX,\theta,\mu)$, but $c_{6}$ and 
$Z$ do not commute. However, it is important only that
$\sig'(Z) = c^2$ and  $c_{6}$ commute, which they do. 
\item Rename $W',B',\alp' = \alp h, \sig'$ as $W,B,\alp, \sig$.
Perform the following loop \ref{loopaa} -- \ref{loopda} until no $c$ and no $X\in \cX_{b}$ with 
$\sig(X)\in c^*$ occurs in $W$. 
\begin{enumerate}
\item If $X\in \cX_{b}$ and the maximal $c$-prefix of $\sig(X)$ is odd then follow an substitution arc $X \mapsto cX$. Do the same for $\ov b$ and $\ov c$.
\label{loopaa}
\item If in the new solution $\sig'(W')$ there is a factor 
$dc_{\lam}c^\ell e$ with $\ell$ odd, then inside this factor the 
word $c_{\lam}c$ is visible due to commutation and the previous step. In this case follow an arc $\df 2$ defined by 
$h(c_{\lam})= cc_{\lam}$. Thus, w.l.o.g{} $\ell$ is even for all $dc_{\lam}c^\ell e \leq \sig'(W')$. 
\label{loopba}
\item Follow a compression arc defined by 
$h(c)= c^2$.
\label{loopca}
\item Remove all $X$ with $\sig'(X) =1$ by following an substitution arc; and rename $W',B',\cX',\alp' = \alp h, \sig'$ as $W,B,\cX,\alp, \sig$
\label{loopda}
\end{enumerate}
\item \label{num5}
Let $B'= B\sm\os{c,\ov c}$ and  $\mu'$ be induced by $\mu$. Observe that no $c$ or $\ov c$ appears in $\sig(W)$: they are all compressed into single letters $c_{\lam}$. Thus the type relation of $B\cup \cX'$ is empty again. Hence we can follow an alphabet reduction arc $(W,B,\cX,\theta,\mu) \arc \eps (W,B',\cX',\es,\mu')$. The new solution to $(W,B',\cX',\es,\mu')$ is the pair 
$(\alp',\sig)$ where $\alp'= \alp \eps$ is defined by the restriction 
of $\alp$ to $M(B',\es,\mu')$.
\end{enumerate}
\end{enumerate}
Having performed  $b$-compressions for all $b \in B_+$, we have increased the length of $W$. But it is not difficult to see that the 
total increase can be bounded in $\Oh(n)$.  Actually, we have
$\abs W \leq \pcc$ at the end because we started with $\abs W \leq \bcc$ and step 1 of block compression increases $\abs W$ by at most $2n$, and no other step increases $|W|$.
Now we use alphabet reduction in a final step of block compression in order to reduce the alphabet $B$ such that $\abs B \leq \abs W$. 
We end up at a vertex named again $V= (W,B,\cX,\es,\mu)$, which has a solution
$(\alp, \sig)$. The difference to the situation before block compression is that now $\abs B \leq \abs W \leq \pcc$, and no proper factor $b^2$, $b \in B$, can be found in $W$ anymore.

\subsubsection{Space requirements for the block compression}\label{sec:bcrev}
We start the block compression at a vertex
$V=(W,B,\cX, \es, \mu)$ with a given solution  $(\alp,\sig)$ only if
$\abs B \leq \abs W \leq \bcc$. For example, every initial vertex having a  
solution $(\id{A^*},\sig)$ falls into that category. 
Now we recall all steps. 

First we removed variables $X$ with $\abs {\sig (X) }\leq 2$, and for the 
remaining  variables we did some \subst $X \mapsto bX$, which together increases the 
length by at most $2n$. Hence we reached a vertex $V'=(W',B,\cX', \es, \mu')$
with $\abs {W'} \leq \pcc$ via arcs satisfying the forward property by \prref{lem:ftaus}. Inspecting the procedure step by step, we verify whether 
each time we follow an arc if it satisfies the forward property using Lemmas~\ref{lem:ftaus},~\ref{lem:ftauarcs} and \ref{lem:compiarcs} 
(but without using \prref{lem:alpred}
during the block compression). So wherever we stop, 
Equation~(\ref{eq:forinv}) is valid during this compression procedure if it was valid before. 

Next we have to show termination and also that we stay inside the graph $\cG$ during the procedures. 
Termination can be seen as follows. Renaming arcs are used at most 
$\Oh(n)$ times, so they are irrelevant. 
No arc increases the sum
$\sum_{x\in \cX} \abs{\sig(X)}$. Either it decreases this sum, or
if this sums remains stable, then $\abs W$ decreases. 
This shows termination.

In order to show that we remain inside $\cG$ we have to control 
the possible fluctuations of $\abs W$ and $\abs B$.

After step 1 we have $\abs B\leq \abs W\leq 31n$. 
Step 3(a) increases $\abs B$ by introducing fresh letters $c_b,\ov{c_b}$ plus $\Lambda_b$. Since each such new letter corresponds to a visible letter in $W$, after this step we have $\abs B\leq 62n$. No other steps increase $\abs B$, and at the end we have $\abs B\leq abs W$.

Steps 2, 3(a)-(d) and 3(f) do not change the length of $W$. The only possible increase  to $\abs W$ can occur in step 3(e).
Part 3(e)i may cause an increase of at most $2n$. 
Part 3(e)iii does not increase length, and part 3(e)iii decreases the length of $c,\ov c$-blocks by half. We now show that at step 3(f) we have $\abs W\leq 31n$ again.

Let $\Theta_t=\{c^\dag_1,\dots, c^\dag_r\}$ be the set of letters $c$ or $\ov c$ that are added to $W$ in the $t$-th iteration of step  3(e)i, so $r\leq 2n$.
After step 3(e)i either \begin{enumerate}
\item there is a distinct letter $c^\ddag_i\in\{c,\ov c\}$ for each $c^\dag_i\in\Theta_t$ so that $c^\dag_ic^\ddag_i\leq W$, in which case the compression $h(c)=cc$ in step 3(e)iii will replace $c^\dag_ic^\ddag_i$  by $c^\ddag_i$,
\item there are factors in $W$ of the form $dc_\lambda \alpha c^\dag_{i_1}\dots c^\dag_{i_s}e$ where $d,e\not\in\{c,\ov c\}$ and $\alpha\in\{c^\varepsilon,\ov c^\varepsilon \mid 0\leq \varepsilon<s\}$ (so each $c^\dag_{i_j}$ does not have a distinct $c,\ov c$ letter matching it). In this case each $c^\dag_{i_j}$ came from replacing a variable $X_j$ with $\sig(X_j)=c^\dag_{i_j}$. Some of the $c^\dag_{i_j}$ letters will be consumed by steps 3(e)ii and iii, but if $s-\varepsilon\geq 2$ then some will remain. However for each such letter we have one less variable in $W$, so the number of such letters that can possibly be carried to the next iteration of step 3(e) during the entire procedure is bounded by $2n$.

\end{enumerate} 
So during each iteration of the loop, the number of letters increases in step  3(e)i by at most $2n$, and the number of $c^\dag$ letters that remain after an iteration is at most $2n$, so $\abs W\leq 31n+4n=35n$ throughout step 3(e). At the end of step 3(e) all $c,\ov c$ letters have been consumed, which includes the letters $c^\dag$ added in step 3(e)i, so  $\abs W\leq 31n$.

Hence we end the procedure with an increase in size which comes only from the first step: we end at some vertex
$V'=(W',B',\cX, \es, \mu)$ where the type relation is empty again. Although 
$B'$ might be larger than $B$, we have $\abs {B'} \leq \abs {W'} \leq \pcc$.

\subsubsection{Pair compression}\label{sec:pair}
After one round of block compression we run Je\.z's procedure {\em pair compression}. It brings us back to $\abs B \leq \abs W \leq \bcc$ and allows us to start another block compression. This is essential because it keeps the length in $\Oh(n)$. 

Let us explain the procedure in detail. This is similar to \cite{jez13stacs}, but  some of the technical details are different. 
We roughly follow \cite{DiekertJP2014csr}.

We start a pair compression at a vertex $V_{p}=(W,B,\cX,\es,\mu)$ with 
$\abs B \leq \abs W \leq \pcc$, where $W$ contains no proper factor $b^2$ for any 
$b\in B$. Thus we can assume to have just performed a block compression.  Moreover, we assume that $(\alp, \sig)$ is a \solu to $V$.
The goal is to end at a vertex $V_{q}=(W'',B'',\cX',\es,\mu''')$
with $\abs {B''} \leq \abs {W''} \leq \bcc$ by some path satisfying the forward condition. 

We begin with a random 
partition $B\sm \os \# = L \cup R$ such that 
$b\in L \iff \ov b \in R$ which is constructed as follows. 
We first write $B\sm \os\# = B_{+} \cup \set{\ov b}{b\in B_{+}}$ as a disjoint union. This is possible because 
$B\sm \os\#$ has no self-involuting letters. Next,  
for each $b\in B_{+}$ we choose uniformly and independently whether either 
$b\in L$ and $\ov b \in R$, or $b\in R$ and $\ov b \in L$. 

We have $ab\in LR \iff \ov b \ov a\in LR$, hence the compression respects the involution, and there is no overlap in $\sig(W)$ between any occurrences of $ab\in LR$ and $cd\in LR$, unless it is the same occurrence and $ab= cd$. The idea is to compress in one phase all factors $ab\in LR$ into a single fresh letter.  But the obstacle is that in $\sig(W)$ an occurrence of some $ab \in LR$ can be ``crossing''. This means, there is some  $aX \leq W$ (resp.~$Xb \leq W$) with $\sig(X) \in bB^*$  (resp.~$\sig(X) \in B^*a$). In this case, $\sig(W)$ has an occurrence of a factor $ab$ 
 where $a$ is visible, but $b$ is not visible. This forces us to ``uncross'' $ab$, which is a basic idea from \cite{jez13stacs} and appears in steps (2.) and (3.) below.

\begin{enumerate}
\item Remove all $X\in \cX$ with $\sig(X) = 1$ 
 via \subst arcs. 
\item
Make $B'$ large enough such that $B\sse B' \sse C$ and for each $ab\in LR$ there exists a uniquely defined ``fresh'' letter $c_{ab}\in B'\sm B$, subject to the condition that first, $ab \leq \sig(W)$  and second,  there is at least one occurrence of that factor such that either the position of $a$ or $b$ (or both)
in $\sig(W)$ is visible. 
Let us count how many fresh letters $c_{ab}$ we need. 
There are at most $\pcc$ positions in  $\sig(W)$ which are visible
since $\abs W \leq \pcc$. Moreover, each visible position leads to  at most 
one factor $ab\in LR$. Thus, the number of  $c_{ab}$ is less than 
$\pcc$. Now, for each $c_{ab}$ we let $\ov{c_{ab}} = c_{\ov{b}\ov{a}}$. 
Let us argue that $c_{\ov{b}\ov{a}}$ is present. This is clear
if $ab \leq W$ because in that case $\ov{b}\ov{a}\leq W$, too. 
Otherwise some $a$ is visible and $aX \leq W$ for a variable $X$. We obtain $\ov X \ov a \leq W$, so the position 
of $\ov a$ is visible in $\sig(W)$, and, hence,  
$\ov{b}\ov{a}\leq \sig(W)$ creates the letter $c_{\ov{b}\ov{a}}$.
Thus, $\abs{B'\sm B} \leq \pcc$ and therefore $\abs {B'} \leq {62n}$.

The following always holds: $ab \neq \ov b\ov a$. If we assume the contrary, then 
a factor $ab = a \ov a$ appears in $\sig(W)$ and suppose $a$ is visible.  
We cannot have $a\ov a \leq W$ because in a well-formed word there are no proper factors $w$ with $\mu(w) = 0$. Hence, (by symmetry) we must have $aX \leq W$ for some $X$ with 
$\sig(X) = \ov a w'$ and, as a consequence, $\mu(X) = \mu(\ov a) \mu(w')
= (\ov a, \ov a) \cdot \mu(w') \in N$. Thus, $\mu(aX) =0$, but $aX$ is a proper factor of $W$, contradiction. 
Note the rather far reaching consequence of this last tiny computation: 
as $ab \neq \ov b \ov a$ we can compress later unambiguously $ab$ into $c_{ab}$ and $\ov b \ov a$ into $\ov {c_{ab}}$ without creating any self-involuting letter. Thus,  $c_{ab}\neq \ov {c_{ab}}$ and we maintain the invariant that no other symbol than $\#$ is self-involuting. 
 
We realize this alphabet enlargement, from $B$ to $B'$, via compression  arcs labeled by $h(c_{ab}) = ab$. 
So far we have not changed $W$. We simply followed compression arcs
which satisfy the forward property. 
We are now at some vertex
$V=(W, B',{\cX'}, \es, \mu')$ with the solution 
$(\alp h, \sig)$. Indeed,  $\alp h \sig(W) = \alp \sig(W)$ as $\sig(W) \in B^* \sse B'^*$. 
\item Create a list $\cL= \set{X\in {\cX'}}{\exists b \in R: \sig(X)\in bB^*}$.
For each $X\in \cL$ do in any order: 
\begin{itemize}
\item if $\sig(X)\in bB^*$ with $b \in R$ then follow a \subst arc $X \mapsto bX$. 
\end{itemize}
Remember, if we follow $X \mapsto bX$ then automatically 
$\ov X$ is replaced with $\ov X \, \ov b$, too; and $\ov b \in L$. 
We also have $\os{X,\ov X} \sse \cL$ \IFF $\sig(X) \in bB^*a$ for some $ab\in LR$.
In that case we actually substituted $X$ by $bXa$ and $\ov X$ by $\ov a \ov X\,  \ov b$. In any case, we have successfully ``uncrossed'' every $ab \in LR$. We followed \subst arcs which satisfy the forward property by 
\prref{lem:ftaus}. We are now at a vertex $V'=(W', B',{\cX'}, \es, \mu'')$
with a \solu $(\alp h,\sig')$.
If $ab \leq \sig'(W')$ then in every occurrence of $ab$ either both positions $\sig'(W')$ in are visible or neither are. This concludes the ``uncrossing''.
\item For each $ab\in LR$ such that $c_{ab}$ was generated follow a compression 
arc labeled by $h(c_{ab}) = ab$: 
we replace all occurrences of $ab$ in $\sig'(W')$ by the letter $c_{ab}$,
and simultaneously replace all occurrences of $\ov{b}\ov{a}$ in $\sig'(W')$ by the letter $\ov{c_{ab}}$, with no overlapping ambiguity. It is therefore clear that the \solu $\sig'$ factorizes through $h$. 
Due to the previous ``uncrossings'', all arcs satisfy the forward property. Note that $h^2 =h$ because
$h: B' \to B'$ is a $B$-\morph and $a, b \in B$. Hence, we are now at a 
vertex  $V'=(W'', B',{\cX'}, \es, \mu''')$ with 
$W' = h(W'')$ which has a \solu $(\alp h,\sig'')$ where $\sig''= h \sig'$. 
\item Finally, we perform an alphabet reduction as 
in $\df 3$ which replaces $B'$ by some smaller alphabet $B''$. 
As $\#a\# \leq W''$ (we never touched any occurrence of $\#a\#$)
we see that automatically $A\sse B''$. The alphabet reduction is done 
when the type relation is empty. Thus, we use \prref{lem:alpred} to see that 
we can realize this step by an arc which satisfies the forward property.
We achieved our goal of reaching a vertex $V_q=(W'', B'',{\cX'}, \es, \mu''')$
with a \solu of the form $(\alp h,\bet\sig'')$.
 This finishes the procedure {\em pair compression}.
 \end{enumerate}

The termination of the procedure is immediate. The maximal number of steps is bounded by $\Oh(n)$. 
Suppose $V_{p}=(W,B,\cX,\es,\mu)$ is the start vertex and $V_{q}=V''= (W'', B',{\cX'}, \es, \mu''')$ is the endpoint. Denote this path by 
$$V_{p} \arc {h_{p+1}} V_{p+1} \arc {h_{p+2}} V_{p+2} \arc {h_{p+3}}\cdots 
\arc {h_{q}} V_{q}.$$ 
Along the path we used only arcs satisfying the forward condition. 
 Thus, starting with a \solu $(\alp,\sig)$ we find some \solu
 $(\alp {h_{p+1}}\cdots {h_{q}}, \sig'')$ to $V_{q}$ such that
 $$\alp\sig(W)= \alp {h_{p+1}}\cdots {h_{q}} \sig''(W'').$$
 Thus we maintained the validity of Equation (\ref{eq:forinv}) throughout all iterations
of block and pair compressions.

What remains to be shown is that we can achieve $\abs {W''} \leq \bcc$
 for at least one partition $B\sm \os \# = L \cup R$.

We reformulate the probabilistic argument of \cite{jez13stacs} in our setting.
We started a pair compression at a vertex $V_{p}=(W,B,\cX,\es,\mu)$ with $\abs W\leq \pcc$.
Let us  factorize  the word $W \in (B \cup \cX)^+$ as 
$W = x_{0}u_{1}x_{1}\cdots u_{m}x_{m}$
 such that
\begin{enumerate}
\item $u_{i}\in (B\sm \os \#)^+$ for $1 \leq i \leq m$, \ie each $u_i$ is a nonempty word
over constants,
\item The length of each $\abs {u_{i}}$ is divisible by $3$,
\item $x_i \in (B \cup \cX)^*$ for $0 \leq i \leq m$,
\item $\abs{x_{0}\cdots x_{m}} \leq 3n$.
\end{enumerate}
This is possible because ${\abs W}_{\#}  +\sum_{X\in \cX}{\abs W}_X \leq n$. We need $\abs{x_{0}\cdots x_{m}} \leq 3n$ rather than $n$ because we must adjust the lengths of  the 
$x_{i}$'s in order to guarantee divisibility by $3$ of the $\abs {u_{i}}$'s. By inserting factors of the form $x_{i}=1$
we may assume:
\begin{align}\label{eq:loW}
\abs {u_{i}}&= 3 \quad \text{ for all $1 \leq i \leq m$,}\\
\abs W &= \abs{x_{0}\cdots x_{m}} + 3m.\label{eq:loWWW}
\end{align}
Consider the word $W'$ which was obtained by the \subst arcs, but before the compression of factors $ab\in LR$ into single letters. 
The increase in length $\abs{W'} - \abs{W}$ is due to
substitution arcs  $X\mapsto bX, \ov X\mapsto \ov X \, \ov b$ with $X \in \cL$, so the length goes up by $2n$. Note that the $u_i$ factors do not change, only the $x_i$ factors.
Hence, $W'$ has the factorization 
$W' = y_{0}u_{1}y_{1}\cdots u_{m}y_{m}$ with  $y_i \in (B \cup \cX)^*$ and  
\begin{align}\label{eq:loWw}
\abs{y_{0}\cdots y_{m}}\leq \abs{x_{0}\cdots x_{m}} +2n.
\end{align}

Let $W''$ be the word after pair compression has been performed. So $\abs{W''}\leq \abs{W'}$.
If $\abs W \leq 27n$ 
then
\[\abs {W''}  \leq \abs{W'} =|y_0\dots y_m|+ |u_1\dots u_m| \leq |x_0\dots x_m|+2n+|u_1\dots u_m|\]\[=|x_0u_1\dots u_mx_m|+2n=|W|+2n\leq 27n+2n=\bcc\]
and we are done.

Hence, let us assume $27n \leq \abs W$. We have $27n \leq\abs W=|x_0\dots x_m|+3m\leq 3n+3m$ which means
 $m \geq 8n$.

The word $W''$ is the compression of a
 word 
$y_{0}v_{1}y_{1}\cdots v_{m}y_{m}$
where each $v_{i}$ is the result of the compression restricted to $u_{i}$.
Each $u_{i}$ can be written as $u_{i}= abc$ with $a,b,c \in B$. 
Since $W$ did not contain any proper factor $d^2$ with 
$d \in B$, (as we have performed block compression first) 
we know $a \neq b \neq c$. 
There are two possibilities: either $b\in L$ or 
$b\in R$. 
In the first case, either $c\in R$ or $c\in L$, and in the second case either $a\in L$ or $a\in R$. Each event 
$bc\in LR, bc\in LL, ab\in LR, ab\in RR$
has probability $\frac14$, so with probability  $\frac12$ $u_i$ is compressed from length 3 to 2, and with probability  $\frac12$ it remains length 3.
Thus, for the expectation we obtain 
$E[\abs {v_{i}}]=\frac22+\frac32=\frac52$. 
By linearity of expectation, we obtain
\begin{align}\label{eq:looW}
E[\abs{v_{1}\cdots v_{m}}] = \tfrac{5}{2}m.
\end{align}
Since the expected length is $\frac52m$, and $m$ is even, there must exist 
at least one partition $B\sm \os \# = L \cup R$ which satisfies $\abs{v_{1}\cdots v_{m}}\leq \frac52m$.

So, we ``change our algorithm'' and force the algorithm to choose exactly this partition (in other words, there is a path in the graph which chooses this partition, so we follow this one).
We can thus guarantee $\abs{v_{1}\cdots v_{m}} \leq \frac52m$ by this choice.
We may estimate the length of $W''$  as follows.

\begin{alignat*}{2}\label{eq:loWww}
\abs{W''} &\leq \abs{y_{0}\cdots y_{m}} +\abs{v_{1}\cdots v_{m}}
\\&\leq \abs{x_{0}\cdots x_{m}} +2n  + \tfrac52m 
\\ &= \abs {W} +2n - \tfrac{m}2 &\qquad\text{since 
$\abs{W} = \abs{x_{0}\cdots x_{m}} +3m$}
\\ & \leq \abs {W} - 2n &\qquad\text{ since $m \geq 8n$ }
\\ & \leq \bcc &\qquad\text{ since $|W|\leq \pcc$.}
\end{alignat*}

Due to $\abs{W''} \leq \bcc$, we can run another block compression, then a pair compression and so on. We alternate between 
block and pair compressions inside the graph $\cG$, always following arcs 
satisfying the forward condition.

\subsubsection{Compression terminates}

Let $s(0)=|\Winit|$ and $s(i)$ be the word obtained from $s(i-1)$ by a single of application of block then pair compression.

A priori, although we have $s(i)\leq \bcc$ for all $i\geq 0$ the compression method could run forever.
Let us show that this can never happen.

\begin{lemma}[Termination]\label{lem:compressionTerminates}
The alternation between block and pair compression
terminates with some final vertex. 
\end{lemma}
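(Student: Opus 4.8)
The plan is to show that the pair $\bigl(\sum_{X \in \cX}\abs{\alp\sig(X)},\ \abs W\bigr)$, ordered lexicographically on $\N \times \N$, strictly decreases each time we complete one full round of \textbf{block compression} followed by \textbf{pair compression}, unless the vertex has already become final. Since $\N \times \N$ with the lexicographic order is well-founded, this forces termination after finitely many rounds, and the only way the process can stop is at a final vertex (a non-final vertex always admits an outgoing arc that continues the process). So the core of the proof is a potential-function argument; no new combinatorial construction is needed beyond bookkeeping the inequalities already assembled in Sections~\ref{sec:block} and~\ref{sec:pair}.

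\textbf{First I would} recall the relevant monotonicity facts established earlier. Inside block compression: the initial step of \subst arcs $X \mapsto bX$ strictly decreases $\sum_{X}\abs{\alp\sig(X)}$ (this is noted explicitly after $\df 6$ in Section~\ref{sec:block}), removal of a variable via $\df 4$ does not increase it, and none of the compression arcs $\df 1,\df 2,\df 3$ touch $\sig(X)$ for variables at all, hence leave $\sum_X \abs{\alp\sig(X)}$ unchanged. Inside pair compression the same holds: the \subst arcs $X\mapsto bX$ (and $X \mapsto bXa$) do not increase $\sum_X\abs{\alp\sig(X)}$, and the compression/alphabet-reduction arcs do not change it. Thus over one full round the quantity $\sum_X \abs{\alp\sig(X)}$ never increases. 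The key point is that if it stays \emph{constant} over the round, then in particular step~1 of the block compression removed no variable and performed no nontrivial \subst $X\mapsto bX$ — which can only happen if there are no variables left with $\abs{\sig(X)}>2$ that survive, i.e.\ the vertex went final during step~1 (the case handled by ``If $V$ became final, we are done and we stop''). So as long as we have \emph{not} reached a final vertex, $\sum_X\abs{\alp\sig(X)}$ strictly decreases over the round, and the lexicographic potential drops.

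\textbf{The one subtlety} — and I expect this to be the main obstacle — is ruling out the degenerate scenario where block compression enters step~1, performs at least one substitution $X\mapsto bX$ (so $\sum_X\abs{\alp\sig(X)}$ does go down that round), the process continues, but then in \emph{all subsequent} rounds no variable ever again has $\abs{\sig(X)}>2$, so the decrease stalls while $\abs W$ oscillates within $[\,|B|,\ \bcc\,]$ forever without the vertex going final. To exclude this one argues on the second coordinate: once $\sum_X\abs{\alp\sig(X)}$ has stabilised, in every round step~1 of block compression either makes the vertex final (and we stop) or makes no substitution and removes nothing, so block compression only performs compression arcs, which do not increase $\abs W$ and in fact — combined with the pair-compression length estimate $\abs{W''}\le \abs W - 2n$ derived at the end of Section~\ref{sec:pair} whenever $\abs W \ge 27n$, and the trivial estimate $\abs{W''}\le \bcc$ when $\abs W\le 27n$ — one checks $\abs W$ is non-increasing and actually strictly decreases on any round where a genuine compression happens. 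Since $W$ is non-final it still contains either a variable or a non-trivial word over constants, and on a non-final vertex with stabilised first coordinate some compression arc \emph{must} fire (otherwise the procedures as described in Sections~\ref{sec:block}--\ref{sec:pair} would already have declared the vertex final). Hence the second coordinate strictly decreases, contradicting well-foundedness unless we hit a final vertex first. I would phrase the bookkeeping so that one clean lemma statement — ``after a full round, either the vertex is final or the lexicographic potential strictly decreases'' — carries the whole argument, and then termination is immediate by well-foundedness of $(\N\times\N,\le_{\mathrm{lex}})$.
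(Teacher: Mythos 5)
Your proof is correct and essentially identical to the paper's: both hinge on the observation that every round which does not end at a final vertex fires at least one \subst arc $X\mapsto bX$ in step~1 of block compression, each of which strictly decreases $\sum_{X\in\cX}\abs{\alp\sig(X)}$ while no other arc increases this sum. The lexicographic second coordinate $\abs W$ and the accompanying ``subtlety'' are superfluous --- as you yourself note, the first coordinate can only stay constant on a round in which the vertex goes final, so the degenerate stalling scenario you guard against never arises.
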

\begin{proof}
By contradiction assume the contrary.
Then there exists an infinite path alternating between block and pair compressions satisfying the forward condition.
However, due to the first step in the block compression this infinite path
uses infinitely many substitution arcs $X\mapsto bX$, where $b$ is a constant. 
As all arcs satisfy the forward condition for each 
substitution, each use of an arc $X\mapsto bX$ decreases 
$\sum_{x\in \cX} \abs{\alp\sig(X)}$. No use of any arc increases this sum. Since we started with a fixed \solu $(\id{A*},\sig_{0})$ at some initial vertex,
there are no such infinite paths. This is a contradiction, and 
\prref{thm:freecentral} is shown.
\qed\end{proof}

\section*{PART II. Proof of Theorem~\ref{thm:procentral}}
This part contains some repetitions of what has been written above. 
We hope that these redundancies make the reading easier. More importantly, 
the set-up for Theorem~\ref{thm:procentral} is more general, and this generates additional technical arguments.
The proof of \prref{thm:procentral} is therefore more technical and more difficult than that of \prref{thm:freecentral} because we allow more general constraints, 
 and we have to cope with the elements of order $2$ which appear in 
the free products --  as for example in the modular group.

\section{Preliminaries for the proof of \prref{thm:procentral}}\label{sec:basics}

\subsection{Free products: special features of $\F$}\label{sec:fP}
Our results hold for finitely generated free products
$$\F = \star_{1 \leq i \leq p}{F_i}$$
where each $F_i$ satisfies one of the following conditions: 
\begin{itemize}
\item $F_i= \FG{A_i}$ is a free group with basis $A_i$.
\item $F_i= A_i^*$ is a free monoid with involution over a set $A_i$ with 
\invol.
(Recall that the \invol on $A_i$ might be the identity. Hence, if
$A_i^*$ is the free monoid, then the involution means reading words {}from right-to-left.)
\item $F_i$ is a finite group. (Every group is viewed as a monoid with \invol by defining $\ov x = \oi x$.)
\end{itemize}
The monoid $\F$ is a monoid with an involution that is induced by the 
\invol{}s on each $F_i$. It is not essential that the finite monoids 
$F_i$ are groups, but it simplifies the presentation as there are fewer cases.
By $U(\F)$ we denote the submonoid of {\em units}, that is, the invertible 
elements in $\F$. Thus $U(\F)$ is the free product over those $F_i$ which are groups. 
 To make our results nonvacuous we assume that $\F$ is infinite. 
 Also, we assume that the multiplication table for each finite group $F_i$ is part of the input.  

Given $\F$, we choose as a set of monoid generators the smallest subset
$A_\F \sse \F$ which is closed under \invol and which contains the sets 
$A_i$ (if $F_i= \FG{A_i}$ or $F_i= A_i^*$) and all  sets $F_i \sm \os {1}$ where $F_i $ is finite. We let $\pi: A_\F^*\to \F$ be the canonical \morph. 

If $\F$ is a group, then $\F$ is a finitely generated free product 
of infinite cyclic and finite groups. These groups are also known as 
{\em plain groups} or as {\em basic groups}. Basic groups form a proper subclass of the class of {\em virtually free groups} which are those groups
having a free subgroup of finite index.\footnote{For example, the modular group
$\PSL(2,\Z) = \Z/2 \Z \star \Z/3 \Z $ is  plain. The group $\SL(2,\Z)$ is
isomorphic to the amalgamated product $\Z/4 \Z \star_{\Z/2\Z} \Z/6 \Z$. Hence, it is virtually free, but it is not plain because it is infinite and has a non-trivial center.} A geodesic triangle in the Cayley graph of a 
plain group, with respect to the standard generators, is depicted in 
\prref{fig:xyzplain}.

A word $w \in A_\F^*$ is called {\em reduced} if it is the shortest word 
representing the element $\pi(w) \in \F$. (These words are also called 
{\em geodesics} in the literature.)  A word $w$ is reduced (resp.~geodesic) \IFF for each 
factor $ab \leq w$ with $a,b\in \AF$ we have first, 
if $a \in \FG{A_i}$ then $b\neq \oi a$ and second, 
if $a \in F_i$ and if $F_i$ is finite then $b\notin  F_i$. 

We identify $\F \sse \AF^*$ with the regular set 
of reduced words in $\AF^*$. 
Note that every word in a free monoid $A_i^*$ is reduced, so factors of the form $a \ov a$ may appear in reduced words. Moreover, if $F_i$ is finite, then there might be elements $x\neq 1 = x^2$. Hence, the equation 
$X^2 = 1$ may have a non-trivial solution. These are  the main reasons why 
the proof of \prref{thm:procentral} is more involved than the proof of \prref{thm:freecentral}.

The monoid for recognizing the regular subset of reduced words is almost the same finite monoid as defined 
in (\ref{eq:nfm}). We replace (\ref{eq:nfm}) by the
following monoid
 $N_\F = \os{1,0} \cup \AF \times \AF$,  
where $1\cdot x = x \cdot 1 = x$, 
$0\cdot x = x \cdot 0 = 0$, and 
\begin{equation}\label{eq:fpnfm}
\begin{array}{llllll}
(a,b)\cdot (c,d) = \left\{\begin{array}{llllll} 
0 &  & \text{if $bc$ is reduced}\\
(a,d) &&  \text{otherwise}
\end{array}\right.\end{array}
\end{equation}
The \morph $\psi_\F: \AF^* \to N_F$ defined by 
$\psi_\F(a) = (a,a)$ for $a\in \AF$ recognizes $\F \sse \AF^*$. 

Note that the group of units is a rational subset of $\F$. We view 
$U(\F) \sse \F \sse \AF^*$ and we let $U = \set{a \in \AF}{a \in U(\F)}$. 
We use the monoid $\B =  \os{1,0}$ in order to recognize $U \cup \os 1$. 
The recognizing \morph $\psi_U$  maps $\os 1 \cup U$ to $1$ and all other letters to $0$.
Now, consider the monoid 
$$N= (\os{1,0} \cup \AF \times \AF) \times \B.$$
Define $\psi: \AF^* \to N$ by $\psi(a) = (\psi_\F(a),\psi_U(a))$. Hence, 
\begin{equation}\label{eq:fpnfmN}
\begin{array}{llllll}
\psi(a) = \left\{\begin{array}{llllll} 
((a,a),0) &  & \text{if $a \in \AF\sm U$}\\
((a,a),1) &&  \text{if  $a\in U$}
\end{array}\right.\end{array}
\end{equation}
Then $\psi$ recognizes $U(\F)$ and $\F$ simultaneously. 
 This works because 
$$U(\F)= \F \sm \AF^* \cdot (\AF \sm U) \cdot \AF^*.$$
Moreover, if $\AF$ is embedded in any larger alphabet $A$ then, by 
extending $\psi$ to a \hom $\psi: A* \to N$ by $\psi(a) = (0,0)$ for all 
$a\in A\sm \AF$, we obtain that $\psi$ recognizes simultaneously  all 
Boolean combinations of the sets $\os 1$, $U(\F)$, $\F$, and $\AF^*$ inside $A^*$. The monoid $N$ has an efficient representation and $\abs N \leq 2(2+{\abs\AF}^2)$. 

In her thesis Mich{\`e}le Benois proved that the family $\Rat(\F)$ forms an effective Boolean algebra. Her statement in \cite{ben69} 
is for free groups only, but her proof holds  for the free product $\F$, too. 
\begin{proposition}[Benois, 1969]\label{prop:benoisnonclassic}
Let $R \in \Rat(\F)$ be rational and 
$R = \pi(L(\cA))$, where $\cA= (Q,\AF^*,\del, I,F)$ is an NFA with $n$ states over the free monoid $\AF^*$  and $\del \sse Q \times (\AF\cup \os 1) \times Q$.
Then there exists an NFA $\cA'$ with $n$ states satisfying
$$R = \pi(L(\cA'))  \sse L(\cA').$$

In particular, $\Rat(\F)$ forms an effective Boolean algebra. 
\end{proposition}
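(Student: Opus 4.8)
The plan is to follow Benois's classical saturation argument, adapted to the free product $\F$ rather than a single free group. The starting point is an NFA $\cA = (Q,\AF^*,\del,I,F)$ with $\del \sse Q\times(\AF\cup\os 1)\times Q$ and $R = \pi(L(\cA))$. The goal is to add transitions to $\cA$ without adding states so that the resulting automaton $\cA'$ already accepts \emph{all} words of $\AF^*$ that represent elements of $R$; once we have this for $R$, the Boolean-algebra statement follows, because $\F\sm R$ is then represented by the words accepted by a complemented automaton restricted to the regular set of reduced words (which is recognizable by $\psi_\F$ from \eqref{eq:fpnfm}), and intersection and union of NFAs are routine.

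First I would make precise what ``reductions'' are needed. Reduced words are characterized locally: a word is reduced iff it contains no factor $a\oi a$ with $a$ in some free-group factor $\FG{A_i}$, and no factor $ab$ with $a,b$ in the same finite factor $F_i$. Correspondingly, the rewriting rules on $\AF^*$ that preserve $\pi$ are of two kinds: delete a factor $a\oi a$ (for $a$ in a free part), and replace a factor $ab$ by the single letter $c=ab$ in $F_i$ when $a,b,c\in F_i\sm\os 1$, or delete $ab$ entirely when $ab=1$ in $F_i$. This rewriting system is confluent and terminating, and its irreducible words are exactly the reduced words; this is the only place where the free-product structure (as opposed to a plain free group) enters, and it is handled by the finiteness of each $F_i$ and the normal-form theorem for free products. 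The saturation step is then: whenever $p\arc{a} q$ and $q\arc{\oi a}r$ are (possibly newly added) transitions in the current automaton with $a$ in a free factor, add an $\eps$-transition $p\arc{1} r$; whenever $p\arc{a}q\arc{b}r$ with $a,b\in F_i\sm\os 1$, add $p\arc{c}r$ where $c=ab$ in $F_i$ (and $p\arc{1}r$ if $c=1$). Iterate until no new transition can be added. Because the state set is fixed and the alphabet $\AF\cup\os 1$ is finite, at most $\abs Q^2(\abs\AF+1)$ transitions can ever be present, so the process terminates; this also gives the effectivity, hence the word ``effective'' in the statement.

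The two things to verify once $\cA'$ is obtained are: (i) $L(\cA')\sse \pi^{-1}(R)$, i.e.\ every newly added transition is sound because each saturation rule only contracts a factor to a $\pi$-equal one, so by induction every accepted word maps under $\pi$ into $R$; and (ii) $\pi^{-1}(R)\sse L(\cA')$, i.e.\ if $w\in\AF^*$ and $\pi(w)\in R$, then $w$ is accepted. For (ii) I would argue by induction on $\abs w$: the reduced normal form $\widehat w$ of $w$ satisfies $\pi(\widehat w)=\pi(w)\in R$, so $\widehat w\in L(\cA)\sse L(\cA')$ since reduced representatives of $R$ are accepted already by $\cA$ (here one uses $R=\pi(L(\cA))$ together with the fact that every element of $R$ has a reduced representative in $L(\cA)$ — if not, one first closes $\cA$ under the rewriting rules to ensure each $\pi$-class meeting $L(\cA)$ has its reduced representative in $L(\cA)$, which is exactly what saturation also does). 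Then, reversing one rewriting step that turns $w$ into a shorter $w'$ with $\pi(w')=\pi(w)$, an accepting path for $w'$ in $\cA'$ lifts to one for $w$ precisely because the relevant saturation rule was applied: an $\eps$-edge $p\arc 1 r$ used in the path for $w'$ came from a pair $p\arc a q\arc{\oi a}r$, which we can substitute back in, and similarly for the finite-group contractions. The inclusion $R\sse L(\cA')$ is immediate since $L(\cA)\sse L(\cA')$. Finally $R=\pi(L(\cA'))$ follows from $R\sse L(\cA')$, $L(\cA')\sse\pi^{-1}(R)$ and $\pi(L(\cA'))\supseteq\pi(R\cap\text{reduced})=R$.

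The main obstacle — and the only genuine difference from Benois's original free-group proof — is the presence of the finite factors $F_i$, which forces the rewriting system to include \emph{contracting} rules $ab\mapsto c$ rather than only \emph{deleting} rules $a\oi a\mapsto 1$. One must check that confluence still holds when deletion and contraction rules overlap (e.g.\ a factor $ab\,\oi{b}\oi{a}$ with $a,b$ in different factors, or $abc$ with $a,b,c$ in the same finite $F_i$ so that $(ab)c$ and $a(bc)$ must give the same result), which is guaranteed by the normal-form theorem for free products but deserves an explicit lemma; and one must confirm that the saturation closure of $\cA$ under contraction rules does not blow up the number of states, which it does not because $c=ab$ ranges over the \emph{fixed} alphabet $\AF$. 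Everything else — termination of saturation, soundness, the induction for completeness, and the passage from ``$R$ representable with $R\sse L(\cA')$'' to ``$\Rat(\F)$ is an effective Boolean algebra'' via complementation inside the recognizable set of reduced words — is routine.
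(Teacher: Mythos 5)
Your construction is the same saturation argument the paper uses, and most of it is sound: adding the contraction transitions $p\arc{c}r$ for $\pi(ab)=c\in\AF\cup\os{1}$, termination after at most $\abs Q^2(\abs{\AF}+1)$ additions, soundness of each added transition, and the passage to complementation inside the regular set of reduced words all match the paper's sketch. The one step that genuinely fails is your claim (ii), that $\pi^{-1}(R)\sse L(\cA')$. Saturation only adds \emph{shortcuts}, so $L(\cA')$ is closed under \emph{performing} reductions on accepted words, not under \emph{inserting} cancelling factors. Concretely, if $\cA$ has two states and the single transition $q_0\arc{a}q_f$ with $a$ a free-group generator, then $R=\os{a}$, saturation adds nothing, and $a\oi a a\notin L(\cA')$ even though $\pi(a\oi a a)=a\in R$. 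Your lifting argument breaks exactly where you ``substitute back in'' the pair that created an $\eps$-edge: the edge $p\arc{1}r$ may have been created by a pair $p\arc{b}q\arc{\oi b}r$ with $b\neq a$, so an accepting path for the contracted word $w'$ need not lift to one for $w$.

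Fortunately the proposition does not assert $\pi^{-1}(R)\sse L(\cA')$; it asserts only $R\sse L(\cA')$, where $R$ is identified with its set of \emph{reduced} representatives, and that is exactly what the correct half of your argument delivers: for every $w\in L(\cA)$, the saturated automaton accepts every descendant of $w$ under the confluent terminating rewriting system, in particular the reduced normal form of $w$; since each element of $R$ equals $\pi(w)$ for some $w\in L(\cA)$ and has a unique reduced representative, $R\sse L(\cA')$ follows. (Your remark that ``$R\sse L(\cA')$ is immediate since $L(\cA)\sse L(\cA')$'' is therefore also not right as stated --- it needs this descendant-closure argument, not mere monotonicity.) Two smaller points: the saturation rule should be applied modulo the $\eps$-transitions already present --- the paper's condition ``$ab\in L(p,q)$'' does this automatically, whereas requiring two consecutive transitions through a single intermediate state would miss paths with interspersed $\eps$-edges; and with $R\sse L(\cA')\sse\pi^{-1}(R)$ in hand, complementation goes through exactly as you say, since for reduced $w$ one then has $w\in L(\cA')$ if and only if $w\in R$.
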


\begin{proof}
(Sketch) 
For $p,q\in Q$ let  $L(p,q)$ denote the set of words labeling a path in $\cA$ from  $p$ to $q$.
We construct an automaton $\cA'=(Q,\AF^*, \delta', I,F)$ by defining $\delta'$ as follows. Set $\del' = \del$. Repeat the following loop: as long as there are $a,b \in \AF$ such that $ab \neq \pi(ab) = c \in \AF \cup \os 1$  with  $ab\in L(p,q)$ but $(p,c,q)\not\in \del'$, 
replace $\del'$ by $\del'\cup \os{(p,c,q)}$. 

This process takes at most $\abs Q^2(\abs {\AF} +1)$ steps before it terminates; and it produces 
an NFA $\cA'$ as desired. 

In order to show that $\Rat(\F)$ forms an effective Boolean algebra, it is enough to show that it is effectively closed under complementation. Therefore let 
$\cA''$ be an NFA accepting the complement $\AF^*\sm L(\cA')$.  
Since $\F$ is a regular subset of $\AF^*$, 
the set $L(\cA'') \cap \F$ is regular, hence rational. We have $\pi(L(\cA'') \cap \F) = \F\sm R$. Thus, $\F\sm R\in \Rat(\F)$ since $\F\sm R$ is the homomorphic image of a rational set.
 \qed
 \end{proof}
 
\prref{prop:benoisnonclassic} is crucial: it is the (only) justification for our convention that rational constraints $X \in R$  for $\F$ are specified by $X \in \oi \rho(m)$ where $\rho:\AF^* \to N$ is a \hom to a finite monoid $N$  and $m \in N$. Recall that for a given \morph
$\sig: \OO \to \AF^*$ the semantics is $\rho \sig (X) =m$. It is essential to have a one-to-one correspondence between $\Rat(\F)$ and the set
$\set{L \in \Rat(\AF^*)}{L \sse \F}$, and this is induced by the \hom 
$\pi: \AF^* \to \F$. 

\subsection{From $\F$ to the free monoid $\AF^*$ with \invol.}
EDT0L languages are closed under finite unions. Making non-determinstic
guesses (which cover all cases) and  
pushing negations to atomic formulas we may assume without restriction that 
the input $\Phi$ is given a conjunction of 
atomic formulas of either type: $U= V$, $U \neq V$, $X \in \oi \rho(m)$, and $X \notin \oi \rho(m)$. 
Recall that $\rho: \AF^* \to N$ is a \hom to a finite monoid.
Since we may assume that $N$ has at least two elements, we can replace 
each subformula $X \notin \oi \rho(m)$ by $X \in \oi \rho(m')$, where 
$m \neq m'$ since, by definition, $X \in \oi \rho(m)$ refers to an evaluation over $\AF^*$ and not over $\F$. 

Concerning $U= V$ and $U \neq V$, we use standard triangulation and obtain the following equations and  inequalities: 
$X\neq Y$, $X= yz$, and $X =1$ with $X,Y \in \OO$ and $\abs x = \abs y = 1$. Without restriction we can assume that $\oi \rho (1)= \os 1$, hence
we can replace $X =1$ by the constraint $X \in \oi \rho(1)$. Thus
only equations $X= yz$ and inequalities 
$X\neq Y$ remain.

Next, we follow a  well-known procedure for replacing these equations and inequalities over $\F$ by equivalent formulas over $\AF^*$, using equations and rational constraints, only. 
For an equation $X=yz$ we use the following equivalence, which is true for all reduced words $x,y,z \in \F \sse \AF^*$: 
\begin{align*}
x =\pi(yz) \iff& \exists P,Q, R 
\,  \exists a,b,c \in \AF\cup \os{1}:\, R \in U(\F) \wedge \\ &
a = \pi(bc)\wedge
x = PaQ \wedge y = PbR \wedge z = \ov R c Q \wedge R \in U(\F).
\end{align*}
For a ``visual'' proof of this equation see Figure~\ref{fig:xyzplain}. 
 \begin{figure}[t]
\begin{center}
\begin{tikzpicture}[scale=1.6, outer sep=0pt, inner sep = 1pt, node distance = 8pt]
 \node (x) at (0,0) {};   
 
  \node[circle, fill] (d) at (90:0.3) {};   
   \node[circle, fill] (e) at (210:0.3) {};   
  \node[circle, fill] (f) at (330:0.3) {};   
  
 \node[circle, fill] (a) at (90:2) {};
 \node[circle, fill] (b) at (210:2) {};
 \node[circle, fill] (c) at (330:2) {};
 
 \begin{scope}[very thick,decoration={
    markings,
    mark=at position .5 with {\arrow{>}}}] 
    
 \draw[thick, postaction={decorate}] (d) -- node  {} node[above right =1] {$c$} (f); 
 \draw[thick, postaction={decorate}] (e) -- node {} node[below =1] {$a$} (f); 
  \draw[thick, postaction={decorate}] (e) -- node  {} node[above left =1] {$b$} (d); 
      
 \draw[thick, postaction={decorate}] (a) -- node[left= 5]  {} node[right =5] {$\ov R=R^{-1}$} (d); 
   \draw[thick, postaction={decorate}] (b) -- node[below right= 3]  {} node[above left =3] {$P$} (e);
 \draw[thick, postaction={decorate}] (f) -- node[above right = 3]  {$Q$} node[below left=3] {} (c);
 \end{scope}
 
 \node[below left of  = b, node distance = 10pt] {$1$};
 \node[below right of  = c, node distance = 10pt] {$x= yz$};
 \node[above of  = a] {$y$};
\end{tikzpicture}
\caption{Part of the Cayley graph of $\F$:
the geodesic triangle to an equation $x=\pi(yz)$.}
\label{fig:xyzplain}\end{center}
\end{figure}

For an inequality $X\neq Y$ we use the following equivalence, which is true for all  words $x,y\in \AF^*$ (it is here that we use the assumption that $\F$ is infinite):
\begin{align*}
x,y\in &\F  \wedge x \neq y \iff \\  &\exists P,Q,R 
\,  \exists a,b,c \in \AF:
b\neq c \wedge
xa = PbQ \wedge ya = PcR \wedge xa\in \F \wedge ya \in \F.
\end{align*}

\begin{remark}\label{rem:whereweare} 
We have shown that it is enough to prove Theorem~\ref{thm:procentral} in the special case where $\F = \AF^*$ is a free monoid with involution. In particular, all words are reduced. The formula $\Phi$ is a conjunction of triangular equations $X=yz$ and of constraints $X \in \oi \rho(m)$. The main remaining obstacle is that $\AF$ might contain self-involuting 
reduced words. In particular, there can be a letter $a$ with 
$a = \ov a$, and  even if there is no such letter then still $a\ov a$ is a self-involuting reduced
word. \end{remark}

\subsection{How to remove self-involuting letters}\label{sec:tick}
The alphabet $\AF$ is an alphabet with involution, denoted here 
by $\wt{\phantom{a}}$. As we are in the general situation 
we might have $\wt a = a$ for some letter $a\in \AF$. 
Choose some subset $\Apos \sse \AF$ such that the size of $\Apos$ is minimal while satisfying $$\AF = \Apos \cup \set{\wt a }{a \in \Apos}.$$ 
By minimality, we have 
$$\Apos \cap \set{\wt a }{a \in \Apos} = \set{a \in \AF}{a = \wt a}.$$
We let $\Aneg$ be a disjoint copy of $\Apos$. We can write $\Aneg = \set{\ov a }{a \in \Apos}$ and then  $\ov {\ov a} =a$ makes $\Apone = \Apos \cup\Aneg$ into an alphabet 
with involution $\mathinvol$ without self-involuting letters. 
Now we encode words over $\AF^*$ as words over $\Apone^*$ via a \morph $\iota:\AF^*\to \Apone^* $, where $\iota(a) = a$ and $\iota(\wt a) = \ov a$ if $\wt a \neq a \in \Apos$, and 
 $\iota(a) = a\ov a$ if $\wt a = a$. Note that $\iota$ respects the \invol. 
 (The \morph $\iota$ is a \emph{code} because $\iota(\AF)$ is the basis of a free submonoid in $\Apone^*$.)
 We also define a \hom $\eta:\Apone \to \AF$ in the other direction 
 by defining $\eta(a)$ and $\eta(\ov a)$ for $a \in \Apos$ as follows. 
 We let $\eta(a) = a$ and if $a \neq \wt a$ then $\eta(\ov a) = \wt a$.
 If, however, $a = \wt a$, then $\eta(\ov a) = 1$. The \hom $\eta$ does not respect the involution, but this does no harm. Note also that $\eta$ erases letters. However, the restriction of  $\eta$ to a subset of $\iota(\AF^*)$ injective.

 We transform $\Phi$ in \prref{rem:whereweare} as follows.
 An equation $U=V$ is replaced by $\iota(U) = \iota (V)$, and 
 every variable $X$ receives an additional constraint $X\in \iota(\AF^*)$.
 Note that $\iota(\AF^*)$ is a regular subset in $\Apone^*$  and we 
 can recognize $\iota(\AF^*)$ by $\rho_{\AF}: \Apone^* \to \N_{\AF}$, where $N_{\AF}$ has size $\Oh(n^2)$. The precise definition of $\rho_{\AF}$ is natural and left to the reader. 
 Previous constraints $X \in \oi \rho(m)$ are replaced by 
 $X \in \oi \eta(\oi \rho(m))$. Thus, we have transformed 
 $\Phi$ over $\AF^* \cup \OO$ into a new formula $\Phi'$ over $\Apone \cup \OO$. Moreover, the construction guarantees that $\eta$ defines a bijection
 between $\cSol(\Phi')$ and $\cSol(\Phi)$. Thus, if 
 $\cSol(\Phi') = \set{\phi(\#)}{\phi \in \L(\cA')}$, then we find another 
 NFA $\cA$ which has just one more state than $\cA'$, and we obtain:
 $$\cSol(\Phi) = \eta(\cSol(\Phi') = \set{\eta \phi(\#)}{\phi \in \L(\cA')}
 = \set{\psi(\#)}{\psi \in \L(\cA)}.$$
 
\begin{remark}\label{rem:wherewearenow} 
By the equation above, \prref{rem:whereweare}, and the last transformations, 
 it is now enough to prove Theorem~\ref{thm:procentral} in the special case where $\F = \Apone^*$ is a free monoid with involution and $\Phi$ is a conjunction of equations and constraints of the form $X \in \oi \rho(m)$. 
 Moreover, $\Apone$ is a set where the involution has no fixed points. 
\end{remark}
\subsection{How to force recognizing \homs to respect the \invol}\label{sec:trick}
Assume the \hom $\rho: \AF^*\to N$ defining the rational constraint in \prref{rem:whereweare} was in fact a \morph to a finite monoid with involution. This property could have been lost during the transformation leading to \prref{rem:wherewearenow}. 

We now replace a  recognizing \hom by a recognizing \morph to a finite monoid with \invol. 
We show that there is a natural embedding of monoids (with or without \invol) into 
monoids with \invol.  If $M$ is any monoid, then we define its dual monoid $M^T$ to be based on the same set $M^T=M$ as a set, where $M^T$ is equipped with a new multiplication $x \circ y= yx$. 
In order to indicate whether we view an element in the monoid $M$ or $M^T$, we use a flag: 
for $x \in M$ we write $x^T$ to indicate the same element in $M^T$. Thus, we can suppress the symbol $\circ$ and we simply write 
$x^T  y^T= (yx)^T$. The notation is intended to mimmick transposition in matrix calculus. 
Similarly, we frequently write $1$ instead of $1^T$ which is true for the identity matrix as well. Now the 
direct product $M\times M^T$ becomes a monoid with involution 
by letting $\ov {(x,y^T)} = (y,x^T)$. Indeed, 
$$\ov {(x_1,y_1^T) \cdot (x_2,y_2^T)} 
= (y_2 y_1, (x_1x_2)^T) 
= \ov {(x_2,y_2^T)}\cdot \ov {(x_1,y_1^T)}.$$

The following items are essential.
\begin{itemize}
\item If $M$ is finite then $M\times M^T$ is finite, too. 
\item We can embed $M$ into  $M\times M^T$ by a \hom $\iota: M \to M\times M^T$ defined by $\iota(x) = (x,1)$. Note that
if $\eta: M\times M^T\to M$ denotes the projection onto the first component, then $\eta \iota = \id{M}$.
In particular, every \hom $\rho: M \to N$ of monoids factorizes through $\iota \rho: M \to N\times N^T$. We have $\rho = \eta \iota \rho$. 
\item If $M$ is a monoid with involution and $\rho: M \to N$ is a \hom of monoids, then 
we can lift  $\rho$ uniquely to a \morph  $\mu: M \to N\times N^T$ of  monoids with \invol such that 
we have $\rho = \eta \mu$. Indeed, it is sufficient and necessary to define 
$\mu(x) = (\rho(x),\rho(\ov x)^T)$.
\end{itemize}
\begin{example}[\cite{dgh05IC}]\label{ex:dgh05IC}
Let $M= \Bn$. Then  $M\times M^T= \Bn\times \left({\Bn}\right)^T$ 
is a submonoid of the set of $2n \times 2n$-Boolean matrices: 
$$\Bn\times \left({\Bn}\right)^T= \set{\vdmatrix P00{Q^T}}{ P,Q \in \Bn} \text { with }  
\overline{\vdmatrix  {P}00 {Q^T}}  =
\vdmatrix {Q}00{P^T}. 
$$
In the line above $P^T$ and $Q^T$ are the transposed matrices. 
\end{example}

\begin{remark}\label{rem:wherewearenowi} 
It is enough to prove Theorem~\ref{thm:procentral} in the special case where $\F = \Apone^*$ is a free monoid with involution and $\Phi$ is a conjunction of equations and constraints of the form $X \in \oi \mu(m)$,
where $\mu: \Apone^* \to N$ is a \morph between monoids with \invol. 
Moreover, $\Apone$ is a set where the involution has no fixed points. 
\end{remark}

\subsection{The initial equation $\Winit$}\label{sec:track}
Recall that, due to \prref{rem:wherewearenowi}, we may assume that 
we are in the special situation where $\F = \Apone^*$  and 
$\Apone$ is a set where the involution has no fixed points.
The next few steps are quite similar to those in the proof of \prref{thm:freecentral}. 
We introduce a special symbol $\#$ with $\ov \# = \#$, and we let 
$A = \Apone \cup \os \#$ be the initial alphabet with involution. 
In this alphabet no other symbol except $\#$ is self-involuting. 

All rational constraints are given by a \morph $\mu_{00}:\Apone^{*}\to N$ where $N$ is a finite monoid with 
involution which has a zero. 
Without restriction we assume that $\oi{\mu_{00}(1)} \sse \os 1$ and that 
for all $w \in \Apone^*$ we have 
$\mu_{00}(w ) \neq 0 \iff w \in \iota (\AF)$, where $\iota$ is the embedding of  $\AF$ into $\Apone$ {}from above. 
We extend $\mu_{00}$ to a \morph $\mu_0: A^* \to N$ by $\mu_0(\#) = 0$. 
We then extend $\mu_0$ to a \morph $\muinit: (A\cup \OO)^* \to N$
by guessing the values for $0 \neq \muinit(X)= \ov{\muinit(\ov X)} \in N$. In the following $\muinit(x)$ changes frequently for symbols outside of $A$, thus we use $\mu,\mu'$ as generic notation. However, the following will hold throughout: 
$\mu_0(a)=\muinit(a) = \mu(a)= \mu'(a)$ for all $a \in A$. The starting point is now a system of equations $U_i= V_i$ with $U_i, V_i \in (\Apone\cup \OO)^*$
for $1 \leq i \leq s$ and the \morph $\muinit: (A\cup \OO)^* \to N$. 
We don't care that $\abs{U_iV_i}= 3$ anymore. 

We let $U'= U_{1}\# \cdots \# U_{s}$ and 
$V'= V_{1}\# \cdots \# V_{s}$. Therefore, we have only one single equation. 
Next, we define the actual initial equation $\Winit$ as in (\ref{eq:Winit}).
\begin{equation}\label{eq:Winiti}
\Winit=  \#x_1\#\dots \#x_\ell \# U'\# V'\#\ov {U'}\# \ov {V'}\# \ov{x_\ell}\#\dots \#\ov{x_1}\#.
\end{equation}
In particular, according to (\ref{eq:Winit})  we have  $\Apone \cup \OO = \os{x_{1}\lds x_{\ell}}$ with $x_{i}= X_{i}$ for $1 \leq i \leq k$.
The overall strategy to proving \prref{thm:procentral} is as before -- we show that the following language is EDT0L:
$$\set{\sig(X_1)\# \cdots \#\sig(X_k)}{
\sig(\Winit) = \sig(\ov{\Winit}) \wedge \muinit= \mu_{0}\sig \wedge 
\forall X: \sig(X) = \sig(\ov{X}) }.$$
 Recall that  for every \morph $\sig: \OO \to \Apone$ we have 
 $$\sig(U')= \sig(V') \iff \sig(\Winit) = \sig(\ov \Winit).$$
 We fix some large enough $n = \ninit  \in \Oh(\abs {\Winit})$ and alphabet
$C$ of size $\Oh(n)$. We assume that $A \sse C$ and no other symbol than $\#\in C$ is self-involuting.
In contrast to the above we content ourselves with 
$\abs C \in \Oh(n)$; we leave it to the reader to calculate large enough constants.

As usual we let $A \sse B , B' \sse C$ and we assume that $B$ 
and $B'$ are closed under involution. By $\cX$, $\cX'$ we denote subsets of $\OO$ which are closed under involution, too. Moreover, we let $\Sig= C \cup \OO$.

The letter $\mu$ refers to a \morph $\mu: (B\cup \cX)^* \to N$ defined
by $\mu: B\cup \cX \to N$ where $N$ is a finite monoid with involution.
We assume that $\mu(a) = \mu_0(a)$ for all $a \in A$. The \morph $\mu$ encodes the rational constraints: in particular, using the appropriate $\mu$ we can guarantee that solutions are in $\iota(\F)$. 


\subsection{Partial commutation induced by a type relation}\label{sec:track}
The definition of type relations appearing in $\cG$ was more general than necessary. For the proof of \prref{thm:procentral} we restrict the type relations 
in order to focus on what we need, because we use now compression arcs 
which have have not been used in \prref{thm:freecentral}. 
The main difference between the proofs of \prref{thm:freecentral} and  \prref{thm:procentral} is that reduced words may have self-involuting factors of the form $a\ov a$ and that compression of $a\ov a$ into a letter would lead to self-involuting letters, which we must avoid. 
The solution is simple: never compress any factor $a\ov a$. It will be enough to compress factors $(a\ov a)^\ell$ for $\ell \geq 2$ down to a self-involuting word of length 2.

We restrict the rather general notion of {\em type relation}
over $\Sig = C\cup \OO$ as follows.  Since it is the restriction of the earlier definition,  $\theta$ is an irreflexive and antisymmetric relation, where $(x,y)\in \theta$ implies $(\ov x,\ov y)\in \theta$. In addition, we require that
first, $(x,y)\in \theta$  implies $y \in \os{c,\ov c, c \ov c}$ for some $c \in C$ and that
there is no $(x',y')\in \theta$  with $x \in \os{c,\ov c}^*$, and second,  
 only the following two forms of type relations $\theta$ are allowed,
where $c\in C\sm \os{\#}$ is a letter. 
\begin{align}
\theta &\sse \set{(x,c), (\ov x, \ov c)}{x \in \Sig\sm \os{c, \ov c}}. \label{eq:I1} 
\\
\theta &\sse \set{(X,c \ov c)}{X \in \OO} \cup \set{(a \ov a,c \ov c)}{a \in C\sm \os{c, \ov c}}. \label{eq:I2}
\end{align}
Moreover, $\abs{\theta(x)}\leq 1$ where $\theta(x) = \set{y\in C^+}{(x,y)\in \theta}$.
Clearly, we maintain  $\abs \theta \in \Oh(n)$, which allows us to store $\theta$ in \qls.  
Given $\theta$ and $\mu: B\cup \cX \to N$ such that 
$\mu(xy) = \mu(yx)$ for all $(x,y) \in \theta$ we define as above the following two partially commutative monoids with \invol. 
\begin{enumerate}
\item $M(B,\cX,\theta,\mu) = (B\cup \cX)^*/\set{xy = yx}{(x,y) \in \theta}$, a monoid with a
\morph $\mu: M(B,\cX,\theta,\mu) \to N$.
\item $M(B,\theta, \mu) = B^*/\set{xy = yx}{(x,y) \in \theta}$, a submonoid of
$M(B,\cX,\theta, \mu)$ such that $$\mu:M(B,\theta, \mu) \hookrightarrow M(B,\cX,\theta, \mu)
\arc{\mu} N.$$
\end{enumerate}
Let us recall that if $w\leq W\in M(B,\cX,\theta, \mu)$, then $w$ is called a {\em proper
factor} if $w \neq 1$ and $\abs{w}_\# = 0$ and that, since  the defining relations for $M(B,\cX,\theta, \mu)$ are of the form $xy=xy$,
we can define $\abs W$ and $\abs{W}_a$ for $W\in M(B,\cX,\theta, \mu)$ by representing $W$ by some word $W\in (B\cup \cX)^*$. 
It follows that we can decide $w\leq W$ in \qls for $w,W \in M(B,\cX,\theta, \mu)$.

As above, $W\in M(B,\cX, \theta, \mu)$ is {\em well-formed} if it is well-formed according to \prref{def:wellf2}.
We repeat the definition of an extended equation and apply it to the restricted version of type relation. 
\begin{definition}\label{def:extequat2}
 An {\em extended equation} is a tuple $V= (W,B,\cX,\theta,\mu)$ 
where $W \in M(B,\cX, \theta, \mu)$ is well-formed. 
A {\em $B$-\solu} of $V$ is a $B$-\morph $\sig:M(B\cup\cX,\theta,\mu)\to M(B,\theta,\mu)$ such that 
$\sig(W) = \sig(\ov W)$ and $\sig(X) \in c^*$ whenever $(X,c)\in \theta$.
A {\em \solu} of $V$ is a pair  $(\alp,\sig)$  such that 
$\alp: M(B,\theta,\mu) \to  A^*$ is an $A$-\morph satisfying $\mu_{0}\alp = \mu$ and $\sig$ is a $B$-\solu.  
\end{definition}


\goodbreak
\subsection{The directed labeled graph $\cG_\F$.}\label{subsec:myG2}

Define the graph $\cG_\F$ to be the induced subgraph of $\cG$ 
which is defined by the set of all extended equations 
$(W,B,\cX,\theta,\mu)$ where $\theta$ satisfies the specification as above. 
(If necessary, adopt the constant $\kappa$ to be large enough, say $\kappa = 100$.)  
In particular, $\theta$ is either of the form (\ref{eq:I1}) or (\ref{eq:I2}).
The restriction is imposed in order to focus on the essential arcs, and
it is allowed to start with $\cG$ as defined originally. In particular, 
we keep the set of {\em initial vertices}, which are the vertices of the 
form $$(\Winit,A,\OO,\es,\muinit).$$ The set of {\em final vertices} 
is again 
$$
\set{(W,B,\es,\es,\mu)}{W= \ov W}.
$$ 
All arcs in $\cG$ that are between vertices of $\cG_\F$ are also arcs in 
$\cG_\F$, since we consider the induced subgraph. In particular,  
\prref{prop:backandforth} still holds, and states the following.

\begin{proposition}\label{prop:backandforth2}
Let $V_{0}\arc{h_1} V_1 
\cdots \arc{h_{t}} V_t$
be a path in $\cG_\F$ of length $t$, where $V_0= (\Winit, A, \OO,\es ,\muinit)$
is an initial and $V_t=(W',B,\es,\es,\mu)$ is a final vertex. 
Then $V_{0}$ has a solution $(\id{A},\sig)$ with
 $\sig(\Winit)= h_1 \cdots h_t(W')$. 
 Moreover, 
 we have 
 $W'\in  \#u_{1}\#\cdots \#u_{k}\#B^*$ such that $\abs{u_{i}}_{\#}= 0$ and 
 we can write:
  \begin{align}\label{eq:wondercX} 
h_1 \cdots h_t(u_{1}\#\cdots \#u_{k}) = \sig(X_{1})\#\cdots \#\sig(X_{k}),
\end{align}
\end{proposition}

\begin{proof}
See \prref{prop:backandforth}.
\qed
\end{proof}

\section{General compression}\label{sec:comp2}
We can now give the proof of \prref{thm:procentral}, following the same scheme as for free groups and the proof of \prref{thm:freecentral}.

Consider  an initial 
vertex  
$V_{0} = (\Winit, A, \OO,\es ,\muinit)$ 
with a \solu  $(\alp,\sig)$. We show that ${\cG}_{\F}$ contains a 
path $V_{0}\arc{h_1} V_1 
\cdots \arc{h_{t}} V_t$ to some final vertex 
$V_t=(W',B,\es,\es,\mu)$ such that 
$\sig(\Winit) = h_{1}\cdots h_{t}(W')$. 
We show the existence of the path using a repetition  
of the sequence:
\begin{center}
``block compression'', ``non-standard block compression'', ``pair compression''.
\end{center}
Let us recall that the scheme is repeated until we reach a final vertex and that the procedures use some external knowledge about solutions. We proceed along arcs $V \arc h V'$ in ${\cG}_{\F}$ thereby transforming a solution 
$(\alp, \sig)$ to $V$ into a solution $(\alp', \sig')$ to $V'$ such that we keep the invariant $\alp\sig (W) = \alp'h\sig'(W')$.
\subsection{Standard block compression}\label{sec:block2}
The procedure has been described above in the main body of the paper as 
{\em block compression}. 
We start at a non-final vertex $V= (W,B,\cX,\es, \mu)$ with a \solu 
$(\alp, \sig)$ with $\abs B \leq \abs W \in \Oh(n)$. We move along arcs 
satisfying the forward condition and we arrive 
at a vertex $V'= (W',B',\cX',\es, \mu')$ with a \solu 
$(\alp', \sig')$. We have $\abs B' \leq \abs W' \in \abs W +\Oh(n)$, so there is a possible increase by $\Oh(n)$ in the length of the equation,
but we know that $W'$ does not contain any proper factor $b^2$ with 
$b \in B'$. At the end of the standard block compression we rename 
$V'= (W',B',\cX',\es, \mu')$ and $(\alp', \sig')$ as $V= (W,B,\cX,\es, \mu)$
$(\alp, \sig)$, but we keep in mind the increase of length by $\Oh(n)$.

After that, we start the non-standard block compression to remove all factors 
$a\ov a a$ from $\sig(W)$. This is explained next. 
\subsection{Non-standard block compression}\label{sec:nsblockc}
We follow the explanation and notation according to the main body of the paper. We consider some non-final vertex $V = (W,B,\cX, \es, \mu)$ with an empty type relation and a \solu $(\alp,\sig)$. 
Let $B\sm \os \# = \Bpos \cup \Bneg$ be any partition such that 
$b\in \Bpos \iff \ov b \in \Bneg$. Recall that $a \ov a$
could be a reduced word. As $a \ov a$  is a self-involuting word we cannot compress it into a single letter $c$ because then the letter $c$ is forced to be self-involuting, 
since compression of a factor $a \ov a = \ov{a \ov a}$ must be unambiguous.
Note that $a \ov a$ has no non-trivial self-overlap since $a \neq \ov a$

Let $c \in C\sm B$ and 
$h:C^*\to C^*$ be the renaming \hom defined by $h(c)=a$ and
$h(\ov c)=\ov a$. 
Let $w \in B^*$ be any word. Then there is a unique word $w' \in B^*$
such that $w = h(w')$ and $w'$ does not have any factor $a\ov a$ though it may have a factor $\ov a a$. 
The word $w'$ can be obtained by replacing every occurrence of 
$a\ov a$ by $c\ov c$. The word $w'$ is unique because $a \ov a$ has no non-trivial self-overlap.

 Let us  introduce the following notation. For  $a \in C$, $w \in \Sig^* = (C \cup \OO)^*$, and $\lam \geq 1$ such that 
$(a \ov a)^\lambda$ is a factor of $w$. 
We  say that an occurrence of $(a \ov a)^\lambda$ in $w$ is {\em maximal} if the occurrence corresponds to a factorization $w = u (a \ov a)^\lambda v$ such that neither $u \in  \Sig^*a \ov a$ nor 
$v\in a \ov a \Sig^*$. This means that at least one occurrence of $(a \ov a)^\lambda$ in $w$ is not contained in any occurrence of a factor 
$(a \ov a)^{\lambda+1}$. 
 Note that $(a \ov a)^1$, $(a \ov a)^2$, $(a \ov a)^3$ may be factors in some $w$, both $(a \ov a)^1$ and  $(a \ov a)^2$
have maximal occurrences, but $(a \ov a)^3$ does not: for example, 
$w = (a \ov a)^1\#(a \ov a)^2\#(a \ov a)^4$. 
Note also that 
$a \ov a$ has a maximal occurrence in $w= \ov a a \ov a a$. 

We can  repeat, partly verbatim, the standard block compression, but there are 
slight modifications, and we divide the procedure into smaller steps.

\begin{remark}
Before we describe the procedure in mathematical terms,  let us try to give a a high level  explanation what  
a non-standard block compression does. The basic idea is simple. 
In order to avoid self-involuting letters we cannot compress $c \ov c$ into a single letter, but we can compress $c \ov c c \ov c$ into the word  $c \ov c$. This means we can compress maximal blocks $(c \ov c)^{2\ell}$ into blocks $(c \ov c)^{\ell}$. 
The compression must correspond to \morph{s}. This is fine: the \morph  $c \mapsto c \ov c$ maps $c \ov c$ to $c \ov c c \ov c$.
But then we can continue the same way only if $\ell$ is even.
Therefore there is some extra work necessary if $\ell$ becomes odd. 
If there is a maximal block $(c \ov c)^{\ell}$ with $\ell$ odd then we replace first  $(c \ov c)^{\ell}$ by $c_\lam \ov{c_\lam} (c \ov c)^{\ell-1}$
where $c_\lam$ is a fresh letter.
Once we have  $c_\lam \ov{c_\lam}$ available, we can compress $(c \ov c)c_\lam \ov{c_\lam}(c \ov c)$ into  $c_\lam \ov{c_\lam}$.
The morphism maps $c_\lam$ to $c \ov c c_\lam$. Thus, $c_\lam \ov{c_\lam}$ is mapped to $(c \ov c)c_\lam \ov{c_\lam}(c \ov c)$  which is equal to $c_\lam \ov{c_\lam}(c \ov c)^2$ due to partial commutation. 
 So at the end every maximal block  $(c \ov c)^{\ell}$ where $\ell$ is even or odd gets compressed 
into some $c_\lam \ov{c_\lam}$. One could say that $c_\lam \ov{c_\lam}$ is a
special pair with a sort of marker which always guesses correctly whether it sits ``inside'' some block $(c \ov c)^\ell$ with $\ell \equiv 2 \bmod 4$  or with $\ell \equiv 0 \bmod 4$. This is why counting $\bmod {} 4$ comes in.\end{remark}

\goodbreak
{\noindent \bf begin non-standard block compression}

We begin at vertex $V = (W,B,\cX, \es, \mu)$ with  a \solu $(\alp,\sig)$.
During the process we introduce partial commutation but it vanishes at the end.
After each transformation we rename the current vertex as $(W,B,\cX, \theta, \mu)$ with  a \solu $(\alp,\sig)$. During the $(a\ov a)$-compression 
we enlarge the alphabet and then the standard notation for a vertex 
becomes $(W,B',\cX, \theta, \mu)$.
\begin{enumerate}
\item Follow arcs of type $\df 4$ and $\df 6$ in order to remove all variables with $\abs{\sig (X)} \leq 10$. Thus, without restriction, we have $\abs{\sig (X)} > 10$ for all $X$. 
If $V$ became final, we are done and we stop. 
\item 
For each $X$ we now have $\sig(X)= bw$ for some $b \in B$ and $w \in B^+$. 
Following a substitution arc $\df 6$, we replace $X$ by $bX$, $\ov X$ by $\ov X \,\ov b$ and change $\mu(X)$ to $\mu(X) = \ov {\mu(\ov X)}= \mu(w)$. 
Now, if $bX\leq W$ and $b'X\leq W$  are factors with $b,b'\in B$ then $\#\neq b=b'$ due to the previous substitution $X \mapsto bX$.

\item 
For each $a\in \Bpos$ 
define sets
$\Lambda_a \sse\N$ which contain those $\lam\geq 1$ such that
there is a maximal occurrence of $(a \ov a)^\lambda$ in $\sig(W)$ where at least one of the $a$'s is visible. Note that we treat $a\in \Bpos$ different from $\ov a \in \Bneg$. However, this is not essential here. 
If there is a maximal occurrence of $(a \ov a)^\lambda$ in $\sig(W)$ where at least one of the $a$'s is visible then there is another maximal occurrence of $(a \ov a)^\lambda$ in $\sig(W)$ where at least one of the $\ov a$'s is visible. 
We have $\sum_{a\in \Bpos} \abs{\Lambda_a} \leq \abs W$.
We also let 
$$\cX_a=\set{X\in\cX}{aX \leq W \wedge \sig(X)\in \ov a B^* 
\vee  \ov aX \leq W \wedge \sig(X)\in a \ov a B^*}.$$
Note that if $X\in \cX_{a}$ for some $a \in \Bpos$ then either the 
factor $a \ov a$ or the factor $\ov a a$ or both have a ``crossing'' in $\sig(W)$. 
\item 
For each $\Lambda_a \neq \es$ -- one after another -- run  the following subroutine, called {\em $(a\ov a)$-compression}. 
The purpose is to remove all proper factors $(a \ov a)^\ell$ with $a \in B_+$ and $\ell \geq 1$ from $W$. More precisely, if $(a \ov a)^\ell$ is a maximal occurrence of that factor in $W$ then
the following $(a\ov a)$-compression replaces this occurrence $(a \ov a)^\ell$ by
some factor $c_\lam \ov{c_\lam}$. We do not have $\ell = \lam$ in general, but clearly 
$\abs{(a \ov a)^\ell} \geq \abs {c_\lam \ov{c_\lam}} = 2$.
\end{enumerate}
{\bf end non-standard block compression}
\subsection*{Subroutine $(a\ov a)$-compression.} 
The subroutine is called at a vertex $V = (W,B,\cX, \es, \mu)$ with \solu $(\alp,\sig)$.

{\noindent \bf begin $(a\ov a)$-compression}
\begin{enumerate}
\item  Introduce fresh letters $c_{a}, \ov{c_{a}}$ with $\mu(c_{a})= \mu(a)$. In addition, for each 
$\lam \in \Lambda_a$ introduce fresh letters  $c_{\lam,a}, \ov{c_{\lam,a}}$ with $\mu(c_{\lam,a})= \mu(a)$. Define $h(c_{\lam,a}) = h(c_a) = a$ and introduce a type
by letting $$\theta= \set{(c_{\lam,a}\ov{c_{\lam,a}},c_a\ov {c_a})}{\lam \in \Lam_{a}}.$$
Renaming arcs $\df 1$ realizes this transformation. We did not touch $W$, hence $W= h(W)$, but we enlarged $B$ to some set $B'$ and we introduced partial commutation. We abbreviate $c= c_{a}$, $\ov c= \ov{c_{a}}$, $c_{\lam}= c_{\lam,a}$, and $\ov{c_{\lam}}= \ov{c_{\lam,a}}$.
\item (Change 
$W$ and its solution.) We replace in $\sig(W) \in B^*$ every 
maximal occurrence of a factor $(a \ov a)^\lam$ with $\lam \in \Lam_{a}$ by $(c \ov c)^\lam$. This yields a new word $W' \in {B'}^*$. The transformation can be realized again by a single renaming arc defined by $h(c) = a$ and leading to a solution $(\alp',\sig')$. 

Various $c$ and $\ov c$ appear in $W'$ and $\sig'(W')$.
Note that every $c$ in $\sig'(W')$ is followed by some $\ov c$ and every 
$\ov c$ is preceded by some $c$. 
Some of their positions are visible. For example, $W'$ may have  factors of the form $X\ov cY$ or $X\ov cc\ov c cY$ etc. 
We rename the vertex and its \solu as $(W,B',\cX,\theta,\mu)$ and $(\alp, \sig)$. 
\item  Consider all $X\in \cX_{a}$ where $\sig(X)$ is a factor of some word 
in $(c\ov c)^*$. (For example, $\sig(X) \in \ov c (c\ov c)^* c$.) 
Follow \subst arcs $\df 6$
such that first, for the resulting \solu $\sig'$ we have
$\sig'(X) \in (c\ov c)^{4m}$ for some $m\in \N$ and second, each occurrence of such $X\in \cX_a$ in $W'$ occurs inside an occurrence of
$c\ov c X$. As usual, we rename the vertex and its \solu as $(W,B',\cX,\theta,\mu)$ and $(\alp, \sig)$. 

\item (Typing variables.) 
Enlarge  $\theta$  such  that it becomes
$$\theta = \set{(c_{\lam}\ov{c_{\lam}},c\ov c)}{\lam \in \Lam_{a}}
\cup\set{(X,c\ov c)}{X \in \cX_{a}\wedge \sig(X) \in (c\ov c)^*}.$$
\item 
For all  $X\in \cX_{a}$ where $\sig(X) \notin (c\ov c)^*$  factorize 
$\sig(X) = uv$ such that $u$ is a suffix (possibly empty) of some word in 
$(c\ov c)^*$ and $v \notin c\ov cM(B',\theta,\mu)$. Following more \subst arcs
we can make sure that first, $u \in (c\ov c)^{4m}$ for some $m\in \N$ and
second, every occurrence of such $X$ is $W$  occurs as a factor $c\ov c X$. This is due to the definition of $\cX_{a}$. 

After renaming we have either 
$\sig(X)\in ((c \ov c)^4)^*$ or $\sig(X)\in ((c \ov c)^4)^* v ((c \ov c)^4)^*$ with $1 \neq v \notin c\ov cM(B',\theta,\mu)\cup M(B',\theta,\mu)c\ov c$.
\item 
Remove all variables with $\sig(X)= 1$ and rename the current vertex
as $(W,B',\cX,\theta,\mu)$. 
\item 
Call the following  while loop, called 
{\em $\Lam_a$-compression}. Repeat the loop until $\Lam_a = \os 0$. 
In the beginning of the loop we have  $\Lam_a \neq \es$ and we don't have $0\in \Lam_a$, but the number $0$ sneaks in. Actually we will have that $\Lam_a = \os 0$ \IFF 
 and there are no more factors $c\ov c$.
Note that there at least as many subsets 
 $\os{c_{\lam},\ov{c_{\lam}}}$ available as there are numbers in $\Lam_a$. 
 During the following process we will mark some $c_{\lam}$, $\ov{c_{\lam}}$ 
 to make sure that we use each set $\os{c_{\lam},\ov{c_{\lam}}}$  only once.
 At the beginning all $c_{\lam}$ are unmarked.\\ 
 
{\bf begin $\Lam_a$-compression}\\
As above, we realize each transformation via arcs satisfying the forward property. 
We continue to denote, by default,  each current vertex as $(W,B',\cX,\theta, \mu)$ and the current solution as $(\alp,\sig)$.

The first while loop is realized as a path in $\cG_\F$ by renaming arcs with label $h(c_\lam) =c$; the second one uses \subst arcs with label 
 $h(c_\lam) =c\ov c c_\lam$. \\

Let $\Lam = \Lam_{a}$.\\
{\bf while} $\Lam \neq \os 0$ {\bf do}
\begin{enumerate}
\item {\bf while} there is $\ell \in \Lam$ such that $\ell$ is odd {\bf do}
\begin{itemize}
\item let $\ell$ be largest odd number in $\Lam$;
\item choose an unmarked letter $c_\lam$ and mark $c_\lam$ and $\ov{c_\lam}$;
\item replace every maximal occurrence of $(c\ov c)^\ell$ by $c_{\lam}\ov{c_{\lam}} (c\ov c)^{\ell-1}$;\\ 
whenever possible make the factor $c_{\lam}\ov{c_{\lam}}$ visible;\\
(we have $c_{\lam}\ov{c_{\lam}} (c\ov c)^{\ell-1} = (c\ov c)^{\ell_1}c_{\lam}\ov{c_{\lam}} (c\ov c)^{\ell_2} \in M(B',\theta,\mu)$ for all $\ell_1 + \ell_2= \ell-1$, hence $c_{\lam}\ov{c_{\lam}}$ can be chosen to be visible for an occurrence of $(c\ov c)^\ell$ unless no position of that occurrence is visible)
\item replace $\Lam$ by $(\Lam \sm \os \ell) \cup \os{\ell -1}$.
\end{itemize}
{\bf end while}

Note that now, if $(c \ov c)^\ell$ is a maximal occurrence in $\sig(W)$ then 
$\ell \in \Lam$ and all $\ell \in \Lam$  are even.

\item {\bf while} $\sig(W)$ has a maximal occurrence of some factor $c_{\lam}\ov{c_{\lam}} (c\ov c)^{\ell}$ with $\ell \equiv 2 \bmod 4$ {\bf do}
\begin{itemize}
\item let $\ell$ be largest number in $\Lam$ such that there is a maximal occurrence of the factor $c_{\lam}\ov{c_{\lam}} (c\ov c)^{\ell}$ with $\ell \equiv 2 \bmod 4$;
\item replace every occurrence of $c_{\lam}\ov{c_{\lam}} (c\ov c)^{\ell}$ by $c_{\lam}\ov{c_{\lam}} (c\ov c)^{\ell-2}$;\\
realize this transformation via some substitution arc with label $h(c_\lam) =c\ov c c_\lam$; note that $h(c_\lam\ov{c_\lam}) =c\ov c c_\lam \ov{c_\lam}c \ov c =  c_\lam \ov{c_\lam} (c\ov c)^2)\in M(B',\theta,\mu)$
\item replace $\Lam$ by $\Lam \cup \os{\ell -2}$;
\item if $\sig(W)$ does not contain any maximal occurrence of  $(c\ov c)^{\ell}$ then replace $\Lam$ by $\Lam \sm \os {\ell}$.
\end{itemize}
{\bf end while}

Note that the first two while loops did not use any \subst arc since no ``uncrossing'' was necessary. Moreover, now all $\ell \in \Lam$ are even and if there is a maximal occurrence of some factor $c_{\lam}\ov{c_{\lam}} (c\ov c)^{\ell}$ then we have $\ell \equiv 0 \bmod 4$. 
Thus dividing such value $\ell$ by two keeps this value even. 
\item Follow a substitution arc with label $h(c) =c\ov c$ in order to replace all 
maximal occurrences of factors $(c\ov c)^{\ell}$ with $\ell \in \Lam$ by 
$(c\ov c)^{\ell/2}$;\\
replace $\Lam$ by the set $\set{\lam/2}{\lam \in \Lam}$;\\
(note that $\Lam$ may have odd numbers, again)

\item Remove all variables $X$ with $\abs{\sig(X)} \leq 10$. 
\item For each of the remaining $X\in \cX_{a}$ there is a unique factorization  
$\sig(X)= uv$ such that $u \in (c\ov c)^*$ and $v \notin (c\ov c)^*M(B',\theta,\mu)$.
Using \subst arcs we may assume that $u \in (c\ov c)^{4m}$ for some $m \in \N$. 
\end{enumerate}
{\bf end while}\\
{\bf end $\Lam_a$-compression}
\item \label{num524}
Let $B= B'\sm\os{c,\ov c}$. Since $\Lam_a= \os 0$ after at the end of the $\Lam_a$-compression,  no $c$ or $\ov c$ appears in $\sig(W)$: they are all compressed into single letters $c_{\lam}$ or $\ov{c_{\lam}}$. Moreover, for  $x \in B\cup \cX$ we have 
$\theta(x) =\es$. Hence we can follow an alphabet reduction arc $(W,B',\cX,\theta,\mu) \arc \eps (W,B,\cX,\es,\mu)$. The new solution to $(W,B,\cX,\es,\mu)$ is the pair 
$(\alp',\sig)$ where $\alp'= \alp \eps$ is defined by the restriction 
of $\alp$ to the free monoid $M(B,\es,\mu')$. We rename the current vertex and its solution as
$(W,B,\cX,\es,\mu)$ and $(\alp,\sig)$.
\end{enumerate}
{\bf end $(a\ov a)$-compression}\\

Having performed one round of a non-standard block compressions, we have increased the length of $W$ by at most $\Oh(n)$.  We end up at a vertex named again as $V= (W,B,\cX,\es,\mu)$ which has a solution
$(\alp, \sig)$. The difference is that $W$ has no proper factor $a \ov a a $ with $a \in B$ anymore. 

\goodbreak
\subsection*{Repeat: 1.~standard block compression, 2.~non-standard block compression, 3.~pair compression until $\cX = \es$}\label{sec:coro}
The title of this section explains what we do. We repeat rounds 
of 1.~standard block compression, 2.~non-standard block compression, 3.~pair compression. After the non-standard block compression the word $W$ does not contain any proper factor of the form $a^\ell$ with $\ell\geq 2$ or
$a\ov a a$ where $a\in B$. Thus, at the beginning of pair compression a proper factor $u$ of $W$ of length three looks as 
$u = abc$ with either $c \notin \os{b,\ov b}$ or $a \notin \os{b,\ov b}$.

To see that, indeed, all factors $a\ov a a$ vanish,
consider for example the case that before (and after) the standard block compression 
the word $W$ contains a factor 
$ba\ov a ab$ with $a \neq b$. After that the non-standard block compression uses renaming. It changes this factor either to 
$bc\ov c ab$ (if $a \in B_+$) or to $ba\ov c c b$ (if $\ov a \in B_+$); and at the end of the non-standard block compression this factor appears  either as $bc_1\ov c_1 ab$ or as $ba\ov c_1 c_1 b$.

The pair compression we proceed according to \prref{sec:pair}, but we never compress any pair $a\ov a$. Now consider again $u \leq W$ with $u = abc$ and either $c \notin \os{b,\ov b}$ or $a \notin \os{b,\ov b}$.
By symmetry, we may assume $a \notin \os{b,\ov b}$. Now the probability 
that $ab$ is compressed is 
$\Prob{a \in L \wedge b\in R}=\Prob{a \in L}\cdot  \Prob{b\in R} =  \frac14$. It follows that the expected length 
of that factor $u$ after pair compression is at most $3 \cdot \frac34 + 2\cdot \frac14 = \frac{11}4$ which is less than $3$. Thus, we obtain a recursion of type 
$s(1) \in \Oh(n)$ and $s(i+1) \leq q \cdot s(i) + \Oh(n)$ 
for all $i \in \N$ where $q = \frac{11}{12} < 1$. Such a recursion implies 
 $s(i) \in \Oh(n)$ for all $i \in \N$. This proves 
 \prref{thm:procentral}. 
\qed

\bibliographystyle{abbrv}
\bibliography{traces}

\newcommand{\Ju}{Ju}\newcommand{\Ph}{Ph}\newcommand{\Th}{Th}\newcommand{\Ch}{Ch}\newcommand{\Yu}{Yu}\newcommand{\Zh}{Zh}\newcommand{\St}{St}\newcommand{\curlybraces}[1]{\{#1\}}
\begin{thebibliography}{10}

\bibitem{Aho68}
A.~V. Aho.
\newblock Indexed grammars---an extension of context-free grammars.
\newblock {\em J. Assoc. Comput. Mach.}, 15:647--671, 1968.

\bibitem{Asveld1977}
P.~R. Asveld.
\newblock Controlled iteration grammars and full hyper-{AFL}'s.
\newblock {\em Information and Control}, 34(3):248 -- 269, 1977.

\bibitem{ben69}
M.~Benois.
\newblock Parties rationelles du groupe libre.
\newblock {\em C. R. Acad. Sci. Paris, S{\'e}r. A}, 269:1188--1190, 1969.

\bibitem{dgh05IC}
V.~Diekert, C.~Guti{\'e}rrez, and {\Ch}.~Hagenah.
\newblock The existential theory of equations with rational constraints in free
  groups is {PSPACE}-complete.
\newblock {\em Information and Computation}, 202:105--140, 2005.
\newblock Conference version in STACS 2001, LNCS 2010, 170--182, 2004.

\bibitem{DiekertJP2014csr}
V.~Diekert, A.~Je\.z, and W.~Plandowski.
\newblock Finding all solutions of equations in free groups and monoids with
  involution.
\newblock In E.~A. Hirsch, S.~O. Kuznetsov, J.~Pin, and N.~K. Vereshchagin,
  editors, {\em Computer Science Symposium in Russia 2014, {CSR} 2014, Moscow,
  Russia, June 7-11, 2014. Proceedings}, volume 8476 of {\em Lecture Notes in
  Computer Science}, pages 1--15. Springer, 2014.

\bibitem{EhrRoz77}
A.~Ehrenfeucht and G.~Rozenberg.
\newblock On some context free languages that are not deterministic {ET0L}
  languages.
\newblock {\em RAIRO Theor. Inform. Appl.}, 11:273--291, 1977.

\bibitem{eil74}
S.~Eilenberg.
\newblock {\em Automata, Languages, and Machines}, volume~A.
\newblock Academic Press, New York and London, 1974.

\bibitem{FerteMarinSenizerguesTocs14}
J.~Fert{\'e}, N.~Marin, and G.~S{\'e}nizergues.
\newblock Word-mappings of level $2$.
\newblock {\em Theory Comput. Syst.}, 54:111--148, 2014.

\bibitem{GilPC12}
R.~H. Gilman.
\newblock Personal communication, 2012.

\bibitem{JainMS2012Lics}
S.~Jain, A.~Miasnikov, and F.~Stephan.
\newblock The complexity of verbal languages over groups.
\newblock In {\em Proceedings of the 27th Annual {IEEE} Symposium on Logic in
  Computer Science, {LICS} 2012, Dubrovnik, Croatia, June 25-28, 2012}, pages
  405--414. {IEEE} Computer Society, 2012.

\bibitem{jez13stacs}
A.~Je\.z.
\newblock Recompression: a simple and powerful technique for word equations.
\newblock In N.~Portier and T.~Wilke, editors, {\em STACS}, volume~20 of {\em
  LIPIcs}, pages 233--244, Dagstuhl, Germany, 2013. Schloss
  Dagstuhl--Leibniz-Zentrum f{\"u}r Informatik.
\newblock To appear in JACM.

\bibitem{KMIV06}
O.~{Kharlampovich} and A.~{Myasnikov}.
\newblock Elementary theory of free non-abelian groups.
\newblock {\em J. of Algebra}, 302:451--552, 2006.

\bibitem{pap94}
{\Ch}.~H. Papadimitriou.
\newblock {\em Computational Complexity}.
\newblock Addison Wesley, 1994.

\bibitem{Plandowski06stoc}
W.~Plandowski.
\newblock An efficient algorithm for solving word equations.
\newblock In J.~M. Kleinberg, editor, {\em STOC}, pages 467--476. ACM, 2006.

\bibitem{Pl6}
W.~Plandowski.
\newblock Personal communication, 2014.

\bibitem{pr98icalp}
W.~Plandowski and W.~Rytter.
\newblock Application of {L}empel-{Z}iv encodings to the solution of word
  equations.
\newblock In K.~G. Larsen et~al., editors, {\em Proc. 25th International
  Colloquium Automata, Languages and Programming (ICALP'98), Aalborg (Denmark),
  1998}, volume 1443 of {\em Lecture Notes in Computer Science}, pages
  731--742, Heidelberg, 1998. Springer-Verlag.

\bibitem{raz87}
A.~A. Razborov.
\newblock {\em On Systems of Equations in Free Groups}.
\newblock PhD thesis, Steklov Institute of Mathematics, 1987.
\newblock In Russian.

\bibitem{raz93}
A.~A. Razborov.
\newblock On systems of equations in free groups.
\newblock In {\em Combinatorial and Geometric Group Theory}, pages 269--283.
  Cambridge University Press, 1994.

\bibitem{RozS86}
G.~Rozenberg and A.~Salomaa.
\newblock {\em The Book of {L}}.
\newblock Springer, 1986.

\bibitem{rs97vol1}
G.~Rozenberg and A.~Salomaa, editors.
\newblock {\em Handbook of Formal Languages}, volume~1.
\newblock Springer, 1997.

\bibitem{sela13}
Z.~Sela.
\newblock Diophantine geometry over groups {VIII}: {S}tability.
\newblock {\em Annals of Math.}, 177:787--868, 2013.

\end{thebibliography}
\end{document}